\DeclareMathAlphabet{\mathcal}{OMS}{cmsy}{m}{n}
\journalname{Applied Physics A}
\theoremstyle{definition}
\newtheorem{prop}{Proposition}
\renewcommand{\qedsymbol}{$\blacksquare$}
\begin{document}

\title{Calculation of electron transport in branched semiconductor nanostructures using quantum network model}
\author{ D. E. Tsurikov }
\institute{	D. E. Tsurikov \at
				Spin Optics Laboratory, St. Petersburg State University,\\
				Ulyanovskaya 1, Peterhof, St. Petersburg, 198504, Russia\\
				\email{DavydTsurikov@mail.ru}\\
				ORCID ID: 0000-0001-6677-7427
}

\date{\today}

\maketitle

\begin{abstract}
Electron transport in branched semiconductor nanostructures provides many possibilities for creating fundamentally new devices. We solve the problem of its calculation using a quantum network model. The proposed scheme consists of three computational parts: S-matrix of the network junction, S-matrix of the network in terms of its junctions' S-matrices, electric currents through the network based on its S-matrix. To calculate the S-matrix of the network junction, we propose scattering boundary conditions in a clear integro-differential form. As an alternative, we also consider the Dirichlet-to-Neumann and Neumann-to-Dirichlet map methods. To calculate the S-matrix of the network in terms of its junctions' S-matrices, we obtain a network combining formula. We find electrical currents through the network in the framework of the Landauer--B\"uttiker formalism. Everywhere for calculations, we use extended scattering matrices, which allows taking into account correctly the contribution of tunnel effects between junctions. We demonstrate the proposed calculation scheme by modeling nanostructure based on two-dimensional electron gas. For this purpose we offer a model of a network formed by smooth junctions with one, two and three adjacent branches. We calculate the electrical properties of such a network (by the example of GaAs), formed by four junctions, depending on the temperature.
\keywords{branched nanostructure \and quantum network \and extended scattering matrix \and scattering boundary conditions \and network combining formula \and Landauer--B\"uttiker formalism}
\end{abstract}

\section{Introduction}	\label{sec:1}

An electron transport in branched semiconductor nanostructures provides many possibilities for creating fundamentally new devices \cite{Bib001,Bib002,Bib003,Bib004,Bib005,Bib006,Bib007,Bib008,Bib009,Bib010,Bib011,Bib012,Bib013,Bib014}. Modern experiments have shown that it is always a combination of diffusion transport and ballistic one \cite{Bib005,Bib006,Bib007}. Ballistic transport corresponds to the coherent motion of electrons and, therefore, relates to quantum effects in the system. At the same time, it also occurs in structures whose size is an order of magnitude greater than the free length of charge carriers \cite{Bib005,Bib006}, and in materials with their low mobility \cite{Bib007}. Therefore, ballistic transport is relevant for a much wider range of structures than “purely” quantum ones, and its accounting is fundamentally important for their correct modeling.

A mathematical model describing ballistic electron transport in branched semiconductor nanostructures is a quantum network \cite{Bib015,Bib016}. Electrical currents through the network are calculated in the framework of the Landauer--B\"uttiker formalism \cite{Bib017,Bib018} based on its transparency. In turn, the transparency is determined by the scattering matrix of network.

The scattering matrix of the quantum network junction can be found by solving the Schr\"odinger equation in it with scattering boundary conditions (SBC) at the boundaries with branches \cite{Bib019,Bib020,Bib021}. SBC provide matching of wave function in junction with incident and scattered waves in branches. Therefore, this method is universal and is also used for quantum self-consistent calculation of current through nanostructures \cite{Bib021}. In the literature, the form of SBC is usually defined by a concrete problem. Such approach reduces their clearness. Also, their “non-standard” form makes them difficult to use in computing packages.

It is more convenient to apply the methods based on Dirichlet and Neumann boundary conditions: Dirichlet-to-Neumann and Neumann-to-Dirichlet map (DN- and ND-map) methods \cite{Bib015,Bib016,Bib022,Bib023}. To calculate the scattering matrix of semiconductor nanostructures, the formalism of the R-matrix \cite{Bib024,Bib025,Bib026} is also widespread. This method is similar to the ND-map method, since the definition of the R-matrix is identical to the definition of the ND-map, and the expression for the scattering matrix has the same form \cite{Bib026}. The disadvantage of the DN- and ND-map methods is their low clearness, since additional mathematical considerations are required to solve the initial scattering problem.

Any quantum network can be considered as one junction with a complex structure, and one of the mentioned above methods \cite{Bib015,Bib016,Bib017,Bib018,Bib019,Bib020,Bib021,Bib022,Bib023,Bib024,Bib025,Bib026} can be used to calculate its scattering matrix. However, for calculations, it may be more convenient to find a scattering matrix of network using the rule of combining scattering matrices of its junctions \cite{Bib017,Bib027}. This approach becomes particularly effective when the network contains the same junctions. In its use, as a rule, the boundaries of the junctions are selected so that branches connecting junctions are excluded from consideration. This formal technique can simplify calculations, but makes it difficult to study the transport properties of a quantum network depending on the lengths of the branches. The disadvantage of this method is also the absence of an explicit formula for the scattering matrix of the entire network in terms of its junctions' scattering matrices.

In this work, we will develop the methods discussed above and propose on their basis a scheme for calculating electron transport in branched semiconductor nanostructures. The scheme will be formulated in terms of an extended scattering matrix of the quantum network \cite[p.~155]{Bib027}. This approach will allow correctly taking into account tunnel effects in system. For an arbitrary network junction, we will obtain scattering boundary conditions in an integro-differential form. Using them, we will write the procedure for calculating its extended scattering matrix, as well as the calculation scheme using the DN- and ND-map methods. We will obtain a formula for an extended scattering matrix of a quantum network in terms of extended scattering matrices of its junctions, taking into account the lengths of the branches connecting them. Finally, the Landauer--B\"uttiker formalism, along with all other methods, will be written in the framework of a single notation system. We will demonstrate the effectiveness of the proposed calculation scheme by modeling a nanostructure based on two-dimensional electron gas using a model of a quantum network of smooth junctions with one, two and three adjacent branches.

\section{Scattering of charge carrier in quantum network}	\label{sec:2}

\subsection{Agreements and notation}	\label{sec:2_1}

\pagebreak
\subsubsection{Agreements}	\label{sec:2_1_1}

We use the following agreements to simplify the presentation.

I. \label{agr:I} \textit{Computations}. In computations, we use the symbols of implication $\Rightarrow $ and equivalence $\Leftrightarrow $, equality $:=$ and equivalence $:\Leftrightarrow $ by definition. The ellipsis symbol is used to exclude obvious computations in the implication chain; the semicolon symbol is used to write multi-level computations to a string:
\begin{equation}	\label{Eq001}
	A\Rightarrow B;C\Rightarrow...\Rightarrow D
	\quad :\Leftrightarrow \quad
	\left. \begin{aligned}
		&A\Rightarrow B \\
		&C
	\end{aligned} \right\}
	\Rightarrow...\Rightarrow D
\end{equation}
where $A$, $B$, $C$ and $D$ are statements or references to them (definitions, formulas, etc.). In some cases, we use references to statements without naming them

II. \label{agr:II} \textit{Indices and ranges}. The ranges of the superscripts and subscripts values in the enumerations and sums are set at the appropriate levels, for example: $\{d_n^l\}_{n\in \mathbb{B}}^{l\in \mathbb{A}}$, $\sum\nolimits_{n\in \mathbb{B}}^{l\in \mathbb{A}}{d_n^l}$. The implicitly defined range of index values is determined by its location relative to the letter and its semantics. By default, any enumeration in this work is ordered~--- \textit{tuple} \cite[p.~33]{Bib028}.

III. \label{agr:III} \textit{Vectors and matrices}. Enumeration by one index (for example, $d = \{d^l\}^l$) is a column vector, by two indexes at the same level (for example, $D = \{D^{kl}\}^{kl}$) is a matrix: the first index is a row number, the second one is a column number. Objects with tuple indexes outside square brackets are enumerations by all tuple elements. For example, $d^{\mathbb{A}} = \{{d^k}\}^{k\in \mathbb{A}}$ is vector, $D^{\mathbb{A}\mathbb{B}} = \{D^{kl}\}^{k\in \mathbb{A},l\in \mathbb{B}}$ is matrix,
\begin{equation}	\label{Eq002}
	\begin{aligned}
		&\mathbb{A} := \{1,2\},\quad \mathbb{B} := \{3,4,5\}\quad \Rightarrow \\
		&D^{\mathbb{A}\mathbb{B}} = {D^{\{1,2\}\{3,4,5\}}} =
		\left[ \begin{matrix}
			{D^{13}} & {D^{14}} & {D^{15}} \\
			{D^{23}} & {D^{24}} & {D^{25}} \\
		\end{matrix} \right]
	\end{aligned}
\end{equation}
The following rules are also executed: $d^{\varnothing} = \{d^n\}^{n\in \varnothing} = \varnothing$ is empty vector, $D^{\mathbb{A}\varnothing} = \{D^{pq}\}^{p\in \mathbb{A},q\in \varnothing} = \varnothing = \{D^{pq}\}^{p\in \varnothing,q\in \mathbb{B}} = D^{\varnothing \mathbb{B}}$ is empty matrix. Empty rows and columns are excluded from the matrices. $O$ is zero matrix, $I$ is identity matrix.

IV. \label{agr:IV} \textit{Two-level tuples in indexes}. Two-level tuples are tuples of the form $\{_m^k\}_m^k$. For example,
\begin{equation}	\label{Eq003}
	\begin{aligned}
		&\mathbb{A} := \{_1^0,_3^2\},\quad \mathbb{B} := \{_5^4,_7^6,_9^8\}\quad \Rightarrow \\
		&D_{\mathbb{A}\mathbb{B}} = D^{\mathbb{A}\mathbb{B}} = D\{_1^0,_3^2\}\{_5^4,_7^6,_9^8\} =
		\left[ \begin{matrix}
			D_{15}^{04} & D_{17}^{06} & D_{19}^{08} \\
			D_{35}^{24} & D_{37}^{26} & D_{39}^{28} \\
		\end{matrix} \right]
	\end{aligned}
\end{equation}

V. \label{agr:V} \textit{Dummy symbol} $\square $. A dummy symbol $\square $ is used to reduce definitions and statements. In enumerations, it is replaced by the values specified for it. For example,
\begin{equation}	\label{Eq004}
	\begin{aligned}
		&\{f(\square) = \ \square \ |\ \square \ = a,b,c,d,e\}\quad \Leftrightarrow	\\
		&\{f(a) = a,f(b) = b,f(c) = c,f(d) = d,f(e) = e\}
	\end{aligned}
\end{equation}

VI. \label{agr:VI} \textit{Differentiation}. ${\partial}_{\square}$ is the operator of partial differentiation, where $\square $ is the name of the variable (\textit{named differentiation}) or the argument number (\textit{numbered differentiation}) of the corresponding function. Reduction:
\begin{equation}	\label{Eq005}
	\forall f\quad \dot{f} := {{\partial}_1}f
\end{equation}

VII. \label{agr:VII} \textit{Iverson notation}  \cite[p.~24]{Bib029}. The brackets with the statement are 1 if it is true and 0 if it is false.

VIII. \label{agr:VIII} \textit{Line over the symbol}. The semantics of the line over the symbol is determined by the semantics of the symbol (Table~\ref{tab:1}).

\begin{table}[h!]
\caption{\label{tab:1}Semantics of line over symbols}
\begin{tabularx}{\linewidth}{>{\raggedright}l>{\raggedright}X}
\toprule
	semantics of symbol	&	semantics of symbol with line	\tabularnewline
\midrule
	number		&	complex conjugate number	\tabularnewline
	function	&	function with complex conjugate values	\tabularnewline
	operator	&	Hermitian conjugate operator	\tabularnewline
	domain		&	union of domain with its boundary	\tabularnewline
	tuple		&	complement tuple	\tabularnewline
\bottomrule
\end{tabularx}
\end{table}

IX. \label{agr:IX} \textit{Precedence rule}. Indexes at the symbol are taken last from top to bottom and from left to right: $A_{cd}^{ab} := (A^{ab})_{cd} = \{[(A^a)^b]_c\}_d$. The reverse precedence rule sets the index reduction rule. Arguments in brackets at functions are also taken last, when they are partially represented, ellipses are placed: $f(x,...)$.

X. \label{agr:X} \textit{Reduction of arguments and identifiers}. If it is necessary to specify which set the reduced arguments belong to, the words “in” and “on” (usually for domain boundaries) are used. For example,
\begin{equation}	\label{Eq006}
	\begin{aligned}
		&\left\{ \begin{aligned}
			&[-\Delta + \upsilon ]f = \varepsilon f && \operatorname{in}\ \ \Omega \\
			&f = 0 && \operatorname{on}\ \partial \Omega
		\end{aligned} \right.\quad \Leftrightarrow \\
		&\left\{ \begin{aligned}
			&[-\Delta +\upsilon ({\bf r})]f({\bf r}) = \varepsilon f({\bf r}), && {\bf r}\in \Omega \\
			&f({\bf r}) = 0, && {\bf r}\in \partial \Omega
		\end{aligned} \right.
	\end{aligned}
\end{equation}
When reducing the identifier in section, also uses the word “in”. For example, the reduction for symbol $\square $ with the top identifier is:
\begin{equation}	\label{Eq007}
	\{\ {{\square}^{\text{Identifier}}}\ \mapsto \ \square \ |\ \square \ = S,C,...\}\quad \operatorname{in}\ \text{Section}
\end{equation}

XI. \label{agr:XI} \textit{Bra-ket symbolism}. The semantics of objects in the bra-ket symbolism is determined by the semantics of objects in brackets. In particular, the scalar product for the functions of a discrete and continuous argument has the form
\begin{equation}	\label{Eq008}
	\begin{aligned}
	\langle a | b \rangle & = \sum\nolimits_n{[n\in \mathcal{D}(a){\textstyle \bigcap} \mathcal{D}(b)]{{{\bar{a}}}_n}{b_n}} \\
		\langle f | g \rangle & = \int{dx[x\in \mathcal{D}(f){\textstyle \bigcap} \mathcal{D}(g)]\bar{f}(x)g(x)}
	\end{aligned}
\end{equation}
respectively, where $\mathcal{D}(\square)$ is the function $\square $ domain.

XII. \label{agr:XII} \textit{Integration}. When integrating, the interval in the Iverson brackets is the 1D analogue of the oriented curve that defines the integration direction:
\begin{equation}	\label{Eq009}
	\begin{aligned}
		\int{dx[x\in (a,b)]\left\{... \right\}} &= \int_a^b{dx\left\{... \right\}} =	\\
		-\int_b^a{dx\left\{... \right\}} &= -\int{dx[x\in (b,a)]\left\{... \right\}}
	\end{aligned}
\end{equation}

\subsubsection{Notation}	\label{sec:2_1_2}

Physically, a quantum network consists of quantum dots connected to each other by quantum wires. Since the charge carrier can scatter both in the point and in the wire, it is convenient to introduce a functional definition: a \textit{quantum network} is a set of junctions and branches. An \textit{internal junction} is a network element in which the charge carrier scatters, an \textit{external junction} is a source or drain of charge carriers, a \textit{branch} is a network element in which the charge carrier does not scatter, an \textit{internal branch} connects two internal junctions, an \textit{external branch} connects the internal junction to the external one. \textit{Channels} are energy levels that arose due to size-quantization across the branch, with the associated movement of the charge carrier along the branch.

To unambiguously identify the elements of the network, we number its branches. In notation, the superscript is the branch number, the subscript is the channel number, the superscript in square brackets is the \textit{structural identifier} of the junction, the contents of the brackets is the tuple of the numbers of the branches adjacent to the junction (Fig.~\ref{fig:1}). In this work, we also use the notations: $\mathbb{I} := ({{\bigcup}^{\mathbb{A}\in \mathcal{N}}}\mathbb{A})\backslash \mathbb{E}$ is tuple of internal branches' numbers, $\mathbb{E} := {{\ominus}^{\mathbb{A}\in \mathcal{N}}}\mathbb{A}$ is tuple of external branches' numbers ($\mathbb{A}\ominus \mathbb{B} := (\mathbb{A}\backslash \mathbb{B})\bigcup (\mathbb{B}\backslash \mathbb{A})$ is symmetric difference of tuples $\mathbb{A}$ and $\mathbb{B}$), $\mathcal{N}$ is tuple of structural identifiers of internal junctions, containing information about all connections of junctions in the network. The concept of a junction is conditional: depending on its structural identifier $\mathbb{A}$, it can be either any section of the network or the entire network if $\mathbb{A} = \mathbb{E}$.

\begin{figure}[htb]\center
	\includegraphics{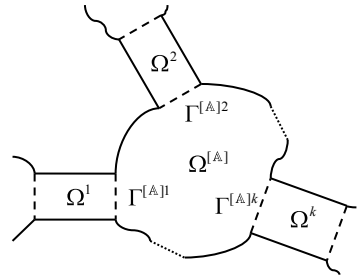}
	\caption{Section of quantum network: ${\Omega}^{[\mathbb{A}]}$ is junction of network, $\{{\Omega}^k\}^{k\in \mathbb{A}}$ are branches adjacent to junction ${\Omega}^{[\mathbb{A}]}$, ${\Gamma}^{[\mathbb{A}]k} := \partial {\Omega}^{[\mathbb{A}]} \bigcap \partial {\Omega}^k$ is boundary of junction ${\Omega}^{[\mathbb{A}]}$ with branch ${\Omega}^k$, ${\Gamma}^{[\mathbb{A}]} := {\bigcup}^{k\in \mathbb{A}}{{\Gamma}^{[\mathbb{A}]k}}$ is boundary of junction ${\Omega}^{[\mathbb{A}]}$ with adjacent branches.}\label{fig:1}
\end{figure}

In this work, we also use a \textit{type identifier}. It is placed in angle brackets and can be a number, letter, or word. A typical identifier is convenient, in particular, when specifying the structure of the network by the type of its internal junctions, for example:
\begin{equation}	\label{Eq010}
	\{{\Omega}^{[\mathbb{K}]} = {\Omega}^{\langle \text{A} \rangle} |\ a^{\langle \text{A} \rangle} = a^{[\mathbb{K}]}, b^{\langle \text{A} \rangle} = b^{[\mathbb{K}]}, ...\}^{\mathbb{K}\in \mathcal{A}}
\end{equation}
where ${\Omega}^{\langle \text{A} \rangle}$ is a junction of type “A” (A-junction), ${a^{\langle \text{A} \rangle}},{b^{\langle \text{A} \rangle}},...$ are parameters of A-junction, $\{a^{[\mathbb{K}]}, b^{[\mathbb{K}]}, ...\}^{\mathbb{K}\in \mathcal{A}}$ are values of parameters of A-junctions in the network, $\mathcal{A}$ is a tuple of structural identifiers of A-junctions.

\subsection{Problem definition}	\label{sec:2_2}

\subsubsection{Charge carrier in network}	\label{sec:2_2_1}

Let us consider a three-dimensional quantum network. The results obtained for it are easy to use for two-dimensional and one-dimensional networks. Without loss of generality, for simplicity of presentation, we consider an isotropic semiconductor. The motion of the charge carrier is described by the Schr\"odinger equation:
\begin{equation}	\label{Eq011}
	\left( -\frac{{{\hbar}^2}}{2m}\Delta +V \right)\varphi = E\varphi 
\end{equation}
where $\hbar $ is the Planck constant, $m$ is the effective mass of the charge carrier, $\Delta = \partial _1^2+\partial _2^2+\partial _3^2$ is the Laplace operator, $V$ is the potential, $E$ is the energy of the charge carrier. We write it in dimensionless form:
\begin{equation}	\label{Eq012}
	( -\Delta +\upsilon )\Psi = \varepsilon \Psi 
\end{equation}
\vspace{-17pt}
\begin{equation}	\label{Eq013}
	\upsilon ({\bf r}) := 2m{\hbar}^{-2}{L^2}V( L {\bf r} ),\quad \varepsilon := 2m{\hbar}^{-2}{L^2}E
\end{equation}
depending on the problem dimension, we have
\begin{equation}	\label{Eq014}
	\begin{aligned}
		\text{1D}:\quad & \Psi ({\bf r}) := L^{1/2}\varphi (L{\bf r}) \\
		\text{2D}:\quad & \Psi ({\bf r}) := L^{2/2}\varphi (L{\bf r}) \\
		\text{3D}:\quad & \Psi ({\bf r}) := L^{3/2}\varphi (L{\bf r})
	\end{aligned}
\end{equation}
where $L$ is the characteristic length (one can choose any for convenience in a particular problem). Then the motion of the charge carrier is described by boundary problems for dimensionless Schr\"odinger equations in the branches
\begin{equation}	\label{Eq015}
	\left\{ \begin{aligned}
		&[-\Delta + {\upsilon}^k]{\Psi}^k = \varepsilon {\Psi}^k && \operatorname{in}\ \ {\Omega}^k \\
		&{\Psi}^k = 0 && \operatorname{on}\ \partial {\Omega}^k \backslash {\Gamma}^k
	\end{aligned} \right.,\quad {\textstyle k\in \mathbb{I} \bigcup \mathbb{E}}
\end{equation}
where ${\Gamma}^k$ is the boundary of the branch ${\Omega}^k$ with adjacent junctions, in internal junctions of network
\begin{equation}	\label{Eq016}
	\left\{ \begin{aligned}
		&[-\Delta + {\upsilon}^{[\mathbb{A}]}]{\Psi}^{[\mathbb{A}]} = \varepsilon {\Psi}^{[\mathbb{A}]} && \operatorname{in}\ \ {\Omega}^{[\mathbb{A}]} \\
		&{\Psi}^{[\mathbb{A}]} = 0 && \operatorname{on}\ \partial {\Omega}^{[\mathbb{A}]}\backslash {\Gamma}^{[\mathbb{A}]}
	\end{aligned} \right.,\quad \mathbb{A}\in \mathcal{N}
\end{equation}
and matching conditions on the boundaries of the internal junctions with the branches:
\begin{equation}	\label{Eq017}
	\left\{ \begin{aligned}
		&{\Psi}^{[\mathbb{A}]} = {{\Psi}^k} \\
		&{{\partial}_n}{\Psi}^{[\mathbb{A}]} = {{\partial}_n}{\Psi}^k
	\end{aligned} \right.\quad \operatorname{on}\ {{\Gamma}^{[\mathbb{A}]k}},\quad k\in \mathbb{A}\in \mathcal{N}
\end{equation}

Equations (\ref{Eq015}), (\ref{Eq016}) and conditions (\ref{Eq017}) are written in the \textit{global coordinate frame} (\textit{GCF})~--- the coordinate frame associated with the network. Its location one can set in a specific problem for convenience reasons.

\subsubsection{Charge carrier in branch}	\label{sec:2_2_2}

To solve the problems (\ref{Eq015}), we introduce the concept of the \textit{local coordinate frame} (\textit{LCF}) at the boundary of the junction with the branch (Fig.~\ref{fig:2}). Coordinate frames are changed using the following operators: $W^{[\mathbb{A}]k}$ translates functions specified in GCF into functions specified in LCF $[XYZ]^{[\mathbb{A}]k}$, $w^{[\mathbb{A}]k}$ expresses global coordinates in terms of local ones. Operator $W^{[\mathbb{A}]k}$ is unitary:
\begin{equation}	\label{Eq018}
	{W^{[\mathbb{A}]k}}{{\bar{W}}^{[\mathbb{A}]k}} = {I^{[\mathbb{A}]k}} = {{\bar{W}}^{[\mathbb{A}]k}}{W^{[\mathbb{A}]k}}
\end{equation}

\begin{figure}[htb]\center
	\includegraphics{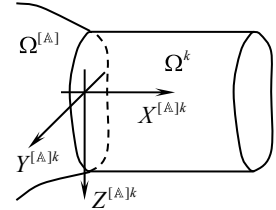}
	\caption{Local coordinate frame $[XYZ]^{[\mathbb{A}]k}$ at boundary ${\Gamma}^{[\mathbb{A}]k}$ of junction ${\Omega}^{[\mathbb{A}]}$ with branch ${\Omega}^k$.}\label{fig:2}
\end{figure}

Since branches are quantum wires, they have a simple geometry:
\begin{equation}	\label{Eq019}
	{\omega}^{[\mathbb{A}]k} := \{{\bf r}\in {\mathbb{R}}^3 | w^{[\mathbb{A}]k} {\bf r} \in {\Omega}^k \} = {\alpha}^{[\mathbb{A}]k} \times {\beta}^{[\mathbb{A}]k}
\end{equation}
where ${\beta}^{[\mathbb{A}]k}$ is the cross section of the branch ${\Omega}^k$ in LCF $[XYZ]^{[\mathbb{A}]k}$. The coordinate origins of all LCFs are at the starts of the branches:
\begin{equation}	\label{Eq020}
	{{\alpha}^{[\mathbb{A}]k}} = (0,{a^k})
\end{equation}
where $a^k$ is the length of the branch ${\Omega}^k$, and boundaries of the junction with branches have the form:
\begin{equation}	\label{Eq021}
	{\gamma}^{[\mathbb{A}]k} := \{ {\bf r} \in {\mathbb{R}}^3 | w^{[\mathbb{A}]k} {\bf r} \in {\Gamma}^k \} = \{0\} \times {\beta}^{[\mathbb{A}]k}
\end{equation}

The motion of the charge carrier in the domain ${\Omega}^k$ in GCF is described by equation (\ref{Eq015}). To separate variables in problem (\ref{Eq015}), we change GCF to LCF:
\begin{equation}	\label{Eq022}
	\begin{aligned}
		{\psi}^{[\mathbb{A}]k}({\bf r}) &:= W^{[\mathbb{A}]k}{{\Psi}^k}({\bf r}) = {\Psi}^k({w^{[\mathbb{A}]k}}{\bf r}) \\
		u^{[\mathbb{A}]k}({\bf r}) &:= W^{[\mathbb{A}]k}{{\upsilon}^k}({\bf r}) \ = {\upsilon}^k({w^{[\mathbb{A}]k}}{\bf r})
	\end{aligned}
\end{equation}
and assume that the potential ${\upsilon}^k$ does not change along the branch. Since the branches have a simple geometry (\ref{Eq019}), we obtain the following analogue of problem (\ref{Eq015}) in LCF:
\begin{equation}	\label{Eq023}
	\left\{ \begin{aligned}
		&[ -\Delta + u^{[\mathbb{A}]k}( y,z) ]{\psi}^{[\mathbb{A}]k}&&\!\!\!\!\!\!(x,y,z) = \varepsilon {\psi}^{[\mathbb{A}]k}(x,y,z),\\
		&&&\!\!\!\!\!\!\{x,y,z\} \in {{\alpha}^{[\mathbb{A}]k}}\times {{\beta}^{[\mathbb{A}]k}} \\
		&{\psi}^{[\mathbb{A}]k}(x,y,z) = 0,  &&\!\!\!\!\!\!\{x,y,z\} \in {{\alpha}^{[\mathbb{A}]k}}\times \partial {{\beta}^{[\mathbb{A}]k}}
	\end{aligned} \right.
\end{equation}

We are looking for a solution of equation (\ref{Eq023}) in the form:
\begin{equation}	\label{Eq024}
	{\psi}^{[\mathbb{A}]k}(x,y,z) =: g^{[\mathbb{A}]k} (x)h^{[\mathbb{A}]k}(y,z)
\end{equation}
Substituting expression (\ref{Eq024}) into problem (\ref{Eq023}), we get
\begin{equation}	\label{Eq025}
	\left\{ \begin{aligned}
		&[-\partial _y^2-\partial _z^2 + u^{[\mathbb{A}]k}]h_m^{[\mathbb{A}]k} = \lambda _m^{[\mathbb{A}]k}h_m^{[\mathbb{A}]k} && \operatorname{in}\ \ {{\beta}^{[\mathbb{A}]k}} \\
		&h_m^{[\mathbb{A}]k} = 0 && \operatorname{on}\ \partial {\beta}^{[\mathbb{A}]k}
	\end{aligned} \right.
\end{equation}
\begin{equation}	\label{Eq026}
	-\partial _x^2g_m^{[\mathbb{A}]k} = (\varepsilon -\lambda _m^{[\mathbb{A}]k})g_m^{[\mathbb{A}]k}\quad \operatorname{in}\ {{\alpha}^{[\mathbb{A}]k}}
\end{equation}
From here follows
\begin{equation}	\label{Eq027}
	\begin{aligned}
		{\psi}^{[\mathbb{A}]k}(x,y,z) = &\sum\nolimits_m{c_m^{[\mathbb{A}]\triangleleft k}\exp \left( -i\kappa _m^{[\mathbb{A}]k}x \right)h_m^{[\mathbb{A}]k}(y,z)} \\
		&+ \sum\nolimits_m{c_m^{[\mathbb{A}]\triangleright k}\exp \left( +i\kappa _m^{[\mathbb{A}]k}x \right)h_m^{[\mathbb{A}]k}\left( y,z \right)}
	\end{aligned}
\end{equation}
\begin{equation}	\label{Eq028}
	\kappa _m^{[\mathbb{A}]k} := \sqrt{\varepsilon -\lambda _m^{[\mathbb{A}]k}}
\end{equation}
where $c^{[\mathbb{A}]\triangleleft k}$ are amplitudes of waves incident on the junction ${\Omega}^{[\mathbb{A}]}$ from branch ${\Omega}^k$ in LCF $[XYZ]^{[\mathbb{A}]k}$, $c^{[\mathbb{A}]\triangleright k}$ are amplitudes of waves scattered by the junction ${\Omega}^{[\mathbb{A}]}$ to branch ${\Omega}^k$ in LCF $[XYZ]^{[\mathbb{A}]k}$ ($c^{[\mathbb{A}]\triangleleft k}$ and $c^{[\mathbb{A}]\triangleright k}$ are \textit{wave amplitudes}), ${\lambda}^{[\mathbb{A}]k}$ are \textit{energies of channels} in branch ${\Omega}^k$, $h^{[\mathbb{A}]k}$ are \textit{transverse modes} in branch ${\Omega}^k$ in LCF $[XYZ]^{[\mathbb{A}]k}$. \textit{Wavenumber} $\kappa _m^{[\mathbb{A}]k}$ determines type of wave depending on type of channel for given energy of charge carrier:
\begin{equation}	\label{Eq029}
	\begin{aligned}
		\lambda _m^{[\mathbb{A}]k}<\varepsilon \ \ -\ \ & \textit{open channel} \quad \ \ \Rightarrow \\
		& \operatorname{Im}(\kappa _m^{[\mathbb{A}]k}) = 0\ \ -\ \ \textit{running wave} \\
		\lambda _m^{[\mathbb{A}]k}\ge \varepsilon \ \ -\ \ & \textit{closed channel} \;\;\; \Rightarrow \\
		& \operatorname{Im}(\kappa _m^{[\mathbb{A}]k})\ne 0\ \ -\ \ \textit{evanescent wave}
	\end{aligned}
\end{equation}

Everywhere in this work, we suppose that functions $h^{[\mathbb{A}]}$ are orthonormal:
\begin{equation}	\label{Eq030}
	I_{mn}^{[\mathbb{A}]kk} = \langle h_m^{[\mathbb{A}]k} | h_n^{[\mathbb{A}]k} \rangle,\quad {I^{[\mathbb{A}]kk}} = \sum\nolimits_n{| h_n^{[\mathbb{A}]k} \rangle \langle h_n^{[\mathbb{A}]k} |}
\end{equation}
The expression (\ref{Eq027}) is also conveniently written in terms of the \textit{wavenumber operator} $K^{[\mathbb{A}]}$ in two equivalent representations:
\begin{equation}	\label{Eq031}
	\begin{aligned}
	{{\psi}^{[\mathbb{A}]k}}\left( x,y,z \right) =& \sum\nolimits_m{\left[ \exp \left( -i{K^{[\mathbb{A}]}}x \right){c^{[\mathbb{A}]\triangleleft}} \right]_m^kh_m^{[\mathbb{A}]k}\left( y,z \right)} \\
		&+ \sum\nolimits_m{\left[ \exp \left( +i{K^{[\mathbb{A}]}}x \right){c^{[\mathbb{A}]\triangleright}} \right]_m^kh_m^{[\mathbb{A}]k}\left( y,z \right)}
	\end{aligned}
\end{equation}
\vspace{-20pt}
\begin{equation}	\label{Eq032}
	K_{mn}^{[\mathbb{A}]kl} := I_{mn}^{[\mathbb{A}]kl}\kappa _m^{[\mathbb{A}]k}
\end{equation}
\vspace{-20pt}
\begin{equation}	\label{Eq033}
	\begin{aligned}
	{{\psi}^{[\mathbb{A}]k}}\left( x,y,z \right) =& \sum\nolimits_m{\left[ \exp \left( -i{K^{[\mathbb{A}]}}x \right){h^{[\mathbb{A}]}}\left( y,z \right) \right]_m^kc_m^{[\mathbb{A}]\triangleleft k}} \\
		&+ \sum\nolimits_m{\left[ \exp \left( +i{K^{[\mathbb{A}]}}x \right){h^{[\mathbb{A}]}}\left( y,z \right) \right]_m^kc_m^{[\mathbb{A}]\triangleright k}}
	\end{aligned}
\end{equation}
\vspace{-15pt}
\begin{equation}	\label{Eq034}
	{K^{[\mathbb{A}]kl}} := {I^{[\mathbb{A}]kl}}\sum\nolimits_n{| h_n^{[\mathbb{A}]k} \rangle \kappa _n^{[\mathbb{A}]k}\langle h_n^{[\mathbb{A}]k} |}
\end{equation}
We group the incident and scattered waves in the expression (\ref{Eq027}):
\begin{equation}	\label{Eq035}
	{{\psi}^{[\mathbb{A}]k}} = {{\psi}^{[\mathbb{A}]\triangleleft k}}+{{\psi}^{[\mathbb{A}]\triangleright k}}
\end{equation}
\vspace{-15pt}
\begin{equation}	\label{Eq036}
	{{\psi}^{[\mathbb{A}]\triangleleft k}} := \sum\nolimits_m{\psi _m^{[\mathbb{A}]\triangleleft k}},\quad {{\psi}^{[\mathbb{A}]\triangleright k}} := \sum\nolimits_m{\psi _m^{[\mathbb{A}]\triangleright k}}
\end{equation}
\vspace{-15pt}
\begin{equation}	\label{Eq037}
	\begin{aligned}
		\psi _m^{[\mathbb{A}]\triangleleft k}\left( x,y,z \right) :&\!\!= \left[ \exp \left( -i{K^{[\mathbb{A}]}}x \right){c^{[\mathbb{A}]\triangleleft}} \right]_m^kh_m^{[\mathbb{A}]k}\left( y,z \right) \\
		&\!\!= \left[ \exp \left( -i{K^{[\mathbb{A}]}}x \right){h^{[\mathbb{A}]}}\left( y,z \right) \right]_m^kc_m^{[\mathbb{A}]\triangleleft k} \\
		\psi _m^{[\mathbb{A}]\triangleright k}\left( x,y,z \right) :&\!\!= \left[ \exp \left( +i{K^{[\mathbb{A}]}}x \right){c^{[\mathbb{A}]\triangleright}} \right]_m^kh_m^{[\mathbb{A}]k}\left( y,z \right) \\
		&\!\!= \left[ \exp \left( +i{K^{[\mathbb{A}]}}x \right){h^{[\mathbb{A}]}}\left( y,z \right) \right]_m^kc_m^{[\mathbb{A}]\triangleright k}
	\end{aligned}
\end{equation}

\subsubsection{Scattering problem}	\label{sec:2_2_3}

Amplitudes of the incident and scattered waves in all canals are related with each other by the \textit{extended scattering matrix} $S^{[\mathbb{A}]}$ of junction ${\Omega}^{[\mathbb{A}]}$ \cite[p.~155]{Bib027}:
\begin{equation}	\label{Eq038}
	c_m^{[\mathbb{A}]\triangleright k} = \sum\nolimits_n^l{S_{mn}^{[\mathbb{A}]kl}c_n^{[\mathbb{A}]\triangleleft l}}
\end{equation}
\begin{equation}	\label{Eq039}
	c_m^{[\mathbb{A}]\triangleleft k} = \sum\nolimits_n^l{[(S^{[\mathbb{A}]})^{-1}]_{mn}^{kl}c_n^{[\mathbb{A}]\triangleright l}}
\end{equation}
To interpret expressions, mnemonic rules are convenient:
\begin{equation}	\label{Eq040}
	c_{\text{“from”}}^{[\mathbb{A}]\triangleleft \text{“from”}},\quad c_{\text{“to”}}^{[\mathbb{A}]\triangleright \text{“to”}},\quad S_{\text{“to”“from”}}^{[\mathbb{A}]\text{“to”“from”}}
\end{equation}
Taking into account our notation, everywhere in this work, we call the extended scattering matrix briefly: \textit{S-matrix}.

We can formulate two problems: the \textit{incoming scattering problem} is the search for scattered waves from known incident waves, the \textit{outgoing scattering problem} is the search for incident waves from known scattered waves. By solving the incoming and outgoing scattering problems, one can find $S^{[\mathbb{A}]}$ and $(S^{[\mathbb{A}]})^{-1}$ matrices, respectively. As a rule, only the first problem is considered in the literature. It is also the main one for this work.

\subsection{S-matrix of quantum network junction}	\label{sec:2_3}

\begin{equation}	\label{Eq041}
	\{\ {{\square}^{[\mathbb{A}]}}\ \mapsto \ \square \ |\ \square \ = K,S,...\}\quad \operatorname{in}\ \text{\ref{sec:2_3}}
\end{equation}

\subsubsection{Scattering boundary conditions}	\label{sec:2_3_1}

Let us write a direct method for solving the scattering problems (section \ref{sec:2_2_3}). To do this, one should obtain scattering boundary conditions \cite{Bib019,Bib020,Bib021} for an arbitrary junction of a quantum network (Fig.~\ref{fig:1}).

\begin{prop}	\label{prop:1}
 The incoming scattering problem is a problem (\ref{Eq016}) supplemented by \textit{incoming scattering boundary conditions} (\textit{incoming SBC}):
\begin{equation}	\label{Eq042}
	[K+i{{\partial}_1}]W\Psi = 2K{{\psi}^{\triangleleft}}\quad \operatorname{on}\ \gamma 
\end{equation}
The outgoing scattering problem is a problem (\ref{Eq016}) supplemented by \textit{outgoing scattering boundary conditions} (\textit{outgoing SBC}):
\begin{equation}	\label{Eq043}
	[K-i{{\partial}_1}]W\Psi = 2K{{\psi}^{\triangleright}}\quad \operatorname{on}\ \gamma 
\end{equation}
\end{prop}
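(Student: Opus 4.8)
\emph{Proof plan.} The idea is to rewrite the matching conditions (\ref{Eq017}) on each boundary $\gamma^{[\mathbb{A}]k}$, after passing to the LCF, as a single boundary relation coupling only the junction field $W\Psi$, its first--coordinate derivative $\partial_1 W\Psi$, and the prescribed wave. First I would move to the LCF by (\ref{Eq022}): on $\gamma^{[\mathbb{A}]k}$ the branch field is $\psi^{[\mathbb{A}]k}=\psi^{[\mathbb{A}]\triangleleft k}+\psi^{[\mathbb{A}]\triangleright k}$, and since the branch starts at $x=0$ with cross--section orthogonal to the first LCF axis (see (\ref{Eq019})--(\ref{Eq021})), the normal derivative $\partial_n$ at that boundary coincides, for the fixed orientation, with $\partial_1$; hence (\ref{Eq017}) becomes $W\Psi=\psi^{[\mathbb{A}]k}$ and $\partial_1 W\Psi=\partial_1\psi^{[\mathbb{A}]k}$ on $\gamma^{[\mathbb{A}]k}$ (here $W\Psi$ is $W^{[\mathbb{A}]k}\Psi^{[\mathbb{A}]}$, identified with $W^{[\mathbb{A}]k}\Psi^k$ on the common boundary).

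Next I would evaluate the two summands at $x=0$ from the operator form (\ref{Eq031}). Because $\partial_x\exp(\mp iKx)=\mp iK\exp(\mp iKx)$, $\exp(0)=I$, and the wavenumber operator $K$ (\ref{Eq034}) is diagonal in the transverse modes and so commutes through the mode sums, on $\gamma$ we get $\psi^{[\mathbb{A}]k}=\psi^{\triangleleft}+\psi^{\triangleright}$ and $\partial_1\psi^{[\mathbb{A}]k}=-iK\psi^{\triangleleft}+iK\psi^{\triangleright}$, where $\psi^{\triangleleft},\psi^{\triangleright}$ now denote the $x=0$ traces; completeness (\ref{Eq030}) guarantees these mode expansions capture the full Cauchy data on the cross--section. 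For the incoming problem the incident trace $\psi^{\triangleleft}$ is data, so I would eliminate $\psi^{\triangleright}$: substituting $\psi^{\triangleright}=W\Psi-\psi^{\triangleleft}$ into the derivative relation gives $\partial_1 W\Psi=iK\,W\Psi-2iK\psi^{\triangleleft}$, i.e. $[K+i\partial_1]W\Psi=2K\psi^{\triangleleft}$ on $\gamma$, which is (\ref{Eq042}); symmetrically, eliminating $\psi^{\triangleleft}$ in favour of the given $\psi^{\triangleright}$ yields $[K-i\partial_1]W\Psi=2K\psi^{\triangleright}$, which is (\ref{Eq043}). Adjoining these to (\ref{Eq016}) in ${\Omega}^{[\mathbb{A}]}$ proves one implication.

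For the converse I would run the elimination backwards: given $\Psi^{[\mathbb{A}]}$ solving (\ref{Eq016}) with (\ref{Eq042}), define the scattered amplitudes on each $\gamma^{[\mathbb{A}]k}$ by projecting $W\Psi-\psi^{\triangleleft}$ onto $\{h_m^{[\mathbb{A}]k}\}$, propagate $\psi^{[\mathbb{A}]\triangleright k}$ into the branch through (\ref{Eq037}), and set $\Psi^k$ to be the GCF pull--back of $\psi^{\triangleleft}+\psi^{\triangleright}$; then (\ref{Eq015}) holds modewise by (\ref{Eq025})--(\ref{Eq026}), the Dirichlet half of (\ref{Eq017}) holds by construction, and the Neumann half follows because (\ref{Eq042}) is precisely the linear combination of the two matching identities that remains once the Dirichlet one is imposed. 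The main obstacle is bookkeeping rather than analysis: one must fix the orientation so that $\partial_n$ and $\partial_1$ carry the same sign for junction and branch (otherwise the roles of $[K+i\partial_1]$ and $[K-i\partial_1]$ swap), and one must handle channels exactly at threshold, $\kappa_m^{[\mathbb{A}]k}=0$, where $K$ is not invertible and that mode drops out of (\ref{Eq042})--(\ref{Eq043}); for such modes the SBC needs to be complemented by the plain matching of that single mode, or one restricts to energies $\varepsilon$ at which $K$ is invertible on all retained channels, and I would flag this explicitly.
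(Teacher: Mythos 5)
Your proposal is correct and follows essentially the same route as the paper: rewrite the matching conditions (\ref{Eq017}) in the LCF, expand the branch trace as $\psi^{\triangleleft}+\psi^{\triangleright}$ with $\partial_1\psi=-iK\psi^{\triangleleft}+iK\psi^{\triangleright}$, and eliminate the scattered (resp.\ incident) part to obtain (\ref{Eq042}) (resp.\ (\ref{Eq043})). Your added remarks on the converse reconstruction and on threshold channels where $K$ fails to be invertible go beyond what the paper's proof records, but the core argument coincides.
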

\begin{proof}[\textsc{Proof~\ref{prop:1}}]
 We rewrite the matching conditions (\ref{Eq017}) in LCF taking into account the form (\ref{Eq021}):
\begin{equation}	\label{Eq044}
	\left\{ \begin{aligned}
		&W\Psi = \psi \\
		&{{\partial}_1}W\Psi = {{\partial}_1}\psi
	\end{aligned} \right.\quad \operatorname{on}\ \gamma 
\end{equation}
Here, the branch index is reduced, and all expressions with vector and matrix objects ($W$, $\psi $, $\omega $, $\gamma $, $K$ etc.) are vector (section \ref{sec:2_1_1}). Using the expression (\ref{Eq035}), we get a derivative of the function in the branches
\begin{equation}	\label{Eq045}
	{{\partial}_1}\psi = -iK{{\psi}^{\triangleleft}}+iK{{\psi}^{\triangleright}}\quad \operatorname{in}\ \omega 
\end{equation}
From the formulas (\ref{Eq031}), (\ref{Eq044}) and (\ref{Eq045}) follows
\begin{equation}	\label{Eq046}
	\left\{ \begin{aligned}
		&W\Psi = {\psi}^{\triangleleft} + {\psi}^{\triangleright} \\
		&{{\partial}_1}W\Psi = -iK{\psi}^{\triangleleft} +iK{\psi}^{\triangleright}
	\end{aligned} \right.\quad \operatorname{on}\ \gamma 
\end{equation}

Excluding from expressions (\ref{Eq046}) functions ${\psi}^{\triangleright}$, we have incoming SBC (\ref{Eq042}). Excluding from expressions (\ref{Eq046}) functions ${\psi}^{\triangleleft}$, we have outgoing SBC (\ref{Eq043}). Solving the problem (\ref{Eq016}) with incoming SBC (\ref{Eq042}), it is possible to find scattered waves ${\psi}^{\triangleright}$ according to the conditions (\ref{Eq044}) by known incident ones ${\psi}^{\triangleleft}$:
\begin{equation}	\label{Eq047}
	{{\psi}^{\triangleright}} = W\Psi -{{\psi}^{\triangleleft}}\quad \operatorname{on}\ \gamma 
\end{equation}
and hence the matrix $S$ (\ref{Eq038}). Solving the problem (\ref{Eq016}) with outgoing SBC (\ref{Eq043}), it is possible to find incident waves ${\psi}^{\triangleleft}$ according to the conditions (\ref{Eq044}) by known scattered ones ${\psi}^{\triangleright}$:
\begin{equation}	\label{Eq048}
	{{\psi}^{\triangleleft}} = W\Psi -{{\psi}^{\triangleright}}\quad \operatorname{on}\ \gamma 
\end{equation}
and hence the matrix $S^{-1}$ (\ref{Eq039}).	
\end{proof}

SBC is also convenient to write in GCF. For incoming SBC (\ref{Eq042}) and outgoing SBC (\ref{Eq043}) we have:
\begin{equation}	\label{Eq049}
	[K+i{{\partial}_n}]\Psi = 2K{{\Psi}^{\triangleleft}}\quad \operatorname{on}\ \Gamma 
\end{equation}
\begin{equation}	\label{Eq050}
	[K-i{{\partial}_n}]\Psi = 2K{{\Psi}^{\triangleright}}\quad \operatorname{on}\ \Gamma 
\end{equation}
respectively, where the operator $K$ acts on functions in the GCF as follows:
\begin{equation}	\label{Eq051}
	{\{{{[K\square ]}^k} := {{\bar{W}}^k}\sum\nolimits_{}^l{{K^{kl}}{W^l}\square}|\ \square \ = \Psi,{{\Psi}^{\triangleleft}},{{\Psi}^{\triangleright}}\}}^k
\end{equation}
Here ${{\Psi}^{\triangleleft}} := \bar{W}{{\psi}^{\triangleleft}}$, ${{\Psi}^{\triangleright}} := \bar{W}{{\psi}^{\triangleright}}$, ${\partial}_n$ is a derivative along the outer normal to the boundary. The expressions (\ref{Eq047}) and (\ref{Eq048}) take the form:
\begin{equation}	\label{Eq052}
	{{\Psi}^{\triangleright}} = \Psi -{{\Psi}^{\triangleleft}}\quad \operatorname{on}\ \Gamma 
\end{equation}
\begin{equation}	\label{Eq053}
	{{\Psi}^{\triangleleft}} = \Psi -{{\Psi}^{\triangleright}}\quad \operatorname{on}\ \Gamma 
\end{equation}

The boundary conditions (\ref{Eq042}), (\ref{Eq043}), (\ref{Eq049}) and (\ref{Eq050}) are integro-differential because they contain both the differentiation operators ${\partial}_1$, ${\partial}_n$ and the integral operator $K$ (\ref{Eq034}). At the same time, in terms of SBC, the physical meaning of the operator $K$ is \textit{isolation of junction} (Appendix \ref{sec:A}). Therefore, these boundary conditions are clear: they provide an intuitive and concise formulation of the scattering problem at the junction of the quantum network.

\subsubsection{Calculation of S-matrix by SBC method}	\label{sec:2_3_2}

In all scattering problems, the S-matrix of a junction can be found element by element using SBC. At the same time, it is convenient to write SBC in GCF for implementation in computing packages.

\begin{prop}	\label{prop:2}
 Any element of the S-matrix of a junction $\Omega $ is written as
\begin{equation}	\label{Eq054}
	S_{mq}^{kp} = \langle h_m^k | {W^k}\Psi _q^p(0,...) \rangle -I_{mq}^{kp}
\end{equation}
where the function $\Psi _q^p$ is the solution to the scattering problem
\begin{equation}	\label{Eq055}
	\left\{ \begin{aligned}
		&[-\Delta +\upsilon ]\Psi _q^p = \varepsilon \Psi _q^p && \operatorname{in}\ \ \Omega \\
		&\Psi _q^p = 0 && \operatorname{on}\ \partial \Omega \backslash \Gamma \\
		&[K+i{{\partial}_n}]\Psi _q^p = 2K\bar{W}h_q^p && \operatorname{on}\ \Gamma
	\end{aligned} \right.
\end{equation}
\end{prop}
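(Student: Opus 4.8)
The plan is to obtain Proposition~\ref{prop:2} as a special case of Proposition~\ref{prop:1}: I would feed the incoming scattering problem a single-channel incident wave and then read off one column of the $S$-matrix. Concretely, fix the branch $p$ and the channel $q$ and choose the incident amplitudes to be the corresponding column of the identity, $c_n^{\triangleleft l} = I_{nq}^{lp}$ --- that is, a unit-amplitude wave in channel $q$ of branch $p$ and nothing in any other channel. Then the defining relation (\ref{Eq038}) of the $S$-matrix immediately gives $c_m^{\triangleright k} = S_{mq}^{kp}$, so it suffices to express these scattered amplitudes through a scattering state. Evaluating the incident-wave expansion (\ref{Eq037}) on $\gamma$, i.e.\ at $x=0$ where $\exp(\mp iKx)=I$, one finds that $\psi^{\triangleleft}$ restricts to $h_q^p$ on the boundary of branch $p$ and to zero on the others, hence in the GCF $\Psi^{\triangleleft} = \bar W h_q^p$ on $\Gamma$ and the source term of the incoming SBC (\ref{Eq049}) becomes $2K\bar W h_q^p$. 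Therefore the scattering state for this excitation solves exactly the boundary-value problem (\ref{Eq055}); I call its solution $\Psi_q^p$, with the statement that this is the correct problem provided by Proposition~\ref{prop:1}.

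It then remains to extract the coefficients. I would apply the recovery formula (\ref{Eq047}), $\psi^{\triangleright} = W\Psi_q^p - \psi^{\triangleleft}$ on $\gamma$, evaluate it at $x=0$, and project both sides onto the transverse mode $h_m^k$. By the orthonormality (\ref{Eq030}) (together with $\exp(\pm iKx)=I$ at $x=0$), the projection of any such plane-wave tuple at $x=0$ reduces to the corresponding amplitude, so $\langle h_m^k | \psi^{\triangleright k}(0,...)\rangle = c_m^{\triangleright k}$ and $\langle h_m^k | \psi^{\triangleleft k}(0,...)\rangle = c_m^{\triangleleft k} = I_{mq}^{kp}$. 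Since $c_m^{\triangleright k} = S_{mq}^{kp}$ by the choice of incident wave, this yields $S_{mq}^{kp} = \langle h_m^k | W^k\Psi_q^p(0,...)\rangle - I_{mq}^{kp}$, which is precisely (\ref{Eq054}).

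The computation is almost entirely substitution into Proposition~\ref{prop:1} and the relations of Section~\ref{sec:2_2}; the only place that needs care is the index bookkeeping of Section~\ref{sec:2_1_1}. One must check that the single-column choice $c_n^{\triangleleft l} = I_{nq}^{lp}$ produces precisely the source $2K\bar W h_q^p$ in (\ref{Eq055}) --- in particular that evaluating the wavenumber-operator exponentials at $x=0$ gives the identity and that $\psi^{\triangleleft}$ vanishes on every branch except $p$ --- and that orthonormality across all branches lets each projection against $h_m^k$ pick out a single amplitude, so that the subtracted term is exactly the combined-index identity $I_{mq}^{kp}$ rather than a within-branch Kronecker delta. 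Once that is settled, no further calculation is required.
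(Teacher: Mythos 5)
Your proposal is correct and follows essentially the same route as the paper's own proof: fix the incident amplitudes to the identity column $c_n^{\triangleleft l}=I_{nq}^{lp}$, observe via (\ref{Eq036}), (\ref{Eq037}) and (\ref{Eq049}) that this turns the incoming scattering problem into (\ref{Eq055}), then recover the scattered amplitudes from (\ref{Eq047}) and project onto $h_m^k$ using (\ref{Eq030}) and (\ref{Eq038}) to obtain (\ref{Eq054}). Your write-up merely makes explicit the $x=0$ evaluation and the orthonormality bookkeeping that the paper leaves implicit.
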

\begin{proof}[\textsc{Proof~\ref{prop:2}}]
 The S-matrix of the junction is independent of the of incident waves. We set their amplitudes as $\{c_m^{\triangleleft k} = I_{mq}^{kp}\}_m^k$ by fixing the number of the branch-channel $_q^p$. Of (\ref{Eq016}), (\ref{Eq049}), (\ref{Eq036}), (\ref{Eq037}), we have (\ref{Eq055}). Based on function $\Psi _q^p$ from (\ref{Eq047}), (\ref{Eq036})--(\ref{Eq038}) we obtain (\ref{Eq054}).	
\end{proof}

Obviously, to calculate one column of the S-matrix $_q^p$, it is enough to solve only one boundary problem of the form (\ref{Eq055}). This is an advantage of SBC in the numerical study of individual S-matrix elements.

In some scattering problems, using SBC one can write the S-matrix of a junction explicitly.

\begin{prop}	\label{prop:3}
 The S-matrix of the junction $\Omega $ is written as
\begin{equation}	\label{Eq056}
	S = {G^{\Diamond}}{{\left[ iK{G^{\Diamond}}-{{{\dot{G}}}^{\Diamond}} \right]}^{-1}}\left( 0 \right)i2K-I
\end{equation}
in the scattering problem
\begin{equation}	\label{Eq057}
	\left\{ \begin{aligned}
		&[-\Delta +\upsilon ]\Psi = \varepsilon \Psi && \operatorname{in}\ \ \Omega \\
		&\Psi = 0 && \operatorname{on}\ \partial \Omega \backslash \Gamma \\
		&[K+i{{\partial}_1}]W\Psi = 2K{{\psi}^{\triangleleft}} && \operatorname{on}\ \gamma
	\end{aligned} \right.
\end{equation}
where one can define an operator $G^{\Diamond}$ in one of two equivalent representations (Appendix \ref{sec:B}):
\begin{equation}	\label{Eq058}
	{W^k}\Psi (x,y,z) = :\sum\nolimits_m[ G^{\Diamond}(x)c^{\Diamond}]_m^kh_m^k(y,z)
\end{equation}
\begin{equation}	\label{Eq059}
	{W^k}\Psi (x,y,z) = :\sum\nolimits_m{c_m^{\Diamond k}[G^{\Diamond}(x)h(y,z)]_m^k}
\end{equation}
\end{prop}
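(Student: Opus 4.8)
The plan is to reuse the reduction behind Proposition~\ref{prop:1}, but to feed into it the parametrized form of the interior solution supplied by Appendix~\ref{sec:B}, so that the linear map $c^{\triangleleft}\mapsto c^{\triangleright}$ acquires a closed form. First I would fix an arbitrary incident amplitude $c^{\triangleleft}$, let $\Psi$ be the corresponding solution of (\ref{Eq057}), and invoke Appendix~\ref{sec:B} to write it near $\Gamma$ in the form (\ref{Eq058}) with the operator $G^{\Diamond}$ \emph{independent} of $c^{\triangleleft}$ and a vector $c^{\Diamond}$ carrying the dependence on it; representation (\ref{Eq059}) serves the same purpose and the argument below goes through with it verbatim, so I would work with (\ref{Eq058}). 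Projecting (\ref{Eq058}) onto the orthonormal transverse modes (\ref{Eq030}) shows that, in the basis $\{h_m^k\}$, the coefficient vector of $W\Psi$ is $G^{\Diamond}(x)c^{\Diamond}$ and, differentiating along the branch axis, that of $\partial_1 W\Psi$ is $\dot G^{\Diamond}(x)c^{\Diamond}$ (with $\dot\square=\partial_1\square$ by agreement~VI).

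Next I would restrict to $\gamma$, i.e.\ $x=0$ by (\ref{Eq020})--(\ref{Eq021}), and substitute these coefficient vectors into the incoming SBC, the third line of (\ref{Eq057}). Since $\psi^{\triangleleft}$ has coefficient vector $c^{\triangleleft}$ at $x=0$ (from (\ref{Eq037}), as $\exp(-iK\cdot 0)=I$), the SBC collapses to the algebraic identity $[KG^{\Diamond}(0)+i\dot G^{\Diamond}(0)]c^{\Diamond}=2Kc^{\triangleleft}$. Multiplying by $i$ turns the bracket into $iKG^{\Diamond}(0)-\dot G^{\Diamond}(0)$, and inverting it gives $c^{\Diamond}=[iKG^{\Diamond}-\dot G^{\Diamond}]^{-1}(0)\,i2K\,c^{\triangleleft}$. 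Then (\ref{Eq047}) read off in coefficients at $x=0$, namely $c^{\triangleright}=G^{\Diamond}(0)c^{\Diamond}-c^{\triangleleft}$, yields $c^{\triangleright}=\big(G^{\Diamond}[iKG^{\Diamond}-\dot G^{\Diamond}]^{-1}(0)\,i2K-I\big)c^{\triangleleft}$. As $c^{\triangleleft}$ is arbitrary and $c^{\triangleright}=Sc^{\triangleleft}$ by (\ref{Eq038}), this is exactly (\ref{Eq056}).

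Modulo Appendix~\ref{sec:B} the computation is short, so the substance — and the main obstacle — lies there together with one tacit hypothesis. On the Appendix~\ref{sec:B} side one must show that the family of interior solutions over all $c^{\triangleleft}$ admits a \emph{common} operator $G^{\Diamond}$ with only the finite-dimensional datum $c^{\Diamond}$ varying; otherwise (\ref{Eq056}) is not a genuine formula. The tacit hypothesis is the invertibility of $iKG^{\Diamond}(0)-\dot G^{\Diamond}(0)$ (equivalently $KG^{\Diamond}(0)+i\dot G^{\Diamond}(0)$), which is precisely the well-posedness of (\ref{Eq057}) for every incident amplitude; I would state it explicitly as the condition under which (\ref{Eq056}) holds and remark that the explicit construction of $G^{\Diamond}$ in Appendix~\ref{sec:B} makes it verifiable in concrete models. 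A minor check I would make in passing is that no change of basis sneaks into $c^{\triangleright}=G^{\Diamond}(0)c^{\Diamond}-c^{\triangleleft}$, i.e.\ that the modes $h_m^k$ in (\ref{Eq058}) are the same branch modes appearing in $\psi^{\triangleleft},\psi^{\triangleright}$.
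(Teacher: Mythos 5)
Your proposal is correct and follows essentially the same route as the paper: substitute the representation (\ref{Eq058}) and the form of $\psi^{\triangleleft}$ from (\ref{Eq036})--(\ref{Eq037}) into the incoming SBC at $x=0$ to obtain $c^{\Diamond}=[iKG^{\Diamond}-\dot G^{\Diamond}]^{-1}(0)\,i2K\,c^{\triangleleft}$ (the paper's (\ref{Eq060})), then read off $c^{\triangleright}$ from (\ref{Eq047}) and identify $S$ via (\ref{Eq038}). Your added remarks on the invertibility of $iKG^{\Diamond}(0)-\dot G^{\Diamond}(0)$ and on $G^{\Diamond}$ being independent of $c^{\triangleleft}$ are sensible caveats the paper leaves implicit, but they do not alter the argument.
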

\begin{proof}[\textsc{Proof~\ref{prop:3}}]
 Based on (\ref{Eq058}), (\ref{Eq036}) and (\ref{Eq037}) for the problem (\ref{Eq057}), we obtain
\begin{equation}	\label{Eq060}
	{c^{\Diamond}} = {{\left[ iK{G^{\Diamond}}-{{{\dot{G}}}^{\Diamond}} \right]}^{-1}}\left( 0 \right)i2K{c^{\triangleleft}}
\end{equation}
From (\ref{Eq047}), (\ref{Eq058}), (\ref{Eq060}), (\ref{Eq036}), (\ref{Eq037}) and (\ref{Eq038}) we have (\ref{Eq056}).	
\end{proof}

\subsubsection{Calculation of S-matrix by DN- and ND-map methods}	\label{sec:2_3_3}

The SBC method is the most clear when calculating the S-matrix of a quantum network junction, because provides the wave function in a junction during scattering. However, SBC are integro-differential boundary conditions that are not supported by all compute packages. In the framework of the DN- and ND-map \cite{Bib015,Bib016,Bib022,Bib023} method, it is enough to solve the Dirichlet and Neumann problems, respectively. It is acceptable in cases where the wave function in the junction during scattering is not essential. We specify the DN- and ND-map methods for the notation system used.

\begin{prop}	\label{prop:4}
 The S-matrix of the junction $\Omega $ is written using the \textit{DN-map} operator $D$:
\begin{equation}	\label{Eq061}
	S = [iK-D]^{-1}[iK+D]
\end{equation}
\begin{equation}	\label{Eq062}
	D\chi := {{\partial}_1}W\Psi \quad \operatorname{on}\ \gamma,\quad
	\left\{ \begin{aligned}
		&[-\Delta +\upsilon ]\Psi = \varepsilon \Psi && \operatorname{in}\ \ \Omega \\
		&\Psi = 0 && \operatorname{on}\ \partial \Omega \backslash \Gamma \\
		&W\Psi = \chi && \operatorname{on}\ \gamma
	\end{aligned} \right.
\end{equation}
\end{prop}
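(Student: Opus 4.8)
The plan is to reduce everything to identities on the boundary $\gamma$, read in the basis of transverse modes $\{h_m^k\}$, in which $K$, $D$ and $S$ are all operators on one and the same space, so that ordinary operator algebra is legitimate. First I would observe that the solution $\Psi$ of the scattering problem (\ref{Eq057}) is, in particular, a solution of the Dirichlet problem in (\ref{Eq062}) with boundary data $\chi := W\Psi$ on $\gamma$: it satisfies the Schr\"odinger equation in $\Omega$, vanishes on $\partial\Omega\backslash\Gamma$, and by construction equals $\chi$ on $\gamma$. Hence, by the very definition of the DN-map operator $D$ in (\ref{Eq062}), on $\gamma$ we have $\partial_1 W\Psi = D(W\Psi)$. (Here one tacitly assumes the Dirichlet problem is well posed, i.e. $\varepsilon$ is not a Dirichlet eigenvalue, so that $D$ is defined.)

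Next I would feed in the branch expansion. From the proof of Proposition~\ref{prop:1}, on $\gamma$ we have the pair of identities (\ref{Eq046}): $W\Psi = {\psi}^{\triangleleft}+{\psi}^{\triangleright}$ and ${\partial}_1 W\Psi = -iK{\psi}^{\triangleleft}+iK{\psi}^{\triangleright}$. Substituting both into $\partial_1 W\Psi = D(W\Psi)$ gives, on $\gamma$,
\begin{equation}
	iK\left( {\psi}^{\triangleright}-{\psi}^{\triangleleft} \right) = D\left( {\psi}^{\triangleleft}+{\psi}^{\triangleright} \right).
\end{equation}
At $x=0$ the expansion functions in (\ref{Eq037}) reduce to $h_m^k$, so the restrictions ${\psi}^{\triangleleft}|_{\gamma}$ and ${\psi}^{\triangleright}|_{\gamma}$ are precisely the functions whose mode coefficients are the amplitude vectors $c^{\triangleleft}$ and $c^{\triangleright}$; thus the displayed relation is the operator identity $iK(c^{\triangleright}-c^{\triangleleft}) = D(c^{\triangleleft}+c^{\triangleright})$ read in the mode basis.

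Then I would invoke the definition of the S-matrix: by (\ref{Eq038}), $c^{\triangleright} = S c^{\triangleleft}$, and since the incident amplitudes $c^{\triangleleft}$ may be prescribed arbitrarily, the identity above becomes the operator equation $iK(S-I) = D(I+S)$, i.e. $(iK-D)S = iK+D$. Left-multiplying by $(iK-D)^{-1}$ yields $S = [iK-D]^{-1}[iK+D]$, which is (\ref{Eq061}); the invertibility of $iK-D$ used here is exactly the statement that the incoming scattering problem (\ref{Eq057}) has a unique solution, which is the standing assumption.

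The only real points requiring care are bookkeeping ones: checking that $\partial_1$ in the LCF is indeed the outward-normal derivative to $\Omega$ entering the DN-map (it is, by the orientation fixed in (\ref{Eq020})), and making sure that all objects are genuinely composed as operators on the single space of functions on $\gamma$ — equivalently, on the space of mode-coefficient vectors — so that the passage from the boundary identities to the operator equation $(iK-D)S = iK+D$ is valid. A purely algebraic variant avoids the branch expansion: substituting $\partial_1 W\Psi = D(W\Psi)$ directly into the incoming SBC (\ref{Eq042}) gives $(K+iD)W\Psi = 2K{\psi}^{\triangleleft}$ on $\gamma$, hence $(K+iD)(I+S) = 2K$ and $S = (K+iD)^{-1}(K-iD)$, which coincides with (\ref{Eq061}) after factoring $K\pm iD = -i(iK\mp D)$. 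I expect the write-up to be short; the main ``obstacle'' is simply stating the well-posedness hypotheses (existence of $D$, invertibility of $iK-D$) cleanly.
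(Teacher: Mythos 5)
Your proposal is correct and follows essentially the same route as the paper's own proof in Appendix~\ref{sec:C}: identify $\partial_1 W\Psi = D(W\Psi)$ on $\gamma$ for the scattering solution, insert the decomposition (\ref{Eq046}) read in the mode basis with $c^{\triangleright}=Sc^{\triangleleft}$, and solve $(iK-D)S=iK+D$ for arbitrary $c^{\triangleleft}$. The paper merely carries out the mode-coefficient bookkeeping (via (\ref{Eq222})--(\ref{Eq223})) more explicitly; your ``algebraic variant'' through the incoming SBC is an equivalent rearrangement of the same identity.
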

\begin{proof}[\textsc{Proof~\ref{prop:4}}]
 See Appendix \ref{sec:C}.	
\end{proof}

\begin{prop}	\label{prop:5}
 The S-matrix of the junction $\Omega $ is written using the \textit{ND-map} operator $N$:
\begin{equation}	\label{Eq063}
	S = [NiK-I]^{-1}[NiK+I]
\end{equation}
\begin{equation}	\label{Eq064}
	N\dot{\chi} := W\Psi \quad \operatorname{on}\ \gamma,\quad
	\left\{ \begin{aligned}
		&[-\Delta +\upsilon ]\Psi = \varepsilon \Psi && \operatorname{in}\ \ \Omega \\
		&\Psi = 0 && \operatorname{on}\ \partial \Omega \backslash \Gamma \\
		&{{\partial}_1}W\Psi = \dot{\chi} && \operatorname{on}\ \gamma
	\end{aligned} \right.
\end{equation}
\end{prop}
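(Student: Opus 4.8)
The plan is to mirror the derivation of Proposition~\ref{prop:4}, exchanging the roles of the Dirichlet and Neumann data on $\gamma$. First I would return to the matching conditions in the local frame in the form~(\ref{Eq046}), which express the boundary trace and its normal derivative through the incident and scattered boundary waves: $W\Psi = {\psi}^{\triangleleft}+{\psi}^{\triangleright}$ and ${\partial}_1 W\Psi = iK({\psi}^{\triangleright}-{\psi}^{\triangleleft})$ on $\gamma$. Reading the second identity as the Neumann datum $\dot{\chi} := {\partial}_1 W\Psi$ and the first as the value $W\Psi$, the defining relation~(\ref{Eq064}) of the ND-map gives $W\Psi = N\dot{\chi}$, that is, ${\psi}^{\triangleleft}+{\psi}^{\triangleright} = NiK({\psi}^{\triangleright}-{\psi}^{\triangleleft})$ on $\gamma$.

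Next I would solve this operator identity for ${\psi}^{\triangleright}$ in terms of ${\psi}^{\triangleleft}$: collecting the scattered terms on one side yields $[NiK-I]{\psi}^{\triangleright} = [NiK+I]{\psi}^{\triangleleft}$, hence ${\psi}^{\triangleright} = [NiK-I]^{-1}[NiK+I]{\psi}^{\triangleleft}$ on $\gamma$, provided $NiK-I$ is invertible and $\varepsilon$ is such that the mixed boundary value problem in~(\ref{Eq064}) is well posed, so that $N$ exists. Finally, restricting~(\ref{Eq037}) to $x=0$ identifies the boundary functions ${\psi}^{\triangleleft}$, ${\psi}^{\triangleright}$ on $\gamma$ with the amplitude tuples $c^{\triangleleft}$, $c^{\triangleright}$ expanded in the orthonormal modes $h$ of~(\ref{Eq030}); comparing the resulting relation $c^{\triangleright} = [NiK-I]^{-1}[NiK+I]c^{\triangleleft}$ with the definition~(\ref{Eq038}) of the S-matrix yields~(\ref{Eq063}).

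The algebraic rearrangement and the mode expansion are routine. The one point that must be handled with care is the order of the generally non-commuting operators $N$ and $K$: the derivation has to keep them in the fixed combination $NiK$ throughout, exactly as it stands in~(\ref{Eq063}), which is precisely why the final formula is not symmetric in the way~(\ref{Eq061}) is. The only substantive hypothesis is the one guaranteeing invertibility of $NiK-I$ and the existence of the ND-map. As with Proposition~\ref{prop:4}, the cleanest presentation is to record these steps in an appendix parallel to Appendix~\ref{sec:C}.
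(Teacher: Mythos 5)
Your proposal is correct and follows essentially the same route as the paper's Appendix~\ref{sec:C} proof: apply the ND-map definition to the scattering solution, use the decomposition $W\Psi = {\psi}^{\triangleleft}+{\psi}^{\triangleright}$, ${\partial}_1 W\Psi = iK({\psi}^{\triangleright}-{\psi}^{\triangleleft})$ on $\gamma$, and rearrange keeping the product $NiK$ in fixed order before identifying amplitudes via the mode expansion (\ref{Eq038}). The only difference is presentational --- you work with the operator identity on boundary functions and project onto the modes at the end, whereas the paper carries out the same computation directly in the bra-ket matrix-element representation.
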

\begin{proof}[\textsc{Proof~\ref{prop:5}}]
 See Appendix \ref{sec:C}.	
\end{proof}

Expressions (\ref{Eq061}) and (\ref{Eq063}) are equivalent taking into account the property: $D = {N^{-1}}$. By analogy with (\ref{Eq055}) the matrices $D$ and $N$ can be found element by element from (\ref{Eq222}), (\ref{Eq224}) and (\ref{Eq030}):
\begin{equation}	\label{Eq065}
	\begin{aligned}
		D_{mq}^{kp} &= \langle h_m^k | {{\partial}_1}{W^k}\Psi _q^p(0,...) \rangle,	\\
		&\left\{ \begin{aligned}
			&[-\Delta +\upsilon ]\Psi _q^p = \varepsilon \Psi _q^p && \operatorname{in}\ \ \Omega \\
			&\Psi _q^p = 0 && \operatorname{on}\ \partial \Omega \backslash \Gamma \\
			&\Psi _q^p = \bar{W}h_q^p && \operatorname{on}\ \Gamma
		\end{aligned} \right.
	\end{aligned}
\end{equation}
\vspace{15pt}
\begin{equation}	\label{Eq066}
	\begin{aligned}
		N_{mq}^{kp} &= \langle h_m^k | {W^k}\Psi _q^p(0,...) \rangle,	\\
		&\left\{ \begin{aligned}
			&[-\Delta +\upsilon ]\Psi _q^p = \varepsilon \Psi _q^p && \operatorname{in}\ \ \Omega \\
			&\Psi _q^p = 0 && \operatorname{on}\ \partial \Omega \backslash \Gamma \\
			&{{\partial}_n}\Psi _q^p = \bar{W}h_q^p && \operatorname{on}\ \Gamma
		\end{aligned} \right.
	\end{aligned}
\end{equation}

To find any element of the S-matrix, according to the expressions (\ref{Eq061}) and (\ref{Eq063}), one need to find all elements of the $D$ or $N$ operator. DN- and ND-map are singular if $\varepsilon $ is the eigenvalue of the problem with zero Dirichlet and Neumann boundary conditions, respectively.

\subsection{S-matrix of quantum network in terms of its junctions' S-matrices}	\label{sec:2_4}

Based on the agreements and notation introduced in section \ref{sec:2_1}, we will obtain an expression for the S-matrix of an arbitrary quantum network in terms of its junctions' S-matrices.

\subsubsection{Elementary section of network}	\label{sec:2_4_1}

Let us consider the scattering of the charge carrier in a quantum network. Since the network consists of junctions and branches connecting them, its \textit{elementary section} is the connection of two junctions (Fig.~\ref{fig:3}). Any quantum network can be represented as a set of such connections. Therefore, the expressions for the elementary section are the basis for calculating the S-matrix of the entire network.

\begin{figure}[htb]\center
	\includegraphics{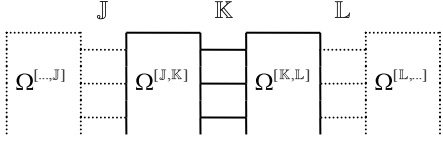}
	\caption{Scheme of elementary section of network ${\Omega}^{[\mathbb{J},\mathbb{L}]}$: solid lines are internal junctions (${\Omega}^{[\mathbb{J},\mathbb{K}]}$, ${\Omega}^{[\mathbb{K},\mathbb{L}]}$) and branches ($\mathbb{K}$ is tuple of their numbers), dotted lines are external junctions (${\Omega}^{[...,\mathbb{J}]}$, ${\Omega}^{[\mathbb{L},...]}$) and branches ($\mathbb{J}$, $\mathbb{L}$ are tuples of their numbers).}\label{fig:3}
\end{figure}

Elementary section of network (Fig.~\ref{fig:3}) is a new combined junction ${{\Omega}^{[\mathbb{J},\mathbb{L}]}} = {{\Omega}^{[\mathbb{J},\mathbb{K}]}}+\sum\nolimits_{}^{k\in \mathbb{K}}{({{\Omega}^k}+{{\Gamma}^k})}+{{\Omega}^{[\mathbb{K},\mathbb{L}]}}$. Its extended scattering matrix $S^{[\mathbb{J},\mathbb{L}]}$ relates the amplitudes of the waves incident on it $c^{[\mathbb{J},\mathbb{L}]\triangleleft}$ with the amplitudes of the waves scattered by it $c^{[\mathbb{J},\mathbb{L}]\triangleright}$:
\begin{equation}	\label{Eq067}
	\begin{aligned}
		{c^{[\mathbb{J},\mathbb{L}]\triangleright}} &=
		\left[ \begin{matrix}
			{c^{[\mathbb{J},\mathbb{L}]\triangleright \mathbb{J}}} \\
			{c^{[\mathbb{J},\mathbb{L}]\triangleright \mathbb{L}}} \\
		\end{matrix} \right]	\\
		&=\left[ \begin{matrix}
			{S^{[\mathbb{J},\mathbb{L}]\mathbb{J}\mathbb{J}}} & {S^{[\mathbb{J},\mathbb{L}]\mathbb{J}\mathbb{L}}} \\
			{S^{[\mathbb{J},\mathbb{L}]\mathbb{L}\mathbb{J}}} & {S^{[\mathbb{J},\mathbb{L}]\mathbb{L}\mathbb{L}}} \\
		\end{matrix} \right]
		\left[ \begin{matrix}
		{c^{[\mathbb{J},\mathbb{L}]\triangleleft \mathbb{J}}} \\
			{c^{[\mathbb{J},\mathbb{L}]\triangleleft \mathbb{L}}} \\
		\end{matrix} \right] = {S^{[\mathbb{J},\mathbb{L}]}}{c^{[\mathbb{J},\mathbb{L}]\triangleleft}}
	\end{aligned}
\end{equation}
where $c^{[\mathbb{J},\mathbb{L}]\triangleleft \mathbb{J}}$ are amplitudes of waves incident on the junction ${\Omega}^{[\mathbb{J},\mathbb{L}]}$ from external branches ${\Omega}^{\mathbb{J}}$, $c^{[\mathbb{J},\mathbb{L}]\triangleright \mathbb{J}}$ are amplitudes of waves scattered by the junction ${\Omega}^{[\mathbb{J},\mathbb{L}]}$ to external branches ${\Omega}^{\mathbb{J}}$. Amplitudes $c^{[\mathbb{J},\mathbb{L}]\triangleleft \mathbb{L}}$ and $c^{[\mathbb{J},\mathbb{L}]\triangleright \mathbb{L}}$ also have a similar meaning to the accuracy of replacing indices outside square brackets $\mathbb{J}\leftrightarrow \mathbb{L}$. Matrix $S^{[\mathbb{J},\mathbb{L}]}$ is separated in expression (\ref{Eq067}) by corresponding submatrices.

From the shown scheme of the elementary section of the network (Fig.~\ref{fig:3}) it follows that
\pagebreak
\begin{equation}	\label{Eq068}
	\left[ \begin{matrix}
	{c^{[\mathbb{J},\mathbb{L}]\triangleright \mathbb{J}}} \\
		{c^{[\mathbb{J},\mathbb{L}]\triangleright \mathbb{L}}} \\
		\end{matrix} \right] = \left[ \begin{matrix}
	{c^{[\mathbb{J},\mathbb{K}]\triangleright \mathbb{J}}} \\
		{c^{[\mathbb{K},\mathbb{L}]\triangleright \mathbb{L}}} \\
		\end{matrix} \right],\quad \left[ \begin{matrix}
	{c^{[\mathbb{J},\mathbb{L}]\triangleleft \mathbb{J}}} \\
		{c^{[\mathbb{J},\mathbb{L}]\triangleleft \mathbb{L}}} \\
		\end{matrix} \right] = \left[ \begin{matrix}
	{c^{[\mathbb{J},\mathbb{K}]\triangleleft \mathbb{J}}} \\
		{c^{[\mathbb{K},\mathbb{L}]\triangleleft \mathbb{L}}} \\
		\end{matrix} \right]
\end{equation}
Therefore, it is possible to write $c^{[\mathbb{J},\mathbb{L}]\triangleright}$ in terms of S-matrices of junctions forming a network section $S^{[\mathbb{J},\mathbb{K}]}$ and $S^{[\mathbb{K},\mathbb{L}]}$:
\begin{equation}	\label{Eq069}
	{c^{[\mathbb{J},\mathbb{L}]\triangleright}} = \left[ \begin{matrix}
	{c^{[\mathbb{J},\mathbb{K}]\triangleright \mathbb{J}}} \\
		{c^{[\mathbb{K},\mathbb{L}]\triangleright \mathbb{L}}} \\
		\end{matrix} \right] = \left[ \begin{matrix}
	\sum\nolimits_{}^l{{S^{[\mathbb{J},\mathbb{K}]\mathbb{J}l}}{c^{[\mathbb{J},\mathbb{K}]\triangleleft l}}} \\
		\sum\nolimits_{}^l{{S^{[\mathbb{K},\mathbb{L}]\mathbb{L}l}}{c^{[\mathbb{K},\mathbb{L}]\triangleleft l}}} \\
		\end{matrix} \right]
\end{equation}
In expression (\ref{Eq069}), let us separate the first sum as $\sum\nolimits^l{} = \sum\nolimits^{l \in \mathbb{J}}{} + \sum\nolimits^{l \in \mathbb{K}}{}$ and the second one as $\sum\nolimits^l{} = \sum\nolimits^{l \in \mathbb{K}}{} + \sum\nolimits^{l \in \mathbb{L}}{}$. Taking into account this separation and the agreements of section \ref{sec:2_1_1}, from expression (\ref{Eq069}) follows
\begin{equation}	\label{Eq070}
	\begin{aligned}
		\left[ \begin{matrix}
			{c^{[\mathbb{J},\mathbb{K}]\triangleright \mathbb{J}}} \\
			{c^{[\mathbb{K},\mathbb{L}]\triangleright \mathbb{L}}} \\
		\end{matrix} \right] 
		=& \left[ \begin{matrix}
		{S^{[\mathbb{J},\mathbb{K}]\mathbb{J}\mathbb{J}}} & {O^{\mathbb{J}\mathbb{L}}} \\
			{O^{\mathbb{L}\mathbb{J}}} & {S^{[\mathbb{K},\mathbb{L}]\mathbb{L}\mathbb{L}}} \\
			\end{matrix} \right]\left[ \begin{matrix}
		{c^{[\mathbb{J},\mathbb{K}]\triangleleft \mathbb{J}}} \\
			{c^{[\mathbb{K},\mathbb{L}]\triangleleft \mathbb{L}}} \\
		\end{matrix} \right]	\\
		&+\left[ \begin{matrix}
		{S^{[\mathbb{J},\mathbb{K}]\mathbb{J}\mathbb{K}}} & {O^{\mathbb{J}\mathbb{K}}} \\
			{O^{\mathbb{L}\mathbb{K}}} & {S^{[\mathbb{K},\mathbb{L}]\mathbb{L}\mathbb{K}}} \\
			\end{matrix} \right]\left[ \begin{matrix}
		{c^{[\mathbb{J},\mathbb{K}]\triangleleft \mathbb{K}}} \\
			{c^{[\mathbb{K},\mathbb{L}]\triangleleft \mathbb{K}}} \\
		\end{matrix} \right]
	\end{aligned}
\end{equation}

Thus, to calculate the matrix $S^{[\mathbb{J},\mathbb{L}]}$, it is necessary to express the amplitudes of waves incident from the internal branches $c^{[\mathbb{J},\mathbb{K}]\triangleleft \mathbb{K}}$ and $c^{[\mathbb{K},\mathbb{L}]\triangleleft \mathbb{K}}$ in terms of the amplitudes of waves incident from the external branches $c^{[\mathbb{J},\mathbb{K}]\triangleleft \mathbb{J}}$ and $c^{[\mathbb{K},\mathbb{L}]\triangleleft \mathbb{L}}$.

\subsubsection{Combining formula}	\label{sec:2_4_2}

\begin{prop}	\label{prop:6}
 The extended scattering matrix of the elementary section of the network $S^{[\mathbb{J},\mathbb{L}]}$ (Fig.~\ref{fig:3}) is written in terms of extended scattering matrices of its internal junctions $S^{[\mathbb{J},\mathbb{K}]}$ and $S^{[\mathbb{K},\mathbb{L}]}$ using a \textit{combining formula}:
\begin{strip}
\vspace{-20pt}
\begin{equation}	\label{Eq071}
	\begin{aligned}
	{S^{[\mathbb{J},\mathbb{L}]}} =&
	\left[ \begin{matrix}
		{S^{[\mathbb{J},\mathbb{K}]\mathbb{J}\mathbb{J}}} & {O^{\mathbb{J}\mathbb{L}}} \\
		{O^{\mathbb{L}\mathbb{J}}} & {S^{[\mathbb{K},\mathbb{L}]\mathbb{L}\mathbb{L}}} \\
	\end{matrix} \right]+
	\left[ \begin{matrix}
		{S^{[\mathbb{J},\mathbb{K}]\mathbb{J}\mathbb{K}}} & {O^{\mathbb{J}\mathbb{K}}} \\
		{O^{\mathbb{L}\mathbb{K}}} & {S^{[\mathbb{K},\mathbb{L}]\mathbb{L}\mathbb{K}}} \\
	\end{matrix} \right]\\
	&\times {{\left[ \begin{matrix}
		-{S^{[\mathbb{J},\mathbb{K}]\mathbb{K}\mathbb{K}}} & {U^{[\mathbb{J},\mathbb{K}]\mathbb{K}\mathbb{K}}}\exp \left( -i{K^{\mathbb{K}\mathbb{K}}}{A^{\mathbb{K}\mathbb{K}}} \right) \\
		{U^{[\mathbb{K},\mathbb{L}]\mathbb{K}\mathbb{K}}}\exp \left( -i{K^{\mathbb{K}\mathbb{K}}}{A^{\mathbb{K}\mathbb{K}}} \right) & -{S^{[\mathbb{K},\mathbb{L}]\mathbb{K}\mathbb{K}}} \\
		\end{matrix} \right]}^{-1}}\left[ \begin{matrix}
	{S^{[\mathbb{J},\mathbb{K}]\mathbb{K}\mathbb{J}}} & {O^{\mathbb{K}\mathbb{L}}} \\
		{O^{\mathbb{K}\mathbb{J}}} & {S^{[\mathbb{K},\mathbb{L}]\mathbb{K}\mathbb{L}}} \\
	\end{matrix} \right]
	\end{aligned}
\end{equation}
\vspace{-20pt}
\end{strip}
\begin{equation}	\label{Eq072}
	\begin{aligned}
	U_{mn}^{[\mathbb{J},\mathbb{K}]kl} := \langle {{{\bar{W}}}^{[\mathbb{J},\mathbb{K}]k}}h_m^{[\mathbb{J},\mathbb{K}]k} | {{{\bar{W}}}^{[\mathbb{K},\mathbb{L}]l}}h_n^{[\mathbb{K},\mathbb{L}]l} \rangle \\
		U_{mn}^{[\mathbb{K},\mathbb{L}]kl} := \langle {{{\bar{W}}}^{[\mathbb{K},\mathbb{L}]k}}h_m^{[\mathbb{K},\mathbb{L}]k} | {{{\bar{W}}}^{[\mathbb{J},\mathbb{K}]l}}h_n^{[\mathbb{J},\mathbb{K}]l} \rangle
	\end{aligned}
\end{equation}
\begin{equation}	\label{Eq073}
	K_{mn}^{kl} := \kappa _m^kI_{mn}^{kl},\quad A_{mn}^{kl} := {a^k}I_{mn}^{kl}
\end{equation}
\end{prop}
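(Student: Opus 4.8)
The plan is to reduce everything to a finite linear system for the amplitudes of the waves travelling along the internal branches $\mathbb{K}$ and then to eliminate them. The starting point is the already-established relation (\ref{Eq070}), which gives the waves scattered by the combined junction into the external branches $\mathbb{J}$, $\mathbb{L}$ as a linear function of the external incident amplitudes $c^{\triangleleft \mathbb{J}}:=c^{[\mathbb{J},\mathbb{K}]\triangleleft \mathbb{J}}$, $c^{\triangleleft \mathbb{L}}:=c^{[\mathbb{K},\mathbb{L}]\triangleleft \mathbb{L}}$ and of the internal incident amplitudes $c^{[\mathbb{J},\mathbb{K}]\triangleleft \mathbb{K}}$, $c^{[\mathbb{K},\mathbb{L}]\triangleleft \mathbb{K}}$. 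Comparing (\ref{Eq070}) with (\ref{Eq067}), (\ref{Eq068}), the first matrix in (\ref{Eq070}) is already the diagonal part of (\ref{Eq071}), so it remains to express the two internal incident vectors through $c^{\triangleleft \mathbb{J}}$, $c^{\triangleleft \mathbb{L}}$; substituting the result back into (\ref{Eq070}) and reading off the coefficient of $(c^{\triangleleft \mathbb{J}},c^{\triangleleft \mathbb{L}})$ then produces (\ref{Eq071}). To set up the system I also record the $\mathbb{K}$-output rows of the scattering relations (\ref{Eq038}) for the two internal junctions, namely $c^{[\mathbb{J},\mathbb{K}]\triangleright \mathbb{K}}=S^{[\mathbb{J},\mathbb{K}]\mathbb{K}\mathbb{J}}c^{\triangleleft \mathbb{J}}+S^{[\mathbb{J},\mathbb{K}]\mathbb{K}\mathbb{K}}c^{[\mathbb{J},\mathbb{K}]\triangleleft \mathbb{K}}$ and the mirror relation obtained by the exchange $\mathbb{J}\leftrightarrow \mathbb{L}$, $[\mathbb{J},\mathbb{K}]\leftrightarrow [\mathbb{K},\mathbb{L}]$.

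The heart of the argument is the matching condition on an internal branch $\Omega^k$, $k\in \mathbb{K}$: the wave scattered by $\Omega^{[\mathbb{J},\mathbb{K}]}$ into $\Omega^k$ coincides with the wave incident on $\Omega^{[\mathbb{K},\mathbb{L}]}$ from $\Omega^k$ (and symmetrically with the two junctions swapped). I would take $\psi^{[\mathbb{J},\mathbb{K}]\triangleright k}$ in its own LCF from (\ref{Eq037}) and re-express it in the LCF $[XYZ]^{[\mathbb{K},\mathbb{L}]k}$. By (\ref{Eq020}), (\ref{Eq021}) the two LCFs attached to the ends of $\Omega^k$ are related by the orientation-reversing longitudinal map $x\mapsto a^k-x$ together with a rigid change of the transverse frame, whose transverse-mode overlap is exactly $U^{[\mathbb{K},\mathbb{L}]\mathbb{K}\mathbb{K}}$ of (\ref{Eq072}); projecting onto the transverse modes at the $[\mathbb{K},\mathbb{L}]$ end then yields the relation sought. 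Because the branch potential is constant along the branch (section \ref{sec:2_2_2}), the channel energies and hence the wavenumbers $\kappa^k$ agree at the two ends, so matching is possible at all $x$ and the map $x\mapsto a^k-x$ turns $\exp(+i\kappa^k_m x)$ into $\exp(+i\kappa^k_m a^k)\exp(-i\kappa^k_m x)$, i.e.\ produces the propagation factor $\exp(+iK^{\mathbb{K}\mathbb{K}}A^{\mathbb{K}\mathbb{K}})$ with $K$, $A$ as in (\ref{Eq073}). Since (\ref{Eq071}) is written with the amplitudes scattered by $\Omega^{[\mathbb{J},\mathbb{K}]}$ on the left, I then invert this relation, using that $W$ is unitary (\ref{Eq018}) — so $(U^{[\mathbb{K},\mathbb{L}]\mathbb{K}\mathbb{K}})^{-1}=U^{[\mathbb{J},\mathbb{K}]\mathbb{K}\mathbb{K}}$ — and that $U^{[\mathbb{J},\mathbb{K}]\mathbb{K}\mathbb{K}}$ commutes with $\exp(-iK^{\mathbb{K}\mathbb{K}}A^{\mathbb{K}\mathbb{K}})$ because both are block-diagonal with respect to the degenerate channel subspaces. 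The outcome is $c^{[\mathbb{J},\mathbb{K}]\triangleright \mathbb{K}}=U^{[\mathbb{J},\mathbb{K}]\mathbb{K}\mathbb{K}}\exp(-iK^{\mathbb{K}\mathbb{K}}A^{\mathbb{K}\mathbb{K}})c^{[\mathbb{K},\mathbb{L}]\triangleleft \mathbb{K}}$ and, by the $\mathbb{J}\leftrightarrow \mathbb{L}$ symmetry, $c^{[\mathbb{K},\mathbb{L}]\triangleright \mathbb{K}}=U^{[\mathbb{K},\mathbb{L}]\mathbb{K}\mathbb{K}}\exp(-iK^{\mathbb{K}\mathbb{K}}A^{\mathbb{K}\mathbb{K}})c^{[\mathbb{J},\mathbb{K}]\triangleleft \mathbb{K}}$.

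Equating these two expressions for $c^{[\mathbb{J},\mathbb{K}]\triangleright \mathbb{K}}$ and $c^{[\mathbb{K},\mathbb{L}]\triangleright \mathbb{K}}$ with the scattering rows recorded in the first step eliminates the internal scattered amplitudes and leaves a $2\times 2$ block system for $(c^{[\mathbb{J},\mathbb{K}]\triangleleft \mathbb{K}},c^{[\mathbb{K},\mathbb{L}]\triangleleft \mathbb{K}})$ whose coefficient matrix is precisely the matrix inverted in (\ref{Eq071}), with right-hand side $\operatorname{diag}(S^{[\mathbb{J},\mathbb{K}]\mathbb{K}\mathbb{J}},S^{[\mathbb{K},\mathbb{L}]\mathbb{K}\mathbb{L}})$ applied to $(c^{\triangleleft \mathbb{J}},c^{\triangleleft \mathbb{L}})$. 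Assuming this matrix is invertible (which is what it means for the combined $S$-matrix to exist), I solve for the internal incident amplitudes, substitute into (\ref{Eq070}), and read off the matrix acting on $(c^{\triangleleft \mathbb{J}},c^{\triangleleft \mathbb{L}})$ as $S^{[\mathbb{J},\mathbb{L}]}$; this is exactly (\ref{Eq071}). The only genuinely delicate point is the matching step of the middle paragraph: one has to keep precise track of the longitudinal reflection, of the transverse frame change encoded by $U$, of the coincidence of wavenumbers at the two ends, and of the inversion that converts $\exp(+iK^{\mathbb{K}\mathbb{K}}A^{\mathbb{K}\mathbb{K}})$ into $\exp(-iK^{\mathbb{K}\mathbb{K}}A^{\mathbb{K}\mathbb{K}})$ with the conjugate overlap operator; once the connection relations are in place, the remainder is bookkeeping of block matrices.
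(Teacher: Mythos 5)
Your proposal is correct and follows essentially the same route as the paper: both reduce the problem to the block linear system (\ref{Eq083}) for the internal incident amplitudes by combining the junction scattering relations with the branch connection relations obtained from the change of LCF $x\mapsto a^k-x$ (with the overlap $U$, the equality of wavenumbers at the two ends, and the inversion producing $U^{[\mathbb{J},\mathbb{K}]\mathbb{K}\mathbb{K}}\exp(-iK^{\mathbb{K}\mathbb{K}}A^{\mathbb{K}\mathbb{K}})$), then solve and substitute into (\ref{Eq070}). The only cosmetic difference is that you state the connection relation directly as ``outgoing at one end equals incoming at the other, propagated,'' whereas the paper equates the full wave function (\ref{Eq075})--(\ref{Eq076}) in the two frames and matches coefficients of $\exp(\pm i\kappa^k_m x)$ --- the same computation.
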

\begin{proof}[\textsc{Proof~\ref{prop:6}}]
 Let us consider the internal branch ${\Omega}^k$ of the elementary section of the network: $k\in \mathbb{K}$. From the definitions (\ref{Eq022}) taking into account property (\ref{Eq018}) follows
\begin{equation}	\label{Eq074}
	{{\bar{W}}^{[\mathbb{J},\mathbb{K}]k}}{{\psi}^{[\mathbb{J},\mathbb{K}]k}} = {{\Psi}^k} = {{\bar{W}}^{[\mathbb{K},\mathbb{L}]k}}{{\psi}^{[\mathbb{K},\mathbb{L}]k}}
\end{equation}
According to expressions (\ref{Eq027}) and (\ref{Eq038}), wave function ${\Psi}^k$ in LCFs $[XYZ]^{[\mathbb{J},\mathbb{K}]k}$ and $[XYZ]^{[\mathbb{K},\mathbb{L}]k}$ takes the form
\begin{strip}
\begin{equation}	\label{Eq075}
	\begin{aligned}
		{{\psi}^{[\mathbb{J},\mathbb{K}]k}}\left( x,y,z \right) = \sum\nolimits_m{c_m^{[\mathbb{J},\mathbb{K}]\triangleleft k}\exp \left( -i\kappa _m^{[\mathbb{J},\mathbb{K}]k}x \right)h_m^{[\mathbb{J},\mathbb{K}]k}\left( y,z \right)}
		+\sum\nolimits_m{\left[ \sum\nolimits_n^l{S_{mn}^{[\mathbb{J},\mathbb{K}]kl}c_n^{[\mathbb{J},\mathbb{K}]\triangleleft l}} \right]\exp \left( +i\kappa _m^{[\mathbb{J},\mathbb{K}]k}x \right)h_m^{[\mathbb{J},\mathbb{K}]k}\left( y,z \right)}
	\end{aligned}
\end{equation}
\begin{equation}	\label{Eq076}
	\begin{aligned}
	{{\psi}^{[\mathbb{K},\mathbb{L}]k}}\left( x,y,z \right) = \sum\nolimits_m{c_m^{[\mathbb{K},\mathbb{L}]\triangleleft k}\exp \left( -i\kappa _m^{[\mathbb{K},\mathbb{L}]k}x \right)h_m^{[\mathbb{K},\mathbb{L}]}\left( y,z \right)}
		+\sum\nolimits_m{\left[ \sum\nolimits_n^l{S_{mn}^{[\mathbb{K},\mathbb{L}]kl}c_n^{[\mathbb{K},\mathbb{L}]\triangleleft l}} \right]\exp \left( +i\kappa _m^{[\mathbb{K},\mathbb{L}]k}x \right)h_m^{[\mathbb{K},\mathbb{L}]k}\left( y,z \right)}
	\end{aligned}
\end{equation}
\end{strip}
respectively. Let us find unknown amplitudes $c^{[\mathbb{J},\mathbb{K}]\triangleleft k}$ and $c^{[\mathbb{K},\mathbb{L}]\triangleleft k}$. For this purpose, we rewrite the identity (\ref{Eq074}) in LCF ${[XYZ]}^{[\mathbb{J},\mathbb{K}]k}$:
\begin{equation}	\label{Eq077}
	{{\psi}^{[\mathbb{J},\mathbb{K}]k}} = {W^{[\mathbb{J},\mathbb{K}]k}}{{\bar{W}}^{[\mathbb{K},\mathbb{L}]k}}{{\psi}^{[\mathbb{K},\mathbb{L}]k}}
\end{equation}
Since LCFs are related to each other by translation and rotation transformations so that ${X^{[\mathbb{J},\mathbb{K}]k}}\parallel {X^{[\mathbb{K},\mathbb{L}]k}}$, according to condition (\ref{Eq020}), we have
\begin{equation}	\label{Eq078}
	\begin{aligned}
		{W^{[\mathbb{J},\mathbb{K}]k}}&{{\bar{W}}^{[\mathbb{K},\mathbb{L}]k}}\exp \left( \pm i\kappa _m^{[\mathbb{K},\mathbb{L}]k}x \right) =	\\
		& \exp \left( \pm i\kappa _m^{[\mathbb{K},\mathbb{L}]k}\left[ {a^k}-x \right] \right){W^{[\mathbb{J},\mathbb{K}]k}}{{\bar{W}}^{[\mathbb{K},\mathbb{L}]k}}
	\end{aligned}
\end{equation}
At the same time, one can shows that
\begin{equation}	\label{Eq079}
	\begin{aligned}
		&{W^{[\mathbb{J},\mathbb{K}]k}}{{\bar{W}}^{[\mathbb{K},\mathbb{L}]k}}h_m^{[\mathbb{K},\mathbb{L}]k} = \sum\nolimits_n{h_n^{[\mathbb{J},\mathbb{K}]k}U_{nm}^{[\mathbb{J},\mathbb{K}]kk}},	\\
		&\lambda _m^{[\mathbb{J},\mathbb{K}]k} = \lambda _m^{[\mathbb{K},\mathbb{L}]k} = :\lambda _m^k
	\end{aligned}
\end{equation}
where $U^{[\mathbb{J},\mathbb{K}]}$ and $U^{[\mathbb{K},\mathbb{L}]}$ are the operators providing changing coordinate frame (\ref{Eq072}). For them from (\ref{Eq072}) we have
\begin{equation}	\label{Eq080}
	{{\bar{U}}^{[\mathbb{J},\mathbb{K}]}} = {U^{[\mathbb{K},\mathbb{L}]}}
\end{equation}
Since $\kappa _m^{[\mathbb{J},\mathbb{K}]k} = \kappa _m^{[\mathbb{K},\mathbb{L}]k} = :\kappa _m^k$, substituting expressions (\ref{Eq075}) and (\ref{Eq076}) to the identity (\ref{Eq077}) taking into account expressions (\ref{Eq078}) and (\ref{Eq079}), we get
\begin{equation}	\label{Eq081}
	\left\{ \begin{aligned}
		&c_m^{[\mathbb{J},\mathbb{K}]\triangleleft k} = U_{mm}^{[\mathbb{J},\mathbb{K}]kk}\exp \left( +i\kappa _m^k{a^k} \right)\sum\nolimits_n^l{S_{mn}^{[\mathbb{K},\mathbb{L}]kl}c_n^{[\mathbb{K},\mathbb{L}]\triangleleft l}} \\
		&c_m^{[\mathbb{K},\mathbb{L}]\triangleleft k} = \bar{U}_{mm}^{[\mathbb{J},\mathbb{K}]kk}\exp \left( +i\kappa _m^k{a^k} \right)\sum\nolimits_n^l{S_{mn}^{[\mathbb{J},\mathbb{K}]kl}c_n^{[\mathbb{J},\mathbb{K}]\triangleleft l}}
	\end{aligned} \right.
\end{equation}

Taking into account property (\ref{Eq080}) and agreements in section \ref{sec:2_1_1}, equations (\ref{Eq081}) take the form
\begin{equation}	\label{Eq082}
	\left\{ \begin{aligned}
		&c^{[\mathbb{J},\mathbb{K}]\triangleleft \mathbb{K}} = U^{[\mathbb{J},\mathbb{K}]\mathbb{K}\mathbb{K}} \exp \! \left( +iK^{\mathbb{K}\mathbb{K}}A^{\mathbb{K}\mathbb{K}} \right) \! \sum\nolimits^l \!\!{S^{[\mathbb{K},\mathbb{L}]\mathbb{K}l}c^{[\mathbb{K},\mathbb{L}]\triangleleft l}} \\
		&c^{[\mathbb{K},\mathbb{L}]\triangleleft \mathbb{K}} = U^{[\mathbb{K},\mathbb{L}]\mathbb{K}\mathbb{K}} \exp \! \left( +iK^{\mathbb{K}\mathbb{K}}A^{\mathbb{K}\mathbb{K}} \right) \! \sum\nolimits^l \!\!{S^{[\mathbb{J},\mathbb{K}]\mathbb{K}l}c^{[\mathbb{J},\mathbb{K}]\triangleleft l}}
	\end{aligned} \right.
\end{equation}
where $K^{\mathbb{K}\mathbb{K}}$ and $A^{\mathbb{K}\mathbb{K}}$ are the operators of wavenumbers and branch lengths, respectively (\ref{Eq073}). In the system (\ref{Eq082}), we separate the sums into two parts: in the first equation as $\sum\nolimits^l{} = \sum\nolimits^{l \in \mathbb{K}}{} + \sum\nolimits^{l \in \mathbb{L}}{}$, in the second one as $\sum\nolimits^l{} = \sum\nolimits^{l \in \mathbb{J}}{} + \sum\nolimits^{l \in \mathbb{K}}{}$, therefore
\begin{strip}
\begin{equation}	\label{Eq083}
	\begin{aligned}
		\left[ \begin{matrix}
			{c^{[\mathbb{J},\mathbb{K}]\triangleleft \mathbb{K}}} \\
			{c^{[\mathbb{K},\mathbb{L}]\triangleleft \mathbb{K}}} \\
		\end{matrix} \right] =
		{{\left[ \begin{matrix}
			-{S^{[\mathbb{J},\mathbb{K}]\mathbb{K}\mathbb{K}}} & {U^{[\mathbb{J},\mathbb{K}]\mathbb{K}\mathbb{K}}}\exp \left( -i{K^{\mathbb{K}\mathbb{K}}}{A^{\mathbb{K}\mathbb{K}}} \right) \\
			{U^{[\mathbb{K},\mathbb{L}]\mathbb{K}\mathbb{K}}}\exp \left( -i{K^{\mathbb{K}\mathbb{K}}}{A^{\mathbb{K}\mathbb{K}}} \right) & -{S^{[\mathbb{K},\mathbb{L}]\mathbb{K}\mathbb{K}}} \\
		\end{matrix} \right]}^{-1}}
		\left[ \begin{matrix}
			{S^{[\mathbb{J},\mathbb{K}]\mathbb{K}\mathbb{J}}} & {O^{\mathbb{K}\mathbb{L}}} \\
			{O^{\mathbb{K}\mathbb{J}}} & {S^{[\mathbb{K},\mathbb{L}]\mathbb{K}\mathbb{L}}} \\
		\end{matrix} \right]
		\left[ \begin{matrix}
			{c^{[\mathbb{J},\mathbb{K}]\triangleleft \mathbb{J}}} \\
			{c^{[\mathbb{K},\mathbb{L}]\triangleleft \mathbb{L}}} \\
		\end{matrix} \right]
	\end{aligned}
\end{equation}
\end{strip}
The equality (\ref{Eq083}) expresses the amplitudes of the waves incident from the internal branches $c^{[\mathbb{J},\mathbb{K}]\triangleleft \mathbb{K}}$ and $c^{[\mathbb{K},\mathbb{L}]\triangleleft \mathbb{K}}$ in terms of the amplitudes of the waves incident from the external branches, $c^{[\mathbb{J},\mathbb{K}]\triangleleft \mathbb{J}}$ and $c^{[\mathbb{K},\mathbb{L}]\triangleleft \mathbb{L}}$. Substituting equality (\ref{Eq083}) in expression (\ref{Eq070}), taking into account equation (\ref{Eq067}) we obtain formula (\ref{Eq071}).	
\end{proof}

For the operators used in the formula (\ref{Eq071}) $U^{[\mathbb{J},\mathbb{K}]}$ and $U^{[\mathbb{K},\mathbb{L}]}$ (\ref{Eq072}), the unitary property follows:
\begin{equation}	\label{Eq084}
	\{U^{\square}{\bar{U}}^{\square} = I^{\square} = {\bar{U}}^{\square}U^{\square}\}^{\square \ = \ [\mathbb{J},\mathbb{K}],[\mathbb{K},\mathbb{L}]}
\end{equation}
When executing the properties ${{\bar{W}}^{[\mathbb{J},\mathbb{K}]k}}h_m^{[\mathbb{J},\mathbb{K}]k}\parallel {{\bar{W}}^{[\mathbb{K},\mathbb{L}]k}}h_m^{[\mathbb{K},\mathbb{L}]k}$ and ${{\bar{W}}^{[\mathbb{J},\mathbb{K}]k}}h_m^{[\mathbb{J},\mathbb{K}]k}\underset{m\ne n}{\mathop{\bot}}\,{{\bar{W}}^{[\mathbb{K},\mathbb{L}]k}}h_n^{[\mathbb{K},\mathbb{L}]k}$, from the definition (\ref{Eq072}), it also follows that these operators are diagonal:
\begin{equation}	\label{Eq085}
	\{U^{\square} = \operatorname{diag}\}^{\square \ = \ [\mathbb{J},\mathbb{K}],[\mathbb{K},\mathbb{L}]}
\end{equation}

The combining formula (\ref{Eq071}) is universal. When external branches absent, for example, on the right in Figure~\ref{fig:3} ($\mathbb{L} = \varnothing $, ${{\Omega}^{[\mathbb{K},\varnothing ]}} = {{\Omega}^{[\mathbb{K}]}}$), formula (\ref{Eq071}) is also applicable (at the same time, ${S^{[\mathbb{J},\varnothing ]}} = {S^{[\mathbb{J}]}}$, ${S^{[\mathbb{K},\varnothing ]}} = {S^{[\mathbb{K}]}}$). Taking into account the agreements in section \ref{sec:2_1_1} it takes thes form:
\begin{strip}
\begin{equation}	\label{Eq086}
	\begin{aligned}
		S^{[\mathbb{J}]} = {S^{[\mathbb{J},\mathbb{K}]\mathbb{J}\mathbb{J}}}+
		\left[ \begin{matrix}
		{S^{[\mathbb{J},\mathbb{K}]\mathbb{J}\mathbb{K}}} & {O^{\mathbb{J}\mathbb{K}}} \\
		\end{matrix} \right]
		{{\left[ \begin{matrix}
		-{S^{[\mathbb{J},\mathbb{K}]\mathbb{K}\mathbb{K}}} & {U^{[\mathbb{J},\mathbb{K}]\mathbb{K}\mathbb{K}}}\exp \left( -i{K^{\mathbb{K}\mathbb{K}}}{A^{\mathbb{K}\mathbb{K}}} \right) \\
		{U^{[\mathbb{K}]\mathbb{K}\mathbb{K}}}\exp \left( -i{K^{\mathbb{K}\mathbb{K}}}{A^{\mathbb{K}\mathbb{K}}} \right) & -{S^{[\mathbb{K}]\mathbb{K}\mathbb{K}}} \\
		\end{matrix} \right]}^{-1}}
		\left[ \begin{matrix}
		{S^{[\mathbb{J},\mathbb{K}]\mathbb{K}\mathbb{J}}} \\
			{O^{\mathbb{K}\mathbb{J}}} \\
		\end{matrix} \right]
	\end{aligned}
\end{equation}
\end{strip}
If there is no connection between junctions ($\mathbb{K} = \varnothing $, ${{\Omega}^{[\mathbb{J},\varnothing ]}} = {{\Omega}^{[\mathbb{J}]}}$, ${{\Omega}^{[\varnothing,\mathbb{L}]}} = {{\Omega}^{[\mathbb{L}]}}$, ${S^{[\mathbb{J},\varnothing ]}} = {S^{[\mathbb{J}]}}$, ${S^{[\varnothing,\mathbb{L}]}} = {S^{[\mathbb{L}]}}$), the formula (\ref{Eq071}) is written as
\begin{equation}	\label{Eq087}
	{S^{[\mathbb{J},\mathbb{L}]}} = \left[ \begin{matrix}
	{S^{[\mathbb{J}]\mathbb{J}\mathbb{J}}} & {O^{\mathbb{J}\mathbb{L}}} \\
		{O^{\mathbb{L}\mathbb{J}}} & {S^{[\mathbb{L}]\mathbb{L}\mathbb{L}}} \\
		\end{matrix} \right]
\end{equation}
Thus, due to the agreements of section \ref{sec:2_1_1}, formula (\ref{Eq071}) is applicable for combining network junctions in all possible cases.

\subsubsection{Network combining formula}	\label{sec:2_4_3}

Based on the combining formula (\ref{Eq071}), assuming in it that
\begin{equation}	\label{Eq088}
	\begin{aligned}
		\mathbb{J} &= \ \mathbb{A}\backslash \mathbb{B} \!\!\!\!&= \{k\in \mathbb{A}|k\notin \mathbb{B}\} \\
		\mathbb{K} &= \mathbb{A}{\textstyle \bigcap} \mathbb{B} \!\!\!\!&= \{k\in \mathbb{A}|k\in \mathbb{B}\} \\
		\mathbb{L} &= \ \mathbb{B}\backslash \mathbb{A} \!\!\!\!&= \{k\in \mathbb{B}|k\notin \mathbb{A}\}
	\end{aligned}
\end{equation}
for S-matrices with tuple identifiers $\mathbb{A}$ and $\mathbb{B}$ we define a \textit{combining operation}:
\begin{equation}	\label{Eq089}
	{S^{[\mathbb{A}]}}\circledast {S^{[\mathbb{B}]}} := {S^{[\mathbb{A}\backslash \mathbb{B},\mathbb{B}\backslash \mathbb{A}]}} = {S^{[\mathbb{A}\ominus \mathbb{B}]}}
\end{equation}
At the same time ${S^{[\mathbb{J},\mathbb{K}]}}\sim {S^{[\mathbb{A}]}}$ and ${S^{[\mathbb{K},\mathbb{L}]}}\sim {S^{[\mathbb{B}]}}$ (may differ from each other by permutation of rows and columns). The operation is not binary, since, in addition to the elements of the matrices $S^{[\mathbb{J},\mathbb{K}]}$ and $S^{[\mathbb{K},\mathbb{L}]}$, (\ref{Eq071}) contains elements $U^{[\mathbb{J},\mathbb{K}]\mathbb{K}\mathbb{K}}$, $U^{[\mathbb{K},\mathbb{L}]\mathbb{K}\mathbb{K}}$ and $\exp \left( -i{K^{\mathbb{K}\mathbb{K}}}{A^{\mathbb{K}\mathbb{K}}} \right)$. In this work, they are omitted for short.

The location of the branch boundaries in the network can be chosen arbitrarily, in particular so that
\begin{equation}	\label{Eq090}
	\{a^k = 0\}^{k\in \mathbb{I}}
\end{equation}
Then according to expressions (\ref{Eq073}), (\ref{Eq071}) and (\ref{Eq090}) we have
\begin{strip}
\begin{equation}	\label{Eq091}
	\begin{aligned}
		{S^{[\mathbb{J},\mathbb{L}]}} =&
		\left[ \begin{matrix}
		{S^{[\mathbb{J},\mathbb{K}]\mathbb{J}\mathbb{J}}} & {O^{\mathbb{J}\mathbb{L}}} \\
		{O^{\mathbb{L}\mathbb{J}}} & {S^{[\mathbb{K},\mathbb{L}]\mathbb{L}\mathbb{L}}} \\
		\end{matrix} \right]+
		\left[ \begin{matrix}
		{S^{[\mathbb{J},\mathbb{K}]\mathbb{J}\mathbb{K}}} & {O^{\mathbb{J}\mathbb{K}}} \\
		{O^{\mathbb{L}\mathbb{K}}} & {S^{[\mathbb{K},\mathbb{L}]\mathbb{L}\mathbb{K}}} \\
		\end{matrix} \right]
		{{\left[ \begin{matrix}
		-{S^{[\mathbb{J},\mathbb{K}]\mathbb{K}\mathbb{K}}} & {U^{[\mathbb{J},\mathbb{K}]\mathbb{K}\mathbb{K}}} \\
		{U^{[\mathbb{K},\mathbb{L}]\mathbb{K}\mathbb{K}}} & -{S^{[\mathbb{K},\mathbb{L}]\mathbb{K}\mathbb{K}}} \\
		\end{matrix} \right]}^{-1}}
		\left[ \begin{matrix}
		{S^{[\mathbb{J},\mathbb{K}]\mathbb{K}\mathbb{J}}} & {O^{\mathbb{K}\mathbb{L}}} \\
		{O^{\mathbb{K}\mathbb{J}}} & {S^{[\mathbb{K},\mathbb{L}]\mathbb{K}\mathbb{L}}} \\
		\end{matrix} \right]
	\end{aligned}
\end{equation}
\end{strip}
There is no element $\exp \left( -i{K^{\mathbb{K}\mathbb{K}}}{A^{\mathbb{K}\mathbb{K}}} \right)$ in formula (\ref{Eq091}). Therefore, in the case of a “branchless” network (\ref{Eq090}), the combining operation (\ref{Eq089}) is simplified. Formally, it is convenient. However, branch lengths are parameters that affect the S-matrix of the network. Their explicit accounting is fundamentally important when designing networks with predetermined transport properties.

Using definition (\ref{Eq089}) we write the \textit{network combining formula} (\textit{NCF})~--- an expression for the S-matrix of an arbitrary quantum network in terms of its junctions' S-matrices:
\begin{equation}	\label{Eq092}
	{S^{[\mathbb{E}]}} = {{\circledast}^{\mathbb{A}\in \mathcal{N}}}{S^{[\mathbb{A}]}}
\end{equation}
Here, the tuple $\mathcal{N}$ also sets the order of combining network junctions by formula (\ref{Eq089}).

Since formula (\ref{Eq092}) is written in terms of extended scattering matrices, it allows correctly taking into account closed channels in the network and, therefore, tunnel effects between junctions. Due to the agreements and notation of section \ref{sec:2_1}, formula (\ref{Eq092}) is convenient both in analytical and numerical calculations. In the particular case, the result obtained using it coincides with the previously one obtained in the literature (Appendix \ref{sec:D_2}).

\section{Electron transport in quantum network}	\label{sec:3}

The approach described in section \ref{sec:2} enables efficient calculation of the scattering properties of semiconductor nanostructures using quantum network model. In section \ref{sec:3}, we will show how the Landauer--B\"uttiker formalism \cite{Bib017,Bib018} is used to find electric currents through the network as part of a calculation scheme suggested in this work.

\subsection{Probability currents and S-matrix}	\label{sec:3_1}

\begin{equation}	\label{Eq093}
	\{\ {{\square}^{[\mathbb{A}]}}\ \mapsto \ \square \ |\ \square \ = \psi,\iota,...\}\quad \operatorname{in}\ \text{\ref{sec:3_1}}
\end{equation}

The Landauer--B\"uttiker formalism uses the scattering probabilities of charge carriers. The association of scattering probabilities with the S-matrix of the quantum network can be established based on expressions for probability currents in the branches.

\subsubsection{Probability currents}	\label{sec:3_1_1}

In terms of the stationary problem (\ref{Eq011}), dimensionless incident and scattered currents in the branch-channel $_m^k$ take the form
\begin{equation}	\label{Eq094}
	\begin{aligned}
		&\iota _m^{\triangleleft k} := \frac{i}{2}\int_{{{\beta}^k}}{ds\left( \psi _m^{\triangleleft k}{{\partial}_1}\bar{\psi}_m^{\triangleleft k}-\bar{\psi}_m^{\triangleleft k}{{\partial}_1}\psi _m^{\triangleleft k} \right)},	\\
		&\iota _m^{\triangleright k} := \frac{i}{2}\int_{{{\beta}^k}}{ds\left( \psi _m^{\triangleright k}{{\partial}_1}\bar{\psi}_m^{\triangleright k}-\bar{\psi}_m^{\triangleright k}{{\partial}_1}\psi _m^{\triangleright k} \right)}
	\end{aligned}
\end{equation}
By rewriting the expressions (\ref{Eq037}) as
\begin{equation}	\label{Eq095}
	\begin{aligned}
		&\psi _m^{\triangleleft k}(x,y,z) = c_m^{\triangleleft k}\exp \left( -i\kappa _m^kx \right)h_m^k(y,z),	\\
		&\psi _m^{\triangleright k}(x,y,z) = c_m^{\triangleright k}\exp \left( +i\kappa _m^kx \right)h_m^k(y,z)
	\end{aligned}
\end{equation}
from definitions (\ref{Eq094}), we get
\begin{equation}	\label{Eq096}
	\iota _m^{\triangleleft k} = -\kappa _m^k{{| c_m^{\triangleleft k} |}^2}\cdot [_m^k\in \mathbb{O}],\quad \iota _m^{\triangleright k} = +\kappa _m^k{{| c_m^{\triangleright k} |}^2}\cdot [_m^k\in \mathbb{O}]
\end{equation}
where $\mathbb{O} := \{_m^k|\ \lambda _m^k<\varepsilon \}$ is a tuple of numbers of \textit{open branch-channels}, $\bar{\mathbb{O}} = \{_m^k|\ \lambda _m^k\ge \varepsilon \}$ is a tuple of numbers of \textit{closed branch-channels}.

Assuming that the currents in the closed channels are zero, for the total current in the branch-channel $_m^k$ we have
\begin{equation}	\label{Eq097}
	\iota _m^k = \iota _m^{\triangleleft k}+\iota _m^{\triangleright k}
\end{equation}
When calculating the currents in the branch ${\Omega}^k$, one can show that the currents in the channels are additive. Then taking into account the expressions (\ref{Eq096}) for incident and scattered currents we have
\begin{equation}	\label{Eq098}
	\begin{aligned}
	{{\iota}^{\triangleleft}} = \sum\nolimits_{}^k{{{\iota}^{\triangleleft k}}} = \sum\nolimits_m^k{\iota _m^{\triangleleft k}} = -\sum\nolimits_m^k{[_m^k\in \mathbb{O}]\cdot \kappa _m^k{{| c_m^{\triangleleft k} |}^2}} \\
		{{\iota}^{\triangleright}} = \sum\nolimits_{}^k{{{\iota}^{\triangleright k}}} = \sum\nolimits_m^k{\iota _m^{\triangleright k}} = +\sum\nolimits_m^k{[_m^k\in \mathbb{O}]\cdot \kappa _m^k{{| c_m^{\triangleright k} |}^2}}
	\end{aligned}
\end{equation}
and the total current takes a form
\begin{equation}	\label{Eq099}
	\iota := \sum\nolimits_{}^k{{{\iota}^k}} = \sum\nolimits_m^k{\iota _m^k} = \sum\nolimits_m^k{\left( \iota _m^{\triangleleft k}+\iota _m^{\triangleright k} \right)}
\end{equation}

\subsubsection{Structure of operator $K$}	\label{sec:3_1_2}

The channels separated on open and closed ones (\ref{Eq029}) sets the structure for operators introduced in section \ref{sec:2}. In particular, the operator $K$ in the representation (\ref{Eq032}) takes the form
\begin{equation}	\label{Eq100}
	K = \left[ \begin{matrix}
	{K_{\mathbb{O}\mathbb{O}}} & {O_{\mathbb{O}\bar{\mathbb{O}}}} \\
		{O_{\bar{\mathbb{O}}\mathbb{O}}} & {K_{\bar{\mathbb{O}}\bar{\mathbb{O}}}} \\
		\end{matrix} \right]
\end{equation}
The operator $K$ can also be written as a sum of Hermitian and anti-Hermitian operators:
\begin{equation}	\label{Eq101}
	K = {K_{++}}+{K_{--}}
\end{equation}
\begin{equation}	\label{Eq102}
	\begin{aligned}
	{K_{++}} := \tfrac{1}{2}\left( K+\bar{K} \right),\quad {{{\bar{K}}}_{++}} = +{K_{++}} \\
		{K_{--}} := \tfrac{1}{2}\left( K-\bar{K} \right),\quad {{{\bar{K}}}_{--}} = -{K_{--}}
	\end{aligned}
\end{equation}
According to definitions (\ref{Eq102}), in the representation (\ref{Eq034}) we get
\begin{equation}	\label{Eq103}
	\begin{aligned}
		&K_{++}^{kl} = {I^{kl}}\sum\nolimits_n{[_n^k\in \mathbb{O}]\cdot | h_n^k \rangle \kappa _n^k\langle h_n^k |},	\\
		&K_{--}^{kl} = {I^{kl}}\sum\nolimits_n{[_n^k\in \bar{\mathbb{O}}]\cdot | h_n^k \rangle \kappa _n^k\langle h_n^k |}
	\end{aligned}
\end{equation}
Then accurate to the representation and expansion from expressions (\ref{Eq100}), (\ref{Eq101}), (\ref{Eq103}) follows
\begin{equation}	\label{Eq104}
	{K_{++}} = {K_{\mathbb{O}\mathbb{O}}},\quad {K_{--}} = {K_{\bar{\mathbb{O}}\bar{\mathbb{O}}}}
\end{equation}
which is completely compliance with the notation used in the literature
\begin{equation}	\label{Eq105}
	\forall f\quad {f_+} := {f_{\mathbb{O}}},\quad {f_-} := {f_{{\bar{\mathbb{O}}}}}
\end{equation}
At that the expression (\ref{Eq100}) takes the form
\begin{equation}	\label{Eq106}
	K = \left[ \begin{matrix}
	{K_{++}} & {O_{+-}} \\
		{O_{-+}} & {K_{--}} \\
		\end{matrix} \right]
\end{equation}

\subsubsection{Current conservation and S-matrix}	\label{sec:3_1_3}

Based on the structure of the form (\ref{Eq106}) set for operators, we can write the law of current conservation in terms of the S-matrix of the quantum network.

\begin{prop}	\label{prop:7}
 The S-matrix of the quantum network has the property
\begin{equation}	\label{Eq107}
	{{\bar{S}}_{++}}{K_{++}}{S_{++}}-{K_{++}} = {O_{++}}
\end{equation}
where $S_{++}$ is the S-matrix submatrix that relates wave amplitudes in open channels with each other:
\begin{equation}	\label{Eq108}
	{c^{\triangleright}} = \left[ \begin{matrix}
	c_+^{\triangleright} \\
		c_-^{\triangleright} \\
		\end{matrix} \right] = \left[ \begin{matrix}
	{S_{++}} & {S_{+-}} \\
		{S_{-+}} & {S_{--}} \\
		\end{matrix} \right]\left[ \begin{matrix}
	c_+^{\triangleleft} \\
		c_-^{\triangleleft} \\
		\end{matrix} \right] = S{c^{\triangleleft}}
\end{equation}
\end{prop}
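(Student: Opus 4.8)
The plan is to derive (\ref{Eq107}) from conservation of probability current for the scattering state of the whole network, regarded as the single junction $\Omega^{[\mathbb{E}]}$ with boundary $\Gamma^{[\mathbb{E}]}$ formed by the cross sections $x=0$ of the external branches, where $\Psi$ solves a problem of the form (\ref{Eq057}). First I would note that, since the external branches are semi-infinite, boundedness of $\Psi$ at infinity forces the incoming evanescent amplitudes to vanish, $c^{\triangleleft}_- = 0$; by (\ref{Eq108}) this gives $c^{\triangleright}_+ = S_{++}c^{\triangleleft}_+$, and every closed channel then carries only the decaying scattered wave, hence zero current, consistent with the convention used before (\ref{Eq097}). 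The open-channel incident amplitudes $c^{\triangleleft}_+$ remain completely arbitrary, since each such incident state produces an admissible scattering solution and the $S$-matrix does not depend on it.

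Next I would establish $\iota^{\triangleleft}+\iota^{\triangleright}=0$ by Green's second identity applied to $\Psi$ and $\bar\Psi$ on $\Omega^{[\mathbb{E}]}$. The volume term $\int(\bar\Psi\Delta\Psi-\Psi\Delta\bar\Psi)$ vanishes because $(-\Delta+\upsilon)\Psi=\varepsilon\Psi$ with $\upsilon$ and $\varepsilon$ real, so that $\bar\Psi\Delta\Psi-\Psi\Delta\bar\Psi=(\upsilon-\varepsilon)(\bar\Psi\Psi-\Psi\bar\Psi)=0$; the surface term over $\partial\Omega^{[\mathbb{E}]}\backslash\Gamma^{[\mathbb{E}]}$ vanishes by the homogeneous Dirichlet condition $\Psi=0$ there. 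Hence the boundary flux $\tfrac{i}{2}\int_{\Gamma^{[\mathbb{E}]}}(\Psi\partial_1\bar\Psi-\bar\Psi\partial_1\Psi)$ vanishes. Expanding $\Psi$ on $\Gamma^{[\mathbb{E}]}$ via (\ref{Eq027})--(\ref{Eq037}), using orthonormality (\ref{Eq030}) of the transverse modes to kill the cross-channel terms, and observing that in each open channel the incident--scattered cross term drops out (oppositely running waves), this flux equals $\iota^{\triangleleft}+\iota^{\triangleright}$ as defined in (\ref{Eq094})--(\ref{Eq098}); therefore $\iota^{\triangleleft}+\iota^{\triangleright}=0$, i.e.\ (\ref{Eq099}) with $\iota=0$.

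Then I would translate this into matrix form. By (\ref{Eq096}), (\ref{Eq098}) together with the block structure (\ref{Eq106}) and $K_{mn}^{kl}=\kappa_m^k I_{mn}^{kl}$, the sums over open branch-channels are exactly the quadratic forms of $K_{++}$, so $\iota^{\triangleleft}=-\langle c^{\triangleleft}_+\,|\,K_{++}\,c^{\triangleleft}_+\rangle$ and $\iota^{\triangleright}=+\langle c^{\triangleright}_+\,|\,K_{++}\,c^{\triangleright}_+\rangle=\langle c^{\triangleleft}_+\,|\,\bar S_{++}K_{++}S_{++}\,c^{\triangleleft}_+\rangle$ after substituting $c^{\triangleright}_+=S_{++}c^{\triangleleft}_+$. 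Adding, the relation $\iota^{\triangleleft}+\iota^{\triangleright}=0$ becomes $\langle c^{\triangleleft}_+\,|\,(\bar S_{++}K_{++}S_{++}-K_{++})\,c^{\triangleleft}_+\rangle=0$ for every vector $c^{\triangleleft}_+$. Since the open-channel wavenumbers $\kappa_m^k$ are real, $K_{++}$ is Hermitian, hence $M:=\bar S_{++}K_{++}S_{++}-K_{++}$ is Hermitian, and a Hermitian operator whose quadratic form vanishes on all vectors is the zero operator (polarization identity); therefore $M=O_{++}$, which is (\ref{Eq107}).

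I expect the main obstacle to be the bookkeeping in the Green's-identity step: matching the outward normal on $\Gamma^{[\mathbb{E}]}$ with the local $x$-direction $\partial_1$ of each external branch, verifying the cancellation of the incident--scattered cross terms and of the inter-channel terms, and thereby identifying the boundary flux with exactly $\iota^{\triangleleft}+\iota^{\triangleright}$ as normalized in (\ref{Eq094})--(\ref{Eq098}); the remaining algebra (rewriting the flux balance as a vanishing Hermitian quadratic form and concluding $M=O_{++}$) is routine.
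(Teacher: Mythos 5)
Your proof is correct and follows essentially the same route as the paper: current conservation restricted to the open channels, written as a vanishing quadratic form $\langle c_+^{\triangleleft}|(\bar S_{++}K_{++}S_{++}-K_{++})|c_+^{\triangleleft}\rangle=0$ for arbitrary $c_+^{\triangleleft}$ with the closed-channel incident amplitudes suppressed, which forces (\ref{Eq107}). The only difference is that you derive the premise $\iota=0$ from Green's identity and justify $c_-^{\triangleleft}=0$ by boundedness in the semi-infinite external branches, whereas the paper asserts elastic-scattering current conservation (\ref{Eq109}) outright and sets $c_-^{\triangleleft}=0$ by appeal to the independence of $S$ from the incident data; these are refinements of individual steps, not a different argument.
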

\begin{proof}[\textsc{Proof~\ref{prop:7}}]
 With elastic scattering, the total current (\ref{Eq099}) is zero:
\begin{equation}	\label{Eq109}
	\forall c\quad \iota = 0
\end{equation}
From expressions (\ref{Eq096}), (\ref{Eq097}), (\ref{Eq109}), (\ref{Eq106}) follows
\begin{equation}	\label{Eq110}
	\langle c_+^{\triangleright} |{K_{++}}| c_+^{\triangleright} \rangle -\langle c_+^{\triangleleft} |{K_{++}}| c_+^{\triangleleft} \rangle = 0
\end{equation}
According to notation (\ref{Eq105}), the structure of the S-matrix (\ref{Eq038}) is similar to the structure (\ref{Eq106}) and can be written as (\ref{Eq108}). Taking into account agreements \hyperref[agr:IV]{IV} and \hyperref[agr:VIII]{VIII}, we have
\begin{equation}	\label{Eq111}
	{{\bar{\square}}_{\mathbb{K}\mathbb{L}}} = \overline{{{\square}_{\mathbb{L}\mathbb{K}}}}
\end{equation}
where $\square $ is a complex matrix. Then from expressions (\ref{Eq110}), (\ref{Eq108}) and (\ref{Eq111}) we get
\begin{equation}	\label{Eq112}
	\begin{aligned}
		0 =& \langle c_+^{\triangleleft} | \!\left( {\bar{S}}_{++}K_{++}S_{++} \!-\! K_{++} \right)\! | c_+^{\triangleleft} \rangle \!+\! \langle c_+^{\triangleleft} |{\bar{S}}_{++}K_{++}S_{+-}| c_-^{\triangleleft} \rangle \\
		&+ \langle c_-^{\triangleleft} |{\bar{S}}_{-+}K_{++}S_{++}| c_+^{\triangleleft} \rangle + \langle c_-^{\triangleleft} |{\bar{S}}_{-+}K_{++}S_{+-}| c_-^{\triangleleft} \rangle
	\end{aligned}
\end{equation}
Since the S-matrix does not depend on the incident waves, we set $c_-^{\triangleleft} = 0$. Then, according to the formula (\ref{Eq112}), the property (\ref{Eq107}) is executed for the submatrix $S_{++}$.	
\end{proof}

\subsubsection{Extended current scattering matrix}	\label{sec:3_1_4}

Property (\ref{Eq107}) can be formulated in terms of a unitary matrix. We write expressions (\ref{Eq096}) as
\begin{equation}	\label{Eq113}
	\begin{aligned}
	\iota _m^{\triangleleft k} = -{{| d_m^{\triangleleft k} |}^2}\cdot [_m^k\in \mathbb{O}],\quad {d^{\triangleleft}} := {K^{1/2}}{c^{\triangleleft}} \\
		\iota _m^{\triangleright k} = +{{| d_m^{\triangleright k} |}^2}\cdot [_m^k\in \mathbb{O}],\quad {d^{\triangleright}} := {K^{1/2}}{c^{\triangleright}}
	\end{aligned}
\end{equation}
where $d^{\triangleleft}$ and $d^{\triangleright}$ are incident and scattered \textit{current amplitudes} respectively. The relation between them has the same form as between wave amplitudes (\ref{Eq038}):
\begin{equation}	\label{Eq114}
	{d^{\triangleright}} = :C{d^{\triangleleft}}\quad \Leftrightarrow \quad d_m^{\triangleright k} = :\sum\nolimits_n^l{C_{mn}^{kl}d_n^{\triangleleft l}}
\end{equation}
where $C$ is an \textit{extended current scattering matrix} (combining the concepts of extended \cite[p.~155]{Bib027} and current \cite{Bib025} scattering matrices). From expressions (\ref{Eq113}) and (\ref{Eq114}) for currents we have
\begin{equation}	\label{Eq115}
	\iota _m^{\triangleright k} = \overline{[C{d^{\triangleleft}}]_m^k}[C{d^{\triangleleft}}]_m^k\cdot [_m^k\in \mathbb{O}]
\end{equation}
From expressions (\ref{Eq113}), (\ref{Eq114}) and (\ref{Eq038}), we can find the relation of matrix $C$ with matrix $S$:
\begin{equation}	\label{Eq116}
	C = {K^{+1/2}}S{K^{-1/2}}\quad \Leftrightarrow \quad S = {K^{-1/2}}C{K^{+1/2}}
\end{equation}
The distinction between the concepts of a matrix relating wave amplitudes and a matrix relating current amplitudes avoids ambiguity in the interpretation of analytical and numerical calculations.

Taking into account structure (\ref{Eq106}) from formula (\ref{Eq116}) we have
\begin{equation}	\label{Eq117}
	\left[ \begin{matrix}
	{S_{++}} & {S_{+-}} \\
		{S_{-+}} & {S_{--}} \\
		\end{matrix} \right] = \left[ \begin{matrix}
	K_{++}^{-1/2}{C_{++}}K_{++}^{+1/2} & K_{++}^{-1/2}{C_{+-}}K_{--}^{+1/2} \\
		K_{--}^{-1/2}{C_{-+}}K_{++}^{+1/2} & K_{--}^{-1/2}{C_{--}}K_{--}^{+1/2} \\
		\end{matrix} \right]
\end{equation}
According to expression (\ref{Eq117}), the property (\ref{Eq107}) takes the form
\begin{equation}	\label{Eq118}
	{{\bar{C}}_{++}}{C_{++}} = {I_{++}} = {C_{++}}{{\bar{C}}_{++}}
\end{equation}

Thus, the unitary is a submatrix of an extended current scattering matrix that relates current amplitudes in open channels with each other~--- the \textit{current scattering matrix} $C_{++}$. According to property (\ref{Eq118}), the squares of the modules of its elements have a probabilistic interpretation.

\subsection{Electrical properties}	\label{sec:3_2}

\begin{equation}	\label{Eq119}
	\{\ {{\square}^{[\mathbb{E}]}}\ \mapsto \ \square \ |\ \square \ = J,P,...\}\quad \operatorname{in}\ \text{\ref{sec:3_2}}
\end{equation}

Based on the scattering probabilities of charge carriers in a quantum network, it is possible to find its electrical properties in the framework of the Landauer--B\"uttiker formalism. For this purpose, we will calculate electric currents in terms of average values from the ensemble of charge carriers in the external branches of the network. We will assume that they are smoothly connected to the reservoirs of charge carriers. Then when calculating, we can only take into account the scattering properties of the network (section \ref{sec:2}).

\subsubsection{Currents in branch-channels}	\label{sec:3_2_1}

The difference in voltages on the reservoirs leads to the appearance of currents through the quantum network. Since the external branches of the network are long, we neglect the effects of barrier tunneling in them. This means taking into account only open channels. For current in the $k$-th branch we have:
\begin{equation}	\label{Eq120}
	{J^k} = {J^{\triangleleft k}}+{J^{\triangleright k}}
\end{equation}
where $J^{\triangleleft k}$ is current incident on the network from the $k$-th branch, $J^{\triangleright k}$ is current scattered by the network to the $k$-th branch. According to formula (\ref{Eq098}), the currents in the channels are additive:
\begin{equation}	\label{Eq121}
	{J^{\triangleleft k}} = \sum\nolimits_m{J_m^{\triangleleft k}},\quad {J^{\triangleright k}} = \sum\nolimits_m{J_m^{\triangleright k}}
\end{equation}
where $J_m^{\triangleleft k}$ is current incident on the network from the $_m^k$-th branch-channel, $J_m^{\triangleright k}$ is current scattered by the network to the $_m^k$-th branch-channel.

\begin{prop}	\label{prop:8}
 The incident and scattered currents in the $_m^k$-th branch-channel are calculated as
\begin{equation}	\label{Eq122}
	J_m^{\triangleleft k} = \int_{-\infty}^{+\infty}{dEj_m^{\triangleleft k}\left( E \right)},\quad J_m^{\triangleright k} = \int_{-\infty}^{+\infty}{dEj_m^{\triangleright k}\left( E \right)}
\end{equation}
\begin{equation}	\label{Eq123}
	\begin{aligned}
		&j_m^{\triangleleft k}\left( E \right) := -\frac{e}{\pi \hbar}{f^k}\left( E \right)[E_{\bot m}^k<E],	\\
		&j_m^{\triangleright k}\left( E \right) := -\sum\nolimits_n^l{P_{mn}^{kl}\left( E \right)j_n^{\triangleleft l}\left( E \right)}
	\end{aligned}
\end{equation}
\begin{equation}	\label{Eq124}
	{f^k}\left( E \right) := \frac{1}{\exp [(E-E_{\text{F}}^k)/({k_0}{T^k})]+1}
\end{equation}
\begin{equation}	\label{Eq125}
	P_{mn}^{kl}\left( E \right) := [E_{\bot m}^k<E]{{| C_{mn}^{kl}\left( 2m{L^2}{{\hbar}^{-2}}E \right) |}^2}[E>E_{\bot n}^l]
\end{equation}
where $e$ is the charge of the charge carrier, $f^k$ is the distribution function in the $k$-th branch (reservoir), $E_{\text{F}}^k$ is the Fermi level in the $k$-th branch (reservoir), $P = \{P_{mn}^{kl}\}_{mn}^{kl}$ is the \textit{matrix of scattering probabilities}.

\end{prop}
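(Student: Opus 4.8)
The plan is to promote the dimensionless probability currents of Section~\ref{sec:3_1} to physical electric currents carried by the carriers injected from the reservoirs, and then to integrate over the carrier energy. I would start by collecting the ingredients already at hand: in a branch-channel $_m^k$ the dimensionless incident and scattered probability currents are $\iota_m^{\triangleleft k}=-|d_m^{\triangleleft k}|^2[_m^k\in\mathbb{O}]$ and $\iota_m^{\triangleright k}=+|d_m^{\triangleright k}|^2[_m^k\in\mathbb{O}]$ with $d=K^{1/2}c$ (Eq.~(\ref{Eq113})); the scattered current amplitudes follow from the incident ones through the extended current scattering matrix, $d^{\triangleright}=Cd^{\triangleleft}$ (Eqs.~(\ref{Eq114})--(\ref{Eq116})), so that $\iota_m^{\triangleright k}=\overline{[Cd^{\triangleleft}]_m^k}\,[Cd^{\triangleleft}]_m^k\,[_m^k\in\mathbb{O}]$ (Eq.~(\ref{Eq115})); and currents add over channels and over the two propagation directions (Eqs.~(\ref{Eq097})--(\ref{Eq099})). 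The open-channel condition $\lambda_m^k<\varepsilon$ becomes, after the dimensionalization $\varepsilon=2mL^2\hbar^{-2}E$, $\lambda_m^k=2mL^2\hbar^{-2}E_{\bot m}^k$ of Eq.~(\ref{Eq013}), the energy threshold $E_{\bot m}^k<E$; this is the source of all the Iverson brackets in (\ref{Eq123})--(\ref{Eq125}) and of the argument $2mL^2\hbar^{-2}E$ of $C$ in (\ref{Eq125}).

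Next I would construct the \emph{incident} spectral current. The $k$-th external branch is smoothly connected to an equilibrium reservoir, so its open channels are populated according to the Fermi--Dirac distribution $f^k(E)$ of Eq.~(\ref{Eq124}). Summing the single-particle electric current $e v_{g}$ over the occupied one-dimensional $k$-states of channel $m$ travelling toward the network, passing to the continuum with $\sum_{k}\to\frac{\mathcal{L}}{2\pi}\int dk$ and $dk=dE/(\hbar v_{g})$, and including the spin-degeneracy factor $2$, the group velocity cancels the one-dimensional density of states. This leaves the universal prefactor $e/(\pi\hbar)$ and gives $j_m^{\triangleleft k}(E)=-\frac{e}{\pi\hbar}f^k(E)[E_{\bot m}^k<E]$, the overall sign fixed by the direction convention of Eq.~(\ref{Eq096}); integration over $E$ then yields the first formula in (\ref{Eq122}).

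For the \emph{scattered} spectral current I would work at fixed energy, where $C=C(2mL^2\hbar^{-2}E)$ maps the incident current amplitudes to the scattered ones. Expanding $\iota_m^{\triangleright k}=\sum_{n,l,n',l'}C_{mn}^{kl}\,\overline{C_{mn'}^{kl'}}\,d_n^{\triangleleft l}\,\overline{d_{n'}^{\triangleleft l'}}\,[_m^k\in\mathbb{O}]$ and invoking the mutual incoherence of carriers emitted by distinct reservoir channels (independent, phase-randomizing reservoirs), the cross terms average to zero and only $\sum_{n,l}|C_{mn}^{kl}|^2|d_n^{\triangleleft l}|^2$ survives, that is $\iota_m^{\triangleright k}=-\sum_{n,l}|C_{mn}^{kl}|^2\,\iota_n^{\triangleleft l}$. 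Applying the same $e/(\pi\hbar)$ conversion to each incoherent incident stream and inserting the open-channel projectors on both the outgoing index $_m^k$ and the incoming index $_n^l$ produces $j_m^{\triangleright k}(E)=-\sum_{n,l}P_{mn}^{kl}(E)\,j_n^{\triangleleft l}(E)$ with $P$ exactly as in (\ref{Eq125}); integration over $E$ yields the second formula in (\ref{Eq122}).

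The main obstacle is precisely this passage from amplitudes to probabilities: one must justify that the group velocity cancels the one-dimensional density of states so that the energy-independent factor $e/(\pi\hbar)$ emerges, and that the incoherence of the reservoirs kills the interference cross terms in $|[Cd^{\triangleleft}]_m^k|^2$, thereby replacing the linear amplitude relation $d^{\triangleright}=Cd^{\triangleleft}$ by the quadratic probability relation governed by $P=|C|^2$. Once these two physical inputs are granted, the remainder is routine bookkeeping with the thresholds $E_{\bot m}^k<E$ and the substitution $\varepsilon\leftrightarrow 2mL^2\hbar^{-2}E$.
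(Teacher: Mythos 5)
Your proposal is correct and follows essentially the same route as the paper: the paper likewise obtains the incident current as a statistical average of $e\,p_x/m$ over the reservoir-populated one-dimensional states of each channel (Eqs.~(\ref{Eq126})--(\ref{Eq134})), with the group velocity cancelling the one-dimensional density of states to give the $e/(\pi\hbar)$ prefactor, and then writes the scattered current as the probability-weighted sum of incident currents using the interpretation of $|C_{mn}^{kl}|^2$ established in Section~\ref{sec:3_1}. The only difference is cosmetic: you spell out the reservoir-incoherence argument that kills the cross terms in $|[Cd^{\triangleleft}]_m^k|^2$, which the paper leaves implicit behind the phrase ``probabilistic interpretation.''
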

\begin{proof}[\textsc{Proof~\ref{prop:8}}]
 Outside the domains of connection with reservoirs, branches are translational invariant. Since structurally the branch is a quantum wire, the dispersion law of charge carriers in the energy frame of section \ref{sec:2_2} takes the form:
\begin{equation}	\label{Eq126}
	E_M^k = E_{snp}^k = E_{\bot n}^k+\frac{p_x^2}{2m}
\end{equation}
\begin{equation}	\label{Eq127}
	E_{\bot n}^k = {{\hbar}^2}/(2m{L^2})\cdot \lambda _n^k
\end{equation}
where $M = \{snp\}$ is a multi-index (set of quantum numbers), uniquely identifying the state, $s = 1,2$ is a spin quantum number, $n$ is a quantum number corresponding to size-quantization, $p$ is a quasi-momentum, $E_{\bot n}^k$ is the dimensional energy of the $_n^k$-th branch-channel, $p_x$ is a projection of the quasi-momentum along the branch. The density of single-particle states in the $k$-th branch is written as
\begin{equation}	\label{Eq128}
	{g^k}\left( E \right) := \sum\limits_M{\delta \left( E_M^k-E \right)}
\end{equation}
where $\delta $ is the Dirac delta function. Replacing in definition (\ref{Eq128}) from summation by $p_x$ to integration taking into account formula (\ref{Eq126}), we have
\begin{equation}	\label{Eq129}
	{g^k} = \sum\nolimits_n{g_n^k}
\end{equation}
\begin{equation}	\label{Eq130}
	g_n^k\left( E \right) := \frac{L_x^k}{\pi \hbar}\int_{-\infty}^{+\infty}{d{p_x}\delta \left( E_{\bot n}^k+\frac{p_x^2}{2m}-E \right)}
\end{equation}
where $g_n^k$ is density of states in the $_n^k$-th branch-channel, $L_x^k$ is length of the $k$-th branch. Based on $g_n^k$ one can find the average value of the physical quantity $Q$ in the $_n^k$-th branch-channel:
\begin{equation}	\label{Eq131}
	\langle Q \rangle _n^k = \frac{1}{N_n^k}\int_{-\infty}^{+\infty}{dE{f^k}\left( E \right)g_n^k\left( E \right)Q}
\end{equation}
\begin{equation}	\label{Eq132}
	N_n^k = \int_{-\infty}^{+\infty}{dE{f^k}\left( E \right)g_n^k\left( E \right)}
\end{equation}
where $N_n^k$ is the number of particles in the $_n^k$-th branch-channel. We can see from definition (\ref{Eq124}), that the distribution functions in the branches differ in the positions of their Fermi levels. Fermi level $E_{\text{F}}^k$ is written as
\begin{equation}	\label{Eq133}
	E_{\text{F}}^k = {E_{\text{F}}}-e{U^k}
\end{equation}
where $E_{\text{F}}$ is the Fermi level of the quantum network in the absence of voltages on the reservoirs, $U^k$ is the voltage applied to the $k$-th reservoir.

We find incident and scattered currents in the $_m^k$-th external branch-channel as the average values (\ref{Eq131}):
\begin{equation}	\label{Eq134}
	\begin{aligned}
		&J_m^{\triangleleft k} = \frac{e}{L_x^k}\langle [{p_x}<0]{p_x}/m \rangle _m^kN_m^k,	\\
		&J_m^{\triangleright k} = \sum\nolimits_n^l{\frac{e}{L_x^l}\langle P_{mn}^{kl}[{p_x}>0]{p_x}/m \rangle _n^lN_n^l}
	\end{aligned}
\end{equation}
As shown in section \ref{sec:3_1}, the squares of the element modules of matrix $C_{++}$ have a probabilistic interpretation. Therefore, taking into account definitions (\ref{Eq013}), we have formula (\ref{Eq125}). From expressions (\ref{Eq134}), (\ref{Eq131}) and (\ref{Eq130}) we obtain formulas (\ref{Eq122}).	
\end{proof}

\subsubsection{Currents in branches and conductivity}	\label{sec:3_2_2}

Using expressions for currents in the branch-channels (\ref{Eq122}) it is possible to find currents in the branches (\ref{Eq120}).

\begin{prop}	\label{prop:9}
 Current in the $k$-th branch is calculated as
\begin{equation}	\label{Eq135}
	{J^k} = \frac{e}{\pi \hbar}\sum\nolimits_{mn}^l{\int_{-\infty}^{+\infty}{dEP_{mn}^{kl}\left( E \right)\left\{ {f^l}\left( E \right)-{f^k}\left( E \right) \right\}}}
\end{equation}
where $\{P_{mn}^{kl}\}_{mn}^{l\ne k}$ are matrix probabilities of transmission into the $k$-th branch (\textit{transparencies}).

\end{prop}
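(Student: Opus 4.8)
The plan is to assemble $J^k$ from the branch-channel currents delivered by Proposition~\ref{prop:8} and then collapse the result using probability conservation. First I would combine (\ref{Eq120}), (\ref{Eq121}) and (\ref{Eq122}) to write $J^k=\sum_m\int_{-\infty}^{+\infty}dE\,\bigl(j_m^{\triangleleft k}(E)+j_m^{\triangleright k}(E)\bigr)$, so that the whole statement reduces to identifying the integrand $\sum_m\bigl(j_m^{\triangleleft k}+j_m^{\triangleright k}\bigr)$. The scattered part is immediate: substituting the definition of $j_m^{\triangleright k}$ from (\ref{Eq123}) and then the definition of $j_n^{\triangleleft l}$ from (\ref{Eq123}) gives $\sum_m j_m^{\triangleright k}(E)=\frac{e}{\pi\hbar}\sum_{mn}^{l}P_{mn}^{kl}(E)f^l(E)[E_{\bot n}^l<E]$, and since $P_{mn}^{kl}$ already carries the factor $[E>E_{\bot n}^l]$ by (\ref{Eq125}), the redundant indicator drops and $\sum_m j_m^{\triangleright k}(E)=\frac{e}{\pi\hbar}\sum_{mn}^{l}P_{mn}^{kl}(E)f^l(E)$.

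The incident part needs one extra idea. From (\ref{Eq123}) it equals $-\frac{e}{\pi\hbar}f^k(E)\sum_m[E_{\bot m}^k<E]$, i.e.\ minus the Fermi occupation of branch $k$ times the number of channels open there at energy $E$. To convert the channel count into a sum over $P$ I would invoke the unitarity of the current scattering matrix, property (\ref{Eq118}): for any open branch-channel $_m^k$ the diagonal entry of $C_{++}\bar C_{++}$ equals $1$, which after using (\ref{Eq111}) reads $\sum_n^l|C_{mn}^{kl}|^2=1$ with the sum taken over open channels $_n^l$. Because the indicator $[E_{\bot m}^k<E]$ is exactly the condition that $_m^k$ be open, this is the same as $\sum_n^l P_{mn}^{kl}(E)=[E_{\bot m}^k<E]$ for $P$ defined by (\ref{Eq125}) (for a closed $_m^k$ both sides vanish). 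Summing over $m$ yields $\sum_m[E_{\bot m}^k<E]=\sum_{mn}^{l}P_{mn}^{kl}(E)$, hence $\sum_m j_m^{\triangleleft k}(E)=-\frac{e}{\pi\hbar}f^k(E)\sum_{mn}^{l}P_{mn}^{kl}(E)$.

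Adding the two contributions gives the integrand $\frac{e}{\pi\hbar}\sum_{mn}^{l}P_{mn}^{kl}(E)\{f^l(E)-f^k(E)\}$, and integrating over $E$ produces (\ref{Eq135}). For the closing remark I would note that the $l=k$ terms contribute $f^k(E)-f^k(E)=0$, so only the cross terms $l\ne k$ --- the transmission probabilities, the transparencies --- actually carry current, which is precisely the Landauer--B\"uttiker picture. I expect the only genuine obstacle to be the middle step: recognizing that the open-channel count is the row sum of the probability matrix, i.e.\ that this is where current conservation (\ref{Eq107})/(\ref{Eq118}) must be used; everything else is bookkeeping with the indicator functions and the definitions of Proposition~\ref{prop:8}.
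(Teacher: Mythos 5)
Your proposal is correct and follows essentially the same route as the paper's proof in Appendix~\ref{sec:E_1}: the same decomposition of $J^k$ into incident and scattered branch-channel currents via Proposition~\ref{prop:8}, and the same key step of using the unitarity (\ref{Eq118}) of $C_{++}$ to turn the open-channel count $\sum_m[E_{\bot m}^k<E]$ into the row sum $\sum_{mn}^{l}P_{mn}^{kl}(E)$. The only difference is organizational --- the paper first separates the $l=k$ and $l\ne k$ terms and derives the intermediate identity (\ref{Eq239}), whereas you keep the full sum over $l$ and observe at the end that the $l=k$ terms cancel --- which does not change the substance of the argument.
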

\begin{proof}[\textsc{Proof~\ref{prop:9}}]
 See Appendix \ref{sec:E_1}.	
\end{proof}

Error of current calculation by formula (\ref{Eq135}) can be estimated by value of total current
\begin{equation}	\label{Eq136}
	J := \sum\nolimits_{}^{k\in \mathbb{E}}{{J^k}}
\end{equation}
Theoretically, it should be zero due to law of charge conservation.

Let us write the expression for currents (\ref{Eq135}) in dimensionless form. It is convenient in calculations together with the proposed calculation scheme of the S-matrix of the quantum network (sections \ref{sec:2_3} and \ref{sec:2_4}). From expressions (\ref{Eq135}) and (\ref{Eq125}) taking into account definitions (\ref{Eq013}) we have
\begin{equation}	\label{Eq137}
	\begin{aligned}
		{\rm I}^k = \sum\nolimits_{mn}^l\int_{-\infty}^{+\infty}&{d\varepsilon [\lambda _m^k<\varepsilon ]{{| C_{mn}^{kl}\left( \varepsilon \right) |}^2}[\varepsilon >\lambda _n^l]}	\\
		&\!\! \times \left\{ F_{-1}( [\varepsilon _{\text{F}}^l-\varepsilon ]/{{\mu}^l} ) - F_{-1}( [\varepsilon _{\text{F}}^k-\varepsilon ]/{{\mu}^k} ) \right\}
	\end{aligned}
\end{equation}
\begin{equation}	\label{Eq138}
	{F_{-1}}\left( \eta \right) = 1/(1+{e^{-\eta}})
\end{equation}
where $F_{-1}$ is the Fermi--Dirac integral of the order $-1$, and the relations of dimensionless quantities with dimensional ones is given by the formulas
\begin{equation}	\label{Eq139}
	{J^k} = :\frac{e\hbar}{2\pi m{L^2}}{{{\rm I}}^k}
\end{equation}
\begin{equation}	\label{Eq140}
	{{\mu}^k} := 2m{L^2}{{\hbar}^{-2}}{k_0}{T^k}
\end{equation}
\begin{prop}	\label{prop:10}
 In case of low temperatures
\begin{equation}	\label{Eq141}
	{{\max}^k}\{{T^k}\}\to 0
\end{equation}
current in $k$-th branch is calculated as
\begin{equation}	\label{Eq142}
	{J^k}\to \frac{e}{\pi \hbar}\sum\nolimits_{mn}^l{\int{dE[E\in (E_{\text{F}}^k,E_{\text{F}}^l)]P_{mn}^{kl}\left( E \right)}}
\end{equation}
where we use agreement \hyperref[agr:XII]{XII}.

\end{prop}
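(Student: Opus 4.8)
The plan is to pass to the limit $\max^k\{T^k\}\to 0$ directly in the current formula (\ref{Eq135}), whose only temperature dependence enters through the Fermi--Dirac distributions $f^k$ of (\ref{Eq124}). The first step is the elementary pointwise limit: as $T^k\to 0$ one has $f^k(E)\to[E<E_{\text{F}}^k]$ for every $E\ne E_{\text{F}}^k$, the exceptional point $E=E_{\text{F}}^k$ forming a null set that does not affect the integral; hence $f^l(E)-f^k(E)\to[E<E_{\text{F}}^l]-[E<E_{\text{F}}^k]$ almost everywhere. Equivalently one may work with the dimensionless form (\ref{Eq137})--(\ref{Eq138}), noting that $F_{-1}(\eta/\mu)\to[\eta>0]$ as $\mu\to 0^{+}$, so that $F_{-1}([\varepsilon_{\text{F}}^k-\varepsilon]/\mu^k)\to[\varepsilon<\varepsilon_{\text{F}}^k]$.

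The second step is to recognise the limiting combination of step functions as the oriented Iverson bracket of agreement \hyperref[agr:XII]{XII}. If $E_{\text{F}}^k\le E_{\text{F}}^l$ then $[E<E_{\text{F}}^l]-[E<E_{\text{F}}^k]=[E\in(E_{\text{F}}^k,E_{\text{F}}^l)]$; if $E_{\text{F}}^k>E_{\text{F}}^l$ then it equals $-[E\in(E_{\text{F}}^l,E_{\text{F}}^k)]$, which by the orientation convention (\ref{Eq009}) is again written $[E\in(E_{\text{F}}^k,E_{\text{F}}^l)]$; and when $E_{\text{F}}^k=E_{\text{F}}^l$ the limit is simply $0$, matching the empty oriented interval. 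So in every case $f^l-f^k\to[E\in(E_{\text{F}}^k,E_{\text{F}}^l)]$ in the sense of agreement \hyperref[agr:XII]{XII}, and inserting this into (\ref{Eq135}) produces exactly (\ref{Eq142}).

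The third step is to justify the interchange of the limit with the integral and the sums. The sum over $l$ runs over the finite tuple $\mathbb{E}$, and for each fixed $E$ only finitely many pairs $(m,n)$ contribute because the Iverson factors in the definition (\ref{Eq125}) of $P_{mn}^{kl}$ kill all but open branch-channels; moreover unitarity (\ref{Eq118}) gives $|C_{mn}^{kl}|^2\le 1$ on open channels, so $0\le P_{mn}^{kl}(E)\le 1$. Hence the integrand is bounded by $\sum_{mn}^l P_{mn}^{kl}(E)\,|f^l(E)-f^k(E)|$, which decays exponentially in $E$ away from the Fermi window and does so uniformly along any sequence $T^k\to 0$ (each $f^k$ being monotone in $T^k$ outside the window); dominated convergence then legitimises the interchange and delivers (\ref{Eq142}).

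The genuinely delicate part is not the analysis but the bookkeeping of the second step: one must verify that the single compact expression (\ref{Eq142}), read with the oriented-interval convention of agreement \hyperref[agr:XII]{XII}, reproduces the correct sign of the current for both orderings of $E_{\text{F}}^k$ and $E_{\text{F}}^l$ and vanishes when they coincide. Once that is settled, the passage to the limit is a routine dominated-convergence argument requiring no further estimates.
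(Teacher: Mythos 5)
Your proposal follows essentially the same route as the paper's proof in Appendix~\ref{sec:E_3}: the pointwise limit $f^k(E)\to[E<E_{\text{F}}^k]$ followed by the case analysis showing $[E<E_{\text{F}}^l]-[E<E_{\text{F}}^k]=\operatorname{sgn}(E_{\text{F}}^l-E_{\text{F}}^k)\cdot[\min\le E<\max]$, which agreement \hyperref[agr:XII]{XII} packages as the oriented bracket $[E\in(E_{\text{F}}^k,E_{\text{F}}^l)]$. Your third step (dominated convergence to justify exchanging the limit with the integral and sums) is a sound addition that the paper leaves implicit, but it does not change the argument.
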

\begin{proof}[\textsc{Proof~\ref{prop:10}}]
 See Appendix \ref{sec:E_3}.	
\end{proof}

Taking into account the definition (\ref{Eq125}), expression (\ref{Eq142}) has a simple physical interpretation: the electric current in the $k$-th branch causes the branch-channels whose energies $\{E_{\bot m}^k\}_m^k$ are located between the Fermi level of the $k$-th branch and the Fermi levels of all other branches.

\begin{prop}	\label{prop:11}
 When reservoir voltages are low
\begin{equation}	\label{Eq143}
	{\{{U^k}\to 0\}}^{k\in \mathbb{E}}
\end{equation}
and reservoirs have same temperature
\begin{equation}	\label{Eq144}
	{\{{{\square}^k} = \ {{\square}^= }|\ \square \ = T,\mu \}}^{k\in \mathbb{E}}
\end{equation}
\pagebreak

\noindent
current in $k$-th branch is calculated as
\begin{equation}	\label{Eq145}
	{J^k}\to {{\tilde{J}}^k} := \sum\nolimits_{}^l{{{{\tilde{\sigma}}}^{kl}}{U^{lk}}}
\end{equation}
\begin{equation}	\label{Eq146}
	\begin{aligned}
		{\tilde{\sigma}}^{kl} := \frac{{e^2}}{\pi \hbar}\sum\nolimits_{mn}\frac{1}{4{\mu}^=} \int_{-\infty}^{+\infty}&{d\varepsilon [\lambda _m^k<\varepsilon ]{{| C_{mn}^{kl}\left( \varepsilon \right) |}^2}[\varepsilon >\lambda _n^l]}	\\
		&\times{\cosh}^{-2}\left( [{{\varepsilon}_{\text{F}}}-\varepsilon ]/[2{{\mu}^= }] \right)
	\end{aligned}
\end{equation}
\begin{equation}	\label{Eq147}
	{U^{lk}} := \left( E_{\text{F}}^l-E_{\text{F}}^k \right)/e
\end{equation}
where $\tilde{\sigma} := {{\{{{\tilde{\sigma}}^{kl}}\}}^{kl}}$ is the approximate conductivity of the network, $U^{lk}$ is the voltage between the reservoirs $l$ and $k$ (\textit{bias voltage}).

\end{prop}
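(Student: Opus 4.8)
The plan is to obtain (\ref{Eq145})--(\ref{Eq146}) as the first-order (linear-response) term in the bias of the exact current formula of Proposition~\ref{prop:9}. I would work with the dimensionless version (\ref{Eq137}) together with the rescaling (\ref{Eq139}), since then the geometry-dependent data (the channel thresholds $\lambda^k_m$ and the transmission moduli $|C^{kl}_{mn}(\varepsilon)|^2$) are manifestly independent of the reservoir voltages, and the only place where the voltages enter (\ref{Eq137}) is through the Fermi--Dirac factors $F_{-1}\big([\varepsilon^l_{\text{F}}-\varepsilon]/\mu^l\big)$. By condition (\ref{Eq144}) all $\mu^l$ coincide with the common value $\mu^{=}$, and by (\ref{Eq133}) together with (\ref{Eq013}) one has $\varepsilon^k_{\text{F}}=\varepsilon_{\text{F}}-2mL^2\hbar^{-2}eU^k\to\varepsilon_{\text{F}}$ as $U^k\to 0$, so the two Fermi factors in the curly bracket of (\ref{Eq137}) approach a common limit.

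Next I would Taylor-expand the function $\eta\mapsto F_{-1}(\eta)$ of (\ref{Eq138}) about the shared argument $\eta_0:=[\varepsilon_{\text{F}}-\varepsilon]/\mu^{=}$. Using $F_{-1}'(\eta)=e^{-\eta}/(1+e^{-\eta})^2=\tfrac14\cosh^{-2}(\eta/2)$ and the identity $\varepsilon^l_{\text{F}}-\varepsilon^k_{\text{F}}=2mL^2\hbar^{-2}eU^{lk}$ (from (\ref{Eq147}), (\ref{Eq133}) and (\ref{Eq013})), the curly bracket in (\ref{Eq137}) becomes, to leading order in the biases,
\begin{gather*}
	F_{-1}\!\big([\varepsilon^l_{\text{F}}-\varepsilon]/\mu^{=}\big)-F_{-1}\!\big([\varepsilon^k_{\text{F}}-\varepsilon]/\mu^{=}\big)\\
	=\frac{2mL^2\hbar^{-2}e}{4\mu^{=}}\,U^{lk}\,\cosh^{-2}\!\big([\varepsilon_{\text{F}}-\varepsilon]/[2\mu^{=}]\big)+o\big({\textstyle\max_l}|U^l|\big).
\end{gather*}
Inserting this into (\ref{Eq137}), pulling the factor $U^{lk}$ out of the $\varepsilon$-integral and the $mn$-sum, and passing back to the dimensional current through (\ref{Eq139}), the numerical prefactors combine as $\tfrac{e\hbar}{2\pi mL^2}\cdot\tfrac{2mL^2e}{\hbar^{2}}=\tfrac{e^2}{\pi\hbar}$, which yields $J^k\to\sum\nolimits^l\tilde\sigma^{kl}U^{lk}$ with $\tilde\sigma^{kl}$ precisely the kernel (\ref{Eq146}); the $l=k$ contribution vanishes automatically because $U^{kk}=0$.

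This is essentially a routine linearisation, so I do not anticipate a genuine obstacle, but two bookkeeping points require attention. First, the sign: (\ref{Eq147}) and (\ref{Eq133}) give $U^{lk}=U^k-U^l$, so the coefficient of $U^{lk}$ in $J^k$ ends up proportional to $-\partial_E f^{=}$ rather than $\partial_E f^{=}$; this minus sign is exactly absorbed by $-\partial_E f^{=}=\tfrac{1}{4k_0T^{=}}\cosh^{-2}\!\big([E-E_{\text{F}}]/[2k_0T^{=}]\big)\ge 0$, which simultaneously exhibits the expected positivity $\tilde\sigma^{kl}\ge 0$. Second, one should state the content of the limit symbol in (\ref{Eq145}) precisely: it is the assertion that $J^k-\tilde J^k=o(\max_l|U^{lk}|)$ as all $U^l\to 0$, i.e.\ that $\tilde\sigma^{kl}$ is the derivative $\partial J^k/\partial U^{lk}$ at zero bias; the interchange of the energy integral with the Taylor expansion that this requires is legitimate because $F_{-1}$ is smooth with uniformly bounded first and second derivatives and the weight $\cosh^{-2}\!\big([\varepsilon_{\text{F}}-\varepsilon]/[2\mu^{=}]\big)$ makes the $\varepsilon$-integral absolutely convergent uniformly for small biases.
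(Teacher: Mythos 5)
Your proposal is correct and follows essentially the same route as the paper's Appendix~\ref{sec:E_2}: linearise the Fermi--Dirac factor $F_{-1}$ in the Fermi level about the common value $\varepsilon_{\text{F}}$, use $\dot F_{-1}(\eta)=\tfrac14\cosh^{-2}(\eta/2)$ to identify the kernel of (\ref{Eq146}), and restore dimensions via (\ref{Eq139}), (\ref{Eq013}) and (\ref{Eq147}). The additional remarks on the sign convention and on the uniformity of the expansion are consistent with, and slightly more careful than, the paper's computation.
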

\begin{proof}[\textsc{Proof~\ref{prop:11}}]
 See Appendix \ref{sec:E_2}.	
\end{proof}

We can see from definition (\ref{Eq146}), that the approximate conductivity of branched semiconductor nanostructures does not depend on bias voltages. This is a convenient quantity for analyzing their volt-ampere characteristics at low bias voltages.

\begin{prop}	\label{prop:12}
 At low voltages on reservoirs (\ref{Eq143}) with the same (\ref{Eq144}) low (\ref{Eq141}) temperature, the approximate conductivity is written as
\begin{equation}	\label{Eq148}
	{{\tilde{\sigma}}^{kl}}\to \frac{{e^2}}{\pi \hbar}\sum\nolimits_{mn}{[\lambda _m^k<{{\varepsilon}_{\text{F}}}]{{| C_{mn}^{kl}\left( {{\varepsilon}_{\text{F}}} \right) |}^2}[{{\varepsilon}_{\text{F}}}>\lambda _n^l]}
\end{equation}
\end{prop}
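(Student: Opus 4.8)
The plan is to start from the closed-form expression (\ref{Eq146}) for $\tilde{\sigma}^{kl}$ established in Proposition~\ref{prop:11} and pass to the low-temperature limit (\ref{Eq141}) directly inside the $\varepsilon$-integral. By (\ref{Eq140}), the hypothesis ${\max}^k\{T^k\}\to 0$ together with a common temperature (\ref{Eq144}) is equivalent to $\mu^{=}\to 0$, and the only $\mu^{=}$-dependent object in (\ref{Eq146}) is the kernel $\tfrac{1}{4\mu^{=}}\cosh^{-2}\!\left([\varepsilon_{\text{F}}-\varepsilon]/[2\mu^{=}]\right)$; the factor $[\lambda_m^k<\varepsilon]\,|C_{mn}^{kl}(\varepsilon)|^2\,[\varepsilon>\lambda_n^l]$ does not depend on $\mu^{=}$.

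First I would show that this kernel is a nascent delta function centred at $\varepsilon=\varepsilon_{\text{F}}$. The substitution $\varepsilon=\varepsilon_{\text{F}}+2\mu^{=}t$ gives $\int_{-\infty}^{+\infty}d\varepsilon\,\tfrac{1}{4\mu^{=}}\cosh^{-2}\!\left(\tfrac{\varepsilon_{\text{F}}-\varepsilon}{2\mu^{=}}\right)=\tfrac12\int_{-\infty}^{+\infty}\cosh^{-2}t\,dt=\tfrac12[\tanh t]_{-\infty}^{+\infty}=1$ for every $\mu^{=}>0$, and the same change of variables shows the mass concentrates within a distance of order $\mu^{=}$ from $\varepsilon_{\text{F}}$ as $\mu^{=}\to 0$. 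Hence $\tfrac{1}{4\mu^{=}}\cosh^{-2}\!\left([\varepsilon_{\text{F}}-\varepsilon]/[2\mu^{=}]\right)\to\delta(\varepsilon-\varepsilon_{\text{F}})$ in the distributional sense.

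Next I would feed this into the integral in (\ref{Eq146}). Provided $\varepsilon_{\text{F}}$ is not a channel threshold, i.e.\ $\varepsilon_{\text{F}}\neq\lambda_m^k$ for every branch-channel contributing to the sum, the map $\varepsilon\mapsto[\lambda_m^k<\varepsilon]\,|C_{mn}^{kl}(\varepsilon)|^2\,[\varepsilon>\lambda_n^l]$ is continuous and bounded near $\varepsilon_{\text{F}}$: the Iverson brackets are locally constant there, and $C_{mn}^{kl}$ is smooth away from the eigenvalue set of the Dirichlet/Neumann problems defining the map (cf.\ the remark following (\ref{Eq066})). The delta-sequence limit then replaces the integral by the value of this factor at $\varepsilon_{\text{F}}$; carrying the limit through the finite sum over $m,n$ and the prefactor $e^{2}/(\pi\hbar)$ yields (\ref{Eq148}).

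The only genuine subtlety is this continuity hypothesis at $\varepsilon_{\text{F}}$, which also secures the interchange of limit with sum and integral. The integrand jumps exactly at the channel thresholds because of the brackets $[\lambda_m^k<\varepsilon]$, and $C_{mn}^{kl}$ can be singular when $\varepsilon$ hits an eigenvalue of the auxiliary Dirichlet or Neumann problem. As long as $\varepsilon_{\text{F}}$ avoids this measure-zero exceptional set~--- the generic situation tacitly assumed throughout~--- the argument is routine; at an exceptional $\varepsilon_{\text{F}}$ one would instead obtain the half-sum of one-sided limits, the usual caveat for Landauer-type formulas.
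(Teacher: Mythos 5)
Your proposal is correct and follows essentially the same route as the paper's Appendix~\ref{sec:E_3} proof: deduce $\mu^{=}\to 0$ from (\ref{Eq141}), (\ref{Eq140}), (\ref{Eq144}), identify $\tfrac{1}{4\mu^{=}}\cosh^{-2}([\varepsilon_{\text{F}}-\varepsilon]/[2\mu^{=}])$ as a nascent delta function (the paper's (\ref{Eq246})), and evaluate the integrand of (\ref{Eq146}) at $\varepsilon=\varepsilon_{\text{F}}$. Your explicit normalization check and the caveat about $\varepsilon_{\text{F}}$ avoiding channel thresholds are sensible additions the paper leaves implicit, but they do not change the argument.
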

\begin{proof}[\textsc{Proof~\ref{prop:12}}]
 See Appendix \ref{sec:E_3}.	
\end{proof}

Thus, the material of section \ref{sec:3_2} is the last part of the scheme for calculating electronic transport in branched semiconductor nanostructures using a quantum network model.

\section{Two-dimensional quantum network of Q-, I- and Y-junctions}	\label{sec:4}

Due to the high level of development of planar technology, nanoelectronic devices based on two-dimensional electron gas are of interest. Their simplest model is a two-dimensional quantum network. In section \ref{sec:4}, we will demonstrate the approach proposed in sections \ref{sec:2} and \ref{sec:3} by calculations for a two-dimensional quantum network of smooth junctions with one, two and three adjacent branches.

\subsection{Arbitrary network}	\label{sec:4_1}

\subsubsection{Geometry}	\label{sec:4_1_1}

Let us consider a two-dimensional quantum network of smooth Q-, I- and Y-junctions~--- \textit{QIY-network} (Fig.~\ref{fig:4}). The absence of corners in the geometry of the junctions (their smoothness) corresponds to the typical nano-devices studied in experiment \cite{Bib014}. Due to its structure, the QIY network is suitable for modeling many two-dimensional nanostructures, as well as for demonstrating calculations using the scheme proposed in sections \ref{sec:2} and \ref{sec:3}.

\begin{figure}[htb]\center
	\hspace{30pt}\includegraphics{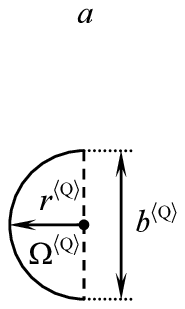}\hspace{50pt}	
	\includegraphics{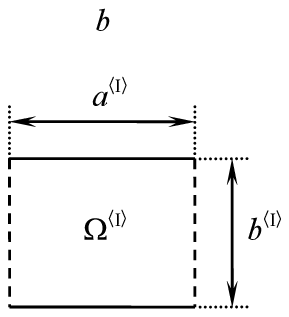}	\\
	\vspace{10pt}\includegraphics{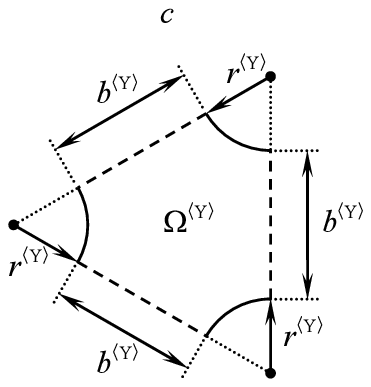}
	\caption{Junctions of smooth QIY-network: \textit{a}~--- Q-junction, \textit{b}~--- I-junction, \textit{c}~--- symmetric Y-junction.}\label{fig:4}
\end{figure}

During calculating we will use NCF (\ref{Eq092}), which is based on the combining formula (\ref{Eq071}). Formula (\ref{Eq071}) contains operators relates to changing LCFs on the branch boundaries: $U^{[\mathbb{J},\mathbb{K}]}$ and $U^{[\mathbb{K},\mathbb{L}]}$ (\ref{Eq072}). We choose such an LCF location when they are relates to each other by rotation and translation (Fig.~\ref{fig:5}). It is universal, since it is convenient during calculating the S-matrix of any two-dimensional network using the formula (\ref{Eq092}).

\begin{figure}[htb]\center
	\includegraphics{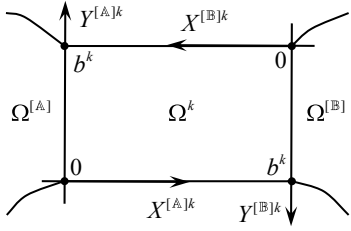}
	\caption{Universal location of local coordinate frames ${[XY]}^{[\mathbb{A}]k}$ and ${[XY]}^{[\mathbb{B}]k}$ in internal branch ${\Omega}^k$ of two-dimensional network.}\label{fig:5}
\end{figure}

In this work, we consider a QIY-network with branches of the same width $B^= $:
\begin{equation}	\label{Eq149}
	{\{{B^k} = {B^= }\}}^{k\in \mathbb{I}\bigcup \mathbb{E}}
\end{equation}
The quantity $B^= $ is assumed to be the characteristic length during the transition (\ref{Eq013}) and (\ref{Eq014}) to the dimensionless Schr\"odinger equation (\ref{Eq012}):
\begin{equation}	\label{Eq150}
	L = {B^= }
\end{equation}
Since dimensionless widths are defined as $\{b^k := {B^k}/L\}^{k\in \mathbb{I}\bigcup \mathbb{E}}$, taking into account equalities (\ref{Eq149}), assuming zero potential in all branches, we have:
\begin{equation}	\label{Eq151}
	{\{{{\Omega}^k}|{b^k} = 1,\ {u^k} = 0\}}^{k\in \mathbb{I}\bigcup \mathbb{E}}
\end{equation}
Solving the two-dimensional analogue of the problem (\ref{Eq025}) assuming ${{\beta}^k} = (0,{b^k})$ (Fig.~\ref{fig:5}), in case (\ref{Eq151}) we get
\begin{equation}	\label{Eq152}
	\{h_m^k\left( y \right) = \sqrt{2}\sin \left( \pi my \right),\ \lambda _m^k = {{(\pi m)}^2}\}_{m\in \mathbb{N}}^{k\in \mathbb{I}\bigcup \mathbb{E}}
\end{equation}
Since according expressions (\ref{Eq152}) the transverse modes $\{h_m^k\}_m^k$ and energies of the channels $\{\lambda _m^k\}_m^k$ are the same in all branches of the network, by analogy with equalities (\ref{Eq149}) we also have
\begin{equation}	\label{Eq153}
	\{\square _m^k = \ \square _m^= |\ \square \ = h,\lambda,\kappa \}_{m\in \mathbb{N}}^{k\in \mathbb{I}\bigcup \mathbb{E}}
\end{equation}

According to the chosen LCFs location (Fig.~\ref{fig:5}), for operators providing their changing in the combining formula (\ref{Eq071}), from expressions (\ref{Eq072}), (\ref{Eq030}) and (\ref{Eq152}) we have
\begin{equation}	\label{Eq154}
	\{U_{mn}^{[\mathbb{J},\mathbb{K}]kl} = {{(-1)}^{m+1}}I_{mn}^{kl} = U_{mn}^{[\mathbb{K},\mathbb{L}]kl}\}_{m,n\in \mathbb{N}}^{k,l\in \mathbb{I}\bigcup \mathbb{E}}
\end{equation}
Using the expression (\ref{Eq154}) in calculations according to NCF (\ref{Eq092}), one can find the S-matrix of the QIY-network based on the S-matrices of its junctions (Fig.~\ref{fig:4}) and information about its structure.

\subsubsection{Parameters}	\label{sec:4_1_2}

The structure of arbitrary QIY-network has the form
\begin{equation}	\label{Eq155}
	\mathcal{N} = {\textstyle \mathcal{Q}\bigcup \mathcal{I}\bigcup \mathcal{Y}}
\end{equation}
where $\mathcal{Q}$, $\mathcal{I}$, $\mathcal{Y}$ are tuples of Q-, I-, and Y-junction structural identifiers, respectively. For smoothness of the network, there must be no corners on the boundaries of the junctions with branches. Therefore, according to equalities (\ref{Eq151}) ${b^{\langle \text{Q} \rangle}} = {b^{\langle \text{I} \rangle}} = {b^{\langle \text{Y} \rangle}} = 1$ and the arch radius ${r^{\langle \text{Q} \rangle}} = 1/2$. We also assume ${r^{\langle \text{Y} \rangle}} = 1/2$ that provides the network small size. Without loss of generality of the approach, the electric field in the I-junction $\epsilon _{\bot}^{\langle \text{I} \rangle}$ will be considered isotropic and transverse. As a result, based on example (\ref{Eq010}), we have the following structure specificity of the QIY-network:
\begin{equation}	\label{Eq156}
	\begin{aligned}
		\{&{\Omega}^{[\mathbb{K}]} = {{\Omega}^{\langle \text{Q} \rangle}} \!\!\!\!\!&|& \ \ {r^{\langle \text{Q} \rangle}} = 1/2, & {b^{\langle \text{Q} \rangle}} & = 1 & {} & {\}^{\mathbb{K}\in \mathcal{Q}}} \\
		\{&{\Omega}^{[\mathbb{K}]} = {{\Omega}^{\langle \text{I} \rangle}} \!\!\!\!\!&|& \ \ {a^{\langle \text{I} \rangle}} = {a^{[\mathbb{K}]}}, & {b^{\langle \text{I} \rangle}} & = 1, & \epsilon _{\bot}^{\langle \text{I} \rangle} = \epsilon _{\bot}^{[\mathbb{K}]} & {\}^{\mathbb{K}\in \mathcal{I}}} \\
		\{&{\Omega}^{[\mathbb{K}]} = {{\Omega}^{\langle \text{Y} \rangle}} \!\!\!\!\!&|& \ \ {r^{\langle \text{Y} \rangle}} = 1/2, & {b^{\langle \text{Y} \rangle}} & = 1 & {} & {\}^{\mathbb{K}\in \mathcal{Y}}}
	\end{aligned}
\end{equation}

For a two-dimensional electron gas in all formulas $m = {m_{\text{e}}}$, $e = -{e_0}$, where $m_{\text{e}}$ is the effective mass of an electron, $e_0$ is an elementary charge. Then the dimensionless network parameters are written as a table~\ref{tab:2}.

\begin{table}
\caption{\label{tab:2}Dimensionless parameters of QIY-network}
\begin{tabularx}{\linewidth}{>{\raggedright}l>{\raggedright}X}
\toprule
definition	&	name \tabularnewline
\midrule
${{\{{{a}^{k}}={{A}^{k}}/L\}}^{k\in \mathbb{I}}}$	&	lengths of branches	\tabularnewline
${{\{{{b}^{k}}={{B}^{k}}/L\}}^{k\in \mathbb{I}\bigcup \mathbb{E}}}$	&	widths of branches	\tabularnewline
${{\{{{r}^{[\mathbb{K}]}}={{R}^{[\mathbb{K}]}}/L\}}^{\mathbb{K}\in \mathcal{Q}}}$	&	radiuses of Q-junctions arches	\tabularnewline
${{\{{{a}^{[\mathbb{K}]}}={{A}^{[\mathbb{K}]}}/L\}}^{\mathbb{K}\in \mathcal{I}}}$	&	lengths of I-junctions	\tabularnewline
${{\{{{b}^{[\mathbb{K}]}}={{B}^{[\mathbb{K}]}}/L\}}^{\mathbb{K}\in \mathcal{I}}}$	&	widths of I-junctions	\tabularnewline
${{\{\epsilon _{\bot}^{[\mathbb{K}]}=2{{m}_{{\text{e}}}}{{L}^{3}}{{e}_{0}}{{\hbar}^{-2}}\mathcal{E}_{\bot}^{[\mathbb{K}]}\}}^{\mathbb{K}\in \mathcal{I}}}$	&	electric field intensities in I-junctions	\tabularnewline
${{\{{{r}^{[\mathbb{K}]}}={{R}^{[\mathbb{K}]}}/L\}}^{\mathbb{K}\in \mathcal{Y}}}$	&	radiuses of Y-junctions arches	\tabularnewline
${{\{{{\mu}^{k}}=2{{m}_{{\text{e}}}}{{L}^{2}}{{\hbar}^{-2}}{{k}_{0}}{{T}^{k}}\}}^{k\in \mathbb{E}}}$	&	temperatures of reservoirs	\tabularnewline
${{\{\varepsilon _{{\text{F}}}^{k}=2{{m}_{{\text{e}}}}{{L}^{2}}{{\hbar}^{-2}}E_{{\text{F}}}^{k}\}}^{k\in \mathbb{E}}}$	&	Fermi levels of reservoirs	\tabularnewline
$\varepsilon _{{\text{F}}}^{\left\langle \text{QIY} \right\rangle}=2{{m}_{{\text{e}}}}{{L}^{2}}{{\hbar}^{-2}}E_{{\text{F}}}^{\left\langle \text{QIY} \right\rangle}$	&	Fermi level of network	\tabularnewline
\bottomrule
\end{tabularx}
\end{table}

The values and ranges of dimensionless geometric parameters of the network are determined by their dimensional analogues (Table~\ref{tab:2}), which are expressed using a characteristic length $L = {B^= }$. Let us choose $L = 10\ \text{nm}$. Then, according to equalities (\ref{Eq149}) and table~\ref{tab:2}, in terms of the dimensional parameters of the network we have
\begin{equation}	\label{Eq157}
	{B^= } = {B^{\langle \text{Q} \rangle}} = {B^{\langle \text{I} \rangle}} = {B^{\langle \text{Y} \rangle}} = 10\ \text{nm}, \ \ {R^{\langle \text{Q} \rangle}} = {R^{\langle \text{Y} \rangle}} = 5\ \text{nm}
\end{equation}
These values are much larger than the lattice constant of III-V semiconductors \cite{Bib030}. Therefore, when modeling such a nanostructure, one can use the zone theory of a solid. Due to small sizes, the effects of size-quantization here will be especially significant.

Ballistic electron transport will dominate in a structure that is smaller than the free length of electrons ${\ell}_e$. It has been experimentally shown that for a two-dimensional electron gas formed in heterostructures based on III-V semiconductors at room temperatures ${{\ell}_e}\sim 100\ \text{nm}$ \cite{Bib002}.

As the maximum permissible intensity of electric field in the I-junctions of the network, we assume a value ${{10}^8}\ \text{V/m}$. This is an order of magnitude less than autoemission intensities: $\sim {{10}^9}\ \text{V/m}$ \cite{Bib031}. Typical for calculations in the work are bias voltages $\sim 10\ \text{mV}$ \cite{Bib003,Bib004}.

\subsubsection{S-matrices of junctions}	\label{sec:4_1_3}

\begin{equation}	\label{Eq158}
	\{\ {{\square}^{\langle \text{I} \rangle}}\ \mapsto \ \square \ |\ \square \ = \Psi,{{\epsilon}_{\bot}},...\}\quad \operatorname{in}\ \text{\ref{sec:4_1_3}}
\end{equation}

To calculate electric currents through the QIY-network in the framework of the Landauer--B\"uttiker formalism (\ref{Eq137}), it is necessary to find its S-matrix using NCF (\ref{Eq092}). One can do it based on S-matrices of network junctions (Fig.~\ref{fig:4}) taking into account the specificity of its structure (\ref{Eq156}). At the same time, we assume that local coordinate frame at junctions are related to each other by rotation and translation (Fig.~\ref{fig:5}).

We find the S-matrices of Q- and Y-junctions by direct numerical calculation using the ND-map (\ref{Eq063}), (\ref{Eq066}) method based on their triangulation. In comparison with DN-map, the ND-map method is more preferable, since in this case the unitary condition (\ref{Eq118}) is better met for the current scattering matrix $C_{++}$.

Let us find the S-matrix of the I-junction in analytical form using SBC. Structurally, the I-junction is a section of a two-dimensional electron waveguide with applied electric field (Fig.~\ref{fig:6}). Without loss of generality, consider the I-junction with the potential
\begin{equation}	\label{Eq159}
	V\left( x,y \right) = {e_0}{{\mathcal{E}}_{\bot}}y
\end{equation}

\begin{figure}[htb]\center
	\includegraphics{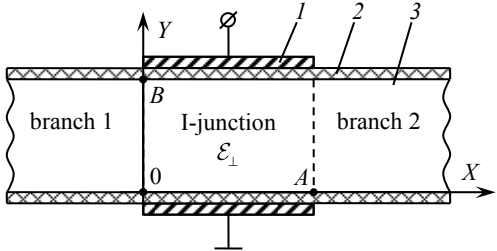}
	\caption{Structure of I-junction: \textit{1}~--- metal, \textit{2}~--- dielectric, \textit{3}~--- semiconductor; $A$~--- length of junction, $B$~--- width of junction, ${\mathcal{E}}_{\bot}$~--- intensity of transverse electric field.}\label{fig:6}
\end{figure}

The transverse linear potential (\ref{Eq159}) is a rough approximation to the real one. However, it implements the main feature of the control element: the presence of inhomogeneity in it in the form of a variable electric field. At the same time, in experiments for three-terminal ballistic junctions, it was found that their high-temperature electrical properties are qualitatively insensitive to the details of their structure \cite{Bib004}. Therefore, the potential (\ref{Eq159}) can be used to search for the features of the transport properties of the network that will be present in the experiment. In addition, in this case, one can organize the calculation of the S-matrix of the I-junction, exceeding in speed by 2-3 orders of magnitude the general algorithm proposed in the Appendix \ref{sec:B_2}.

We find the S-matrix of the I-junction based on the matrix $G^{\Diamond}$ (\ref{Eq056}), (\ref{Eq058}). Taking into account expressions (\ref{Eq013}), (\ref{Eq159}), (\ref{Eq156}) and $L = B$, we will use the following dimensionless quantities:
\begin{equation}	\label{Eq160}
	\begin{aligned}
		&a = A/B,\quad &&b = 1,	\\
		&{\epsilon}_{\bot} := 2{m_{\text{e}}}{B^3}{e_0}{{\hbar}^{-2}}{{\mathcal{E}}_{\bot}},\quad &&\varepsilon = 2{m_{\text{e}}}{B^2}{{\hbar}^{-2}}E
	\end{aligned}
\end{equation}
Then the electron scattering problem (\ref{Eq057}) in the I-junction $\Omega = (0,a)\times (0,1)$ when $\{{\beta}^k = (0,1)\}^{k = 1,2}$ takes the form
\begin{equation}	\label{Eq161}
	\left\{ \begin{aligned}
		&[ -\Delta +{\epsilon}_{\bot}y ]\Psi (x,y) = \varepsilon \Psi (x,y), &&\! \{x,y\} \! \in (0,a) \!\times\! (0,1) \\
		&\Psi (x,0) = \Psi (x,1) = 0, &&\! x \in (0,a) \\
		&[K + i{\partial}_1]W\Psi(0,y) = 2K{\psi}^{\triangleleft}(0,y), &&\! y\in (0,1)
	\end{aligned} \right.
\end{equation}
\begin{prop}	\label{prop:13}
 The operator $G^{\Diamond}$ for the scattering problem in the I-junction (\ref{Eq161}) is written as
\begin{equation}	\label{Eq162}
	G_{nm}^{\Diamond}\left( x \right) = \langle h_n^= | {h_m} \rangle \left[ \begin{matrix}
	{{\left( -1 \right)}^{m+1}}g_m^{\Diamond 1}\left( -x \right) & {{\left( -1 \right)}^{m+1}}g_m^{\Diamond 2}\left( -x \right) \\
		g_m^{\Diamond 1}\left( x+a \right) & g_m^{\Diamond 2}\left( x+a \right) \\
		\end{matrix} \right]
\end{equation}
\begin{equation}	\label{Eq163}
	\begin{aligned}
		{h_m}(y) =& c_{\bot m}^1\operatorname{Ai}\left( [{\epsilon}_{\bot}y-{\lambda}_m]/\epsilon _{\bot}^{2/3} \right)	\\
		&+ c_{\bot m}^2\operatorname{Bi}\left( [{\epsilon}_{\bot}y-{\lambda}_m]/\epsilon _{\bot}^{2/3} \right)
	\end{aligned}
\end{equation}
where coefficients $c_{\bot m}^1$ and $c_{\bot m}^2$ are the solution of the system of equations
\begin{equation}	\label{Eq164}
	\left[ \begin{matrix}
	\operatorname{Ai}\left( -{{\lambda}_m}/\epsilon _{\bot}^{2/3} \right) & \operatorname{Bi}\left( -{{\lambda}_m}/\epsilon _{\bot}^{2/3} \right) \\
		\operatorname{Ai}\left( [{{\epsilon}_{\bot}}-{{\lambda}_m}]/\epsilon _{\bot}^{2/3} \right) & \operatorname{Bi}\left( [{{\epsilon}_{\bot}}-{{\lambda}_m}]/\epsilon _{\bot}^{2/3} \right) \\
		\end{matrix} \right]\left[ \begin{matrix}
	c_{\bot m}^1 \\
		c_{\bot m}^2 \\
		\end{matrix} \right] = 0
\end{equation}
${\lambda}_m$ is solution of equation
\begin{equation}	\label{Eq165}
	\begin{aligned}
		\operatorname{Ai}&\left( -{{\lambda}_m}/\epsilon _{\bot}^{2/3} \right)\operatorname{Bi}\left( [{{\epsilon}_{\bot}}-{{\lambda}_m}]/\epsilon _{\bot}^{2/3} \right)	\\
		&= \operatorname{Bi}\left( -{{\lambda}_m}/\epsilon _{\bot}^{2/3} \right)\operatorname{Ai}\left( [{{\epsilon}_{\bot}}-{{\lambda}_m}]/\epsilon _{\bot}^{2/3} \right)
	\end{aligned}
\end{equation}
\vspace{15pt}
\begin{equation}	\label{Eq166}
	\begin{aligned}
		g_m^{\Diamond 1}(x) &= \exp \left( +i{{\kappa}_m}x \right),	\\
		g_m^{\Diamond 2}(x) &= \exp \left( -i{{\kappa}_m}x \right), \quad {\kappa}_m := \sqrt{\varepsilon -{\lambda}_m}
	\end{aligned}
\end{equation}
\end{prop}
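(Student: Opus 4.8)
The plan is to use the special form of the I-junction potential $\upsilon=\epsilon_\bot y$ (Eqs.~\ref{Eq159}, \ref{Eq160}), which depends only on the transverse coordinate, so that the Schr\"odinger problem~(\ref{Eq161}) in $\Omega=(0,a)\times(0,1)$ separates exactly as in the general scheme (\ref{Eq024})--(\ref{Eq026}). First I would write $\Psi(x,y)=\sum_m g_m(x)\,h_m(y)$, which splits (\ref{Eq161}) into the transverse Sturm--Liouville problem $-h_m''+\epsilon_\bot y\,h_m=\lambda_m h_m$ on $(0,1)$ with $h_m(0)=h_m(1)=0$, and the longitudinal equation $-g_m''=(\varepsilon-\lambda_m)g_m$ on $(0,a)$. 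The transverse operator $-\partial_y^2+\epsilon_\bot y$ with Dirichlet data is self-adjoint, so $\{h_m\}_m$ is a complete orthogonal family on $(0,1)$; this completeness is what later lets the junction modes be re-expanded in the branch sine basis $\{h_n^=\}$.

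For the transverse problem I would substitute $\xi:=(\epsilon_\bot y-\lambda_m)\epsilon_\bot^{-2/3}$, which turns the equation into Airy's equation $\partial_\xi^2 h_m=\xi h_m$; its general solution is the combination $c_{\bot m}^1\operatorname{Ai}(\xi)+c_{\bot m}^2\operatorname{Bi}(\xi)$, i.e.\ (\ref{Eq163}). Imposing $h_m(0)=0$ (at $\xi=-\lambda_m\epsilon_\bot^{-2/3}$) and $h_m(1)=0$ (at $\xi=(\epsilon_\bot-\lambda_m)\epsilon_\bot^{-2/3}$) gives the homogeneous $2\times2$ system (\ref{Eq164}) for $(c_{\bot m}^1,c_{\bot m}^2)$; a nonzero solution exists exactly when its determinant vanishes, which is the transcendental equation (\ref{Eq165}) selecting the admissible $\lambda_m$. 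The longitudinal equation then has the two fundamental solutions $e^{\pm i\kappa_m x}$ with $\kappa_m=\sqrt{\varepsilon-\lambda_m}$, that is $g_m^{\Diamond1},g_m^{\Diamond2}$ of (\ref{Eq166}), so the general junction field is $\Psi(x,y)=\sum_m\big(c_m^{\Diamond1}g_m^{\Diamond1}(x)+c_m^{\Diamond2}g_m^{\Diamond2}(x)\big)h_m(y)$, which exhibits the coefficient vector $c^\Diamond$ of the defining relations (\ref{Eq058}), (\ref{Eq059}).

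It then remains to rewrite this expansion in the two local frames at the branch boundaries ($x=0$ and $x=a$) and read off $G^\Diamond$ from (\ref{Eq058})/(\ref{Eq059}). Under the universal placement of Fig.~\ref{fig:5}, the frame at $x=a$ is the junction frame shifted by $x\mapsto x+a$, while the frame at $x=0$ additionally reverses the longitudinal axis, $x\mapsto -x$, which forces a reversal of the transverse axis and hence an alternating sign $(-1)^{m+1}$ on the modes (the same sign already met in (\ref{Eq154})); expanding each Airy mode $h_m$ over the branch modes $h_n^=$ produces the overlap $\langle h_n^= | h_m \rangle$. Arranging the two branches as the two rows and the two longitudinal fundamental solutions as the two columns reassembles exactly the block matrix (\ref{Eq162}). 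The separation of variables, the Airy reduction and the two ordinary differential equations are routine; the step that requires real care — and the most likely source of error — is this last bookkeeping of the two local frames: getting the arguments $-x$ versus $x+a$, the factors $(-1)^{m+1}$, and the placement of $\langle h_n^= | h_m \rangle$ consistent with the convention (\ref{Eq058})--(\ref{Eq059}) rather than its transpose.
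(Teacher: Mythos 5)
Your proposal follows essentially the same route as the paper: separation of variables in the junction, reduction of the transverse problem to Airy's equation with the homogeneous $2\times2$ system (\ref{Eq164}) and its solvability condition (\ref{Eq165}), plane-wave longitudinal solutions, and then reading off $G^{\Diamond}$ from the two local frames via $x\mapsto -x$ and $x\mapsto x+a$ together with the overlaps $\langle h_n^= | h_m \rangle$. One point of care in the last bookkeeping step: the parity sign arises from the reflection acting on the branch sine modes, $\bar{W}^1 h_n^=(y)=h_n^=(b-y)=(-1)^{n+1}h_n^=(y)$ as in (\ref{Eq175})--(\ref{Eq176}) — not on the Airy modes $h_m$, which have no definite parity under $y\mapsto 1-y$ since the potential $\epsilon_\bot y$ breaks that symmetry — so the factor properly carries the branch-mode index.
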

\begin{proof}[\textsc{Proof~\ref{prop:13}}]
 We are looking for a function $\Psi $ in the form:
\begin{equation}	\label{Eq167}
	\Psi \left( x,y \right) = \sum\nolimits_m{{g_m}\left( x \right){h_m}\left( y \right)},\quad {g_m} := \langle {h_m} | \Psi \rangle 
\end{equation}
where $h_m$ is solution of a problem on eigenfunctions and eigenvalues
\begin{equation}	\label{Eq168}
	\left\{ \begin{aligned}
		&[ -\partial _y^2+{{\epsilon}_{\bot}}y]{h_m}(y) = {{\lambda}_m}{h_m}(y), && y\in (0,1) \\
		&{h_m}(y) = 0, && y\in \{0,1\}
	\end{aligned} \right.
\end{equation}
The solution of the problem (\ref{Eq168}) is a linear combination of Airy functions (\ref{Eq163}). Using the boundary conditions in the problem (\ref{Eq168}), from the expression (\ref{Eq163}) we obtain the system of equations (\ref{Eq164}). The system (\ref{Eq164}) allows to find unknown coefficients in the expression (\ref{Eq163}), and the condition of its solvability (equality to zero of the determinant of the matrix) will give the equation for ${\lambda}_m$ (\ref{Eq165}).

Substituting function (\ref{Eq167}) to the problem (\ref{Eq161}), we get
\begin{equation}	\label{Eq169}
	-\partial _1^2{g_m} = \left( \varepsilon -{{\lambda}_m} \right){g_m}
\end{equation}
The solution to the problem (\ref{Eq169}) is
\begin{equation}	\label{Eq170}
	{g_m}\left( x \right) = g_m^{\Diamond 1}\left( x \right)c_m^{\Diamond 1}+g_m^{\Diamond 2}\left( x \right)c_m^{\Diamond 2}
\end{equation}
where functions $g_m^{\Diamond 1}$ and $g_m^{\Diamond 2}$ are defined according to expressions (\ref{Eq166}). From expression (\ref{Eq167}) taking into account equalities (\ref{Eq170}) and (\ref{Eq153}) for the function in the junction one can obtain an expression of the form (\ref{Eq058}), where
\begin{equation}	\label{Eq171}
	G_{nm}^{\Diamond kl} := \langle h_n^= |{W^k}g_m^{\Diamond l}| {h_m} \rangle 
\end{equation}

The definition (\ref{Eq171}) is written as
\begin{equation}	\label{Eq172}
	G_{nm}^{\Diamond kl} = \left[ {W^k}g_m^{\Diamond l} \right]\langle h_n^= |{W^k}| {h_m} \rangle = \left[ {W^k}g_m^{\Diamond l} \right]\langle {{{\bar{W}}}^k}h_n^= | {h_m} \rangle 
\end{equation}
By choosing the universal location of the LCF in the network (Fig.~\ref{fig:5}) for the I-junction we get
\begin{equation}	\label{Eq173}
	{W^1}g_m^{\Diamond}\left( x \right) = g_m^{\Diamond}\left( -x \right),\quad {W^2}g_m^{\Diamond}\left( x \right) = g_m^{\Diamond}\left( x+a \right)
\end{equation}
Based on expressions (\ref{Eq172}) and (\ref{Eq173}) we write the operator $G^{\Diamond}$ in matrix form
\begin{equation}	\label{Eq174}
	\begin{aligned}
		G_{nm}^{\Diamond}&(x) =	\\
		& \left[ \begin{matrix}
			g_m^{\Diamond 1}\left( -x \right)\langle {{{\bar{W}}}^1}h_n^= | {h_m} \rangle & g_m^{\Diamond 2}\left( -x \right)\langle {{{\bar{W}}}^1}h_n^= | {h_m} \rangle \\
			g_m^{\Diamond 1}\left( x+a \right)\langle {{{\bar{W}}}^2}h_n^= | {h_m} \rangle & g_m^{\Diamond 2}\left( x+a \right)\langle {{{\bar{W}}}^2}h_n^= | {h_m} \rangle \\
		\end{matrix} \right]
	\end{aligned}
\end{equation}
When LCFs are related to each other by rotation and translation, transverse functions in the branches when transitioning to GCF (Fig.~\ref{fig:6}) are transformed as
\begin{equation}	\label{Eq175}
	{{\bar{W}}^1}h_n^= \left( y \right) = h_n^= \left( b-y \right),\quad {{\bar{W}}^2}h_n^= \left( y \right) = h_n^= \left( y \right)
\end{equation}
Taking into account equalities (\ref{Eq152}) and (\ref{Eq153}) from formulas (\ref{Eq175}) we get
\begin{equation}	\label{Eq176}
	{{\bar{W}}^1}h_n^= = {{\left( -1 \right)}^{n+1}}h_n^= 
\end{equation}
From expressions (\ref{Eq174})--(\ref{Eq176}) we have statement (\ref{Eq162}).	
\end{proof}

The formulas (\ref{Eq162}) and (\ref{Eq056}) allow a high speed calculation of the S-matrix of the I-junction with transverse potential (\ref{Eq161}).

\subsection{Network of four junctions}	\label{sec:4_2}

\subsubsection{Geometry and parameters}	\label{sec:4_2_1}

As an example, we model a smooth branched nanostructure using a QIY-network (Fig.~\ref{fig:7}) ${{\Omega}^{\langle \text{ex} \rangle}} := {{\Omega}^{[0,4,5]}}$. Its structure (\ref{Eq155}) is written in the form
\begin{equation}	\label{Eq177}
	\mathcal{Q} = \{\{2\}\},\quad \mathcal{I} = \{\{1,3\}\},\quad \mathcal{Y} = \{\{0,1,2\},\{3,4,5\}\}
\end{equation}
We consider a system based on a two-dimensional electron gas formed in a heterostructure based on GaAs (${m_{\text{e}}} = 0.063{m_0}$) \cite{Bib030}. In addition to the dimensional parameters of the network (\ref{Eq157}), we assume for the nanostructure
\begin{equation}	\label{Eq178}
	\begin{aligned}
		&A^1 = 7\ \text{nm},\quad A^2 = 5\ \text{nm},\quad A^3 = 10\ \text{nm},	\\
		&A^{[1,3]} = A^{\langle \text{I} \rangle} = 15\ \text{nm}
	\end{aligned}
\end{equation}
According to equalities (\ref{Eq157}) and (\ref{Eq178}) linear sizes of the nanostructure $<60\ \text{nm}$ (Fig.~\ref{fig:7}), and the free length of electrons ${\ell}_e$ at room temperature for it $\sim 100\ \text{nm}$ \cite{Bib002}. Therefore, up to room temperatures, ballistic electron transport dominates in the nanostructure, and the proposed scheme can be used to calculate it.

According to the assumed width of the branches (\ref{Eq149}) and (\ref{Eq157}), taking into account the equalities (\ref{Eq152}) and (\ref{Eq127}) for the dimensional energies of the channels we have:
\begin{equation}	\label{Eq179}
	E_{\bot 1}^= = 0.0\text{6}0\ \text{eV},\quad E_{\bot 2}^= = 0.\text{239}\ \text{eV},\quad E_{\bot 3}^= = 0.\text{537}\ \text{eV}
\end{equation}
The distance between energies is more than thermal broadening of levels ${k_0}T$ at room temperature: 0.026 eV, which indicates their good resolution.

\begin{figure}[htb]\center
	\includegraphics{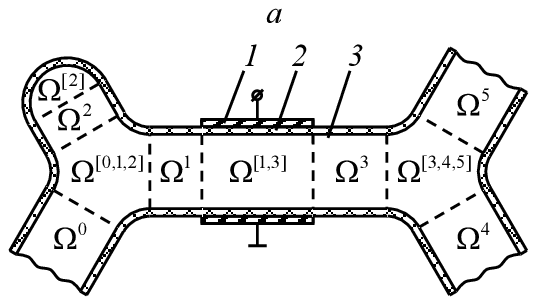}
	\includegraphics{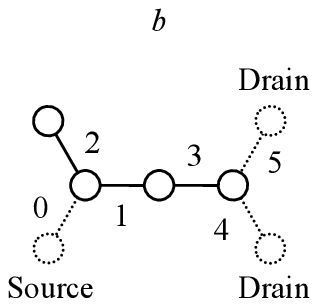}
	\caption{Smooth branched nanostructure in terms of four-junctions QIY-network. \textit{a}~--- construction: \textit{1}~--- metal, \textit{2}~--- dielectric, \textit{3}~--- semiconductor; dashed lines are boundaries between junctions and branches. \textit{b}~--- scheme: solid lines are internal junctions and branches, dotted lines are external junctions (source and drains) and branches.}\label{fig:7}
\end{figure}

For the range of values of the Fermi level of the nanostructure we choose $E_{\text{F}}^{\langle \text{ex} \rangle}\in (0,10E_{\bot 1}^= )$. Such a range, on the one hand, allows to consider boundary cases $E_{\text{F}}^{\langle \text{ex} \rangle} = E_{\bot 1}^= $, $E_{\text{F}}^{\langle \text{ex} \rangle} = E_{\bot 2}^= = 4E_{\bot 1}^= $, and $E_{\text{F}}^{\langle \text{ex} \rangle} = E_{\bot 3}^= = 9E_{\bot 1}^= $, on the other hand, excludes from consideration higher channels that are difficult to observe in a real experiment. Fermi levels in reservoirs (external junctions) are determined based on external voltages applied in push-pull fashion:
\begin{equation}	\label{Eq180}
	\begin{aligned}
		&E_{\text{F}}^0 = E_{\text{F}}^{\langle \text{ex} \rangle}-{e_0}{U_{\parallel}}/2,	\\
		&E_{\text{F}}^4 = E_{\text{F}}^{\langle \text{ex} \rangle}+{e_0}{U_{\parallel}}/2,\quad E_{\text{F}}^5 = E_{\text{F}}^{\langle \text{ex} \rangle}+{e_0}{U_{\parallel}}/2
	\end{aligned}
\end{equation}
where ${U_{\parallel}} := {U^{04}} = {U^{05}}$ is the bias voltage (\ref{Eq147}). When solving problems about electron transport in low-dimensional systems, typical values of bias voltages $\sim 10\ \text{mV}$ \cite{Bib003,Bib004}. We choose a range for calculations ${U_{\parallel}} = 0\div 100\ \text{mV}$.

We assume that the temperatures of all reservoirs are the same (\ref{Eq144}). We consider the range ${T^= } = 0\div 300\ \text{K}$. Finally, for the intensity of the control electric field in the I-junction, we choose a value $\mathcal{E}_{\bot}^{[1,3]} = \mathcal{E}_{\bot}^{\langle \text{I} \rangle} = -{{10}^7}\ \text{V/m}$.

Defining the dimensionless bias voltage as
\begin{equation}	\label{Eq181}
	{{\upsilon}_{\parallel}} := 2{m_{\text{e}}}{L^2}{{\hbar}^{-2}}{e_0}{U_{\parallel}}
\end{equation}
taking into account table~\ref{tab:2}, based on the values and ranges of the dimensional parameters we obtain the following values and ranges of dimensionless parameters of the nanostructure:
\begin{equation}	\label{Eq182}
	\begin{aligned}
		&{a^1} = 0.7, &&{a^2} = 0.5, &&{a^3} = 1, &&{a^{\langle \text{I} \rangle}} = 1.5, \\
		&{b^= } = 1, &&{b^{\langle \text{I} \rangle}} = 1, &&{b^{\langle \text{Y} \rangle}} = 1, &&{r^{\langle \text{Y} \rangle}} = 0.5,\quad \epsilon _{\bot}^{\langle \text{I} \rangle} = 16.5
	\end{aligned}
\end{equation}
\vspace{-20pt}
\begin{equation}	\label{Eq183}
	\varepsilon _{\text{F}}^{\langle \text{ex} \rangle}\in (0,10{{\pi}^2}),\quad {{\mu}^= }\in (0,4.27),\quad {{\upsilon}_{\parallel}}\in (0,\text{16}\text{.5})
\end{equation}
\pagebreak

\subsubsection{Expression for S-matrix}	\label{sec:4_2_2}

According to the approach of sections \ref{sec:2} and \ref{sec:3}, the electrical properties of the nanostructure (Fig.~\ref{fig:7}) are calculated based on its S-matrix. The S-matrix of the nanostructure is calculated using NCF (\ref{Eq092}). Let us write this procedure step by step.

For a QIY network (\ref{Eq155}) of structure (\ref{Eq177}), the procedure for combining of junctions' S-matrices according to NCF (\ref{Eq092}) takes the form
\begin{equation}	\label{Eq184}
	\begin{aligned}
		{S^{[\mathbb{E}]}} =& {S^{[2]}} \!\circledast\! {S^{[1,3]}} \!\circledast\! {S^{[0,1,2]}} \!\circledast\! {S^{[3,4,5]}}	\\
		=& {S^{[2,1,3]}} \!\circledast\! {S^{[0,1,2]}} \!\circledast\! {S^{[3,4,5]}} \!=\! {S^{[0,3]}} \!\circledast\! {S^{[3,4,5]}} \!=\! {S^{[0,4,5]}}
	\end{aligned}
\end{equation}
From the procedure (\ref{Eq184}) it follows that a tuple of numbers of external branches $\mathbb{E} = \{0,4,5\}$. The procedure (\ref{Eq184}) consists of three steps. We write them using expressions (\ref{Eq089}) and (\ref{Eq088}).

1. ${S^{[2,1,3]}} = {S^{[2]}}\circledast {S^{[1,3]}}$, $\mathbb{J} = \{2\}$, $\mathbb{K} = \varnothing $, $\mathbb{L} = \{1,3\}$; (\ref{Eq071})~$\quad \Rightarrow \quad$
\begin{equation}	\label{Eq185}
	\begin{aligned}
		{S^{[2,1,3]}} &=
		\left[ \begin{matrix}
			{S^{[2]22}} & {O^{2\{1,3\}}} \\
			{O^{\{1,3\}2}} & {S^{[1,3]\{1,3\}\{1,3\}}} \\
		\end{matrix} \right] 	\\
		&= \left[ \begin{matrix}
			{S^{[2]22}} & {O^{21}} & {O^{23}} \\
			{O^{12}} & {S^{[1,3]11}} & {S^{[1,3]13}} \\
			{O^{32}} & {S^{[1,3]31}} & {S^{[1,3]33}} \\
		\end{matrix} \right]
	\end{aligned}
\end{equation}

2. ${S^{[0,3]}} = {S^{[2,1,3]}}\circledast {S^{[0,1,2]}}$; $\mathbb{J} = \{3\}$, $\mathbb{K} = \{2,1\}$, $\mathbb{L} = \{0\}$; (\ref{Eq071}) $\quad \Rightarrow \quad$

\begin{strip}
\vspace{5pt}
$\begin{aligned}
	{S^{[3,0]}} =&
	\left[ \begin{matrix}
		{S^{[3,2,1]33}} & {O^{30}} \\
		{O^{03}} & {S^{[2,1,0]00}} \\
	\end{matrix} \right]+
	\left[ \begin{matrix}
		{S^{[3,2,1]32}} & {S^{[3,2,1]31}} & {O^{32}} & {O^{31}} \\
		{O^{02}} & {O^{01}} & {S^{[2,1,0]02}} & {S^{[2,1,0]01}} \\
	\end{matrix} \right]	\\
	&\times {{\left[ \begin{matrix}
		-{S^{[3,2,1]\{2,1\}\{2,1\}}}  & {U^{[3,2,1]\{2,1\}\{2,1\}}}\exp \left( -i{K^{\{2,1\}\{2,1\}}}{A^{\{2,1\}\{2,1\}}} \right) \\
		{U^{[2,1,0]\{2,1\}\{2,1\}}}\exp \left( -i{K^{\{2,1\}\{2,1\}}}{A^{\{2,1\}\{2,1\}}} \right)  & -{S^{[2,1,0]\{2,1\}\{2,1\}}} \\
	\end{matrix} \right]}^{-1}}	\\
	&\times \left[ \begin{matrix}
		{S^{[3,2,1]23}} & {O^{20}} \\
		{S^{[3,2,1]13}} & {O^{10}} \\
		{O^{23}} & {S^{[2,1,0]20}} \\
		{O^{13}} & {S^{[2,1,0]10}} \\
	\end{matrix} \right]
\end{aligned}$
\vspace{5pt}

In this expression, objects with a tuple index $\{2,1\}$ can also be expanded using example (\ref{Eq002}). Here we do not do it for short.

3. ${S^{[0,4,5]}} = {S^{[0,3]}}\circledast {S^{[3,4,5]}}$; $\mathbb{J} = \{0\}$, $\mathbb{K} = \{3\}$, $\mathbb{L} = \{4,5\}$; (\ref{Eq071}) $\quad \Rightarrow \quad$

\vspace{10pt}
$\begin{aligned}
	{S^{[0,4,5]}} =&
	\left[ \begin{matrix}
		{S^{[0,3]00}} & {O^{04}} & {O^{05}} \\
		{O^{40}} & {S^{[3,4,5]44}} & {S^{[3,4,5]45}} \\
		{O^{50}} & {S^{[3,4,5]54}} & {S^{[3,4,5]55}} \\
	\end{matrix} \right] +
	\left[ \begin{matrix}
		{S^{[0,3]03}} & {O^{03}} \\
		{O^{43}} & {S^{[3,4,5]43}} \\
		{O^{53}} & {S^{[3,4,5]53}} \\
	\end{matrix} \right] \\
	&\times {{\left[ \begin{matrix}
		-{S^{[0,3]33}} & {U^{[0,3]33}}\exp \left( -i{K^{33}}{A^{33}} \right) \\
		{U^{[3,4,5]33}}\exp \left( -i{K^{33}}{A^{33}} \right) & -{S^{[3,4,5]33}} \\
	\end{matrix} \right]}^{-1}}
	\left[ \begin{matrix}
		{S^{[0,3]30}} & {O^{34}} & {O^{35}} \\
		{O^{30}} & {S^{[3,4,5]34}} & {S^{[3,4,5]35}} \\
	\end{matrix} \right]
\end{aligned}$
\vspace{10pt}
\end{strip}

Using this procedure, one can find an extended scattering matrix $S^{[0,4,5]}$ of nanostructure (Fig.~\ref{fig:7}). In section \ref{sec:4_2_2}, during matrix operations we consider 6 channels, since their further increase has little effect on the electrical properties calculated in section \ref{sec:4_2_3}. Therefore all submatrices in expressions of section \ref{sec:4_2_2} have the size $6\times 6$: ${S^{[2]22}} = {{\{S_{mn}^{[2]22}\}}_{m,n = 1..6}}$, ${O^{30}} = {{\{O_{mn}^{30}\}}_{m,n = 1..6}}$, ${K^{33}} = {{\{K_{mn}^{33}\}}_{m,n = 1..6}}$ etc.

In our numerical calculation, the written procedure is implemented automatically. It can be programmed for an arbitrary quantum network due to the agreements and notation (section \ref{sec:2_1}) used in writing NCF (\ref{Eq092}).

\subsubsection{Electrical properties}	\label{sec:4_2_3}

Based on the S-matrix of the nanostructure (Fig.~\ref{fig:7}) let us study its electrical properties (section \ref{sec:3_2_2}). Calculations showed that in section \ref{sec:4_2_3} in the sums (\ref{Eq137}) and (\ref{Eq146}) it is enough to take into account only the first 4 of the channel, and one can take the energy of the 5-th channel for an upper limit of integration:
\begin{equation}	\label{Eq186}
	\sum\nolimits_{mn}^{}{\int_{-\infty}^{+\infty}{d\varepsilon \{...\}}}\mapsto \sum\nolimits_{m,n = 1..4}^{}{\int_{-\infty}^{+\infty}{d\varepsilon [\varepsilon <\lambda _5^= ]\{...\}}}
\end{equation}
At the same time, the lower limit of integration has the values of energies of the first 4 channels ${\{\lambda _m^= \}}_{m = 1..4}$.

We consider the approximate conductivity of the nanostructure (Fig.~\ref{fig:7}) at low bias voltages (\ref{Eq146}):
\begin{equation}	\label{Eq187}
	{{\tilde{\sigma}}^{\langle \text{ex} \rangle}} = {{\tilde{\sigma}}^{[0,4,5]}} = \left[ \begin{matrix}
	{{{\tilde{\sigma}}}^{[0,4,5]00}} & {{{\tilde{\sigma}}}^{[0,4,5]04}} & {{{\tilde{\sigma}}}^{[0,4,5]05}} \\
		{{{\tilde{\sigma}}}^{[0,4,5]40}} & {{{\tilde{\sigma}}}^{[0,4,5]44}} & {{{\tilde{\sigma}}}^{[0,4,5]45}} \\
		{{{\tilde{\sigma}}}^{[0,4,5]50}} & {{{\tilde{\sigma}}}^{[0,4,5]54}} & {{{\tilde{\sigma}}}^{[0,4,5]55}} \\
		\end{matrix} \right]
\end{equation}
Taking into account the problem definition (\ref{Eq180}) here we are only interested in non-diagonal components ${\tilde{\sigma}}^{[0,4,5]40}$ and ${\tilde{\sigma}}^{[0,4,5]50}$.

Let us plot graphs of components ${\tilde{\sigma}}^{[0,4,5]40}$ and ${\tilde{\sigma}}^{[0,4,5]50}$ as functions of the dimensionless Fermi level of the structure $\varepsilon _{\text{F}}^{\langle \text{ex} \rangle}$ and its temperature $T^= $ (Fig.~\ref{fig:8}). Both graphs show that at ${T^= } = 0\ \text{K}$ elements ${\tilde{\sigma}}^{[0,4,5]40}$ and ${\tilde{\sigma}}^{[0,4,5]50}$ have many local extremes. According to expression (\ref{Eq148}), such conductivity behavior reveals the complex resonant structure of the system (Fig.~\ref{fig:7}). With rising temperatures, extremes are smoothed out. This is due to the fact that according to the equations (\ref{Eq140}) and (\ref{Eq144}), the parameter ${\mu}^= $ grows, and in the expression (\ref{Eq146}), the peak of the integrand $\{{{\cosh}^{-2}}([\varepsilon _{\text{F}}^{\langle \text{ex} \rangle}-\varepsilon ]/[2{{\mu}^= }])|\varepsilon \in \mathbb{R}\}$ widens. Physically, this means that with an increase in temperature, the resonant structure of the system (Fig.~\ref{fig:7}) on conduction graphs (Fig.~\ref{fig:8}) smoothed due to a blur in the energies of electrons involved in conduction.

\begin{figure}[htb]\flushleft
	\hspace{30pt}\includegraphics{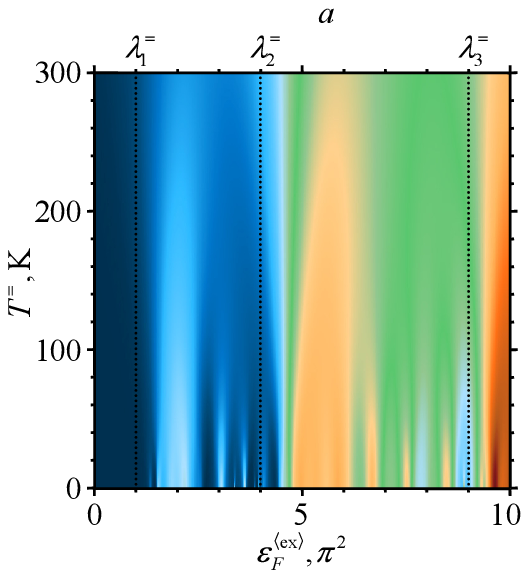}	\\
	\vspace{10pt}\hspace{30pt}\includegraphics{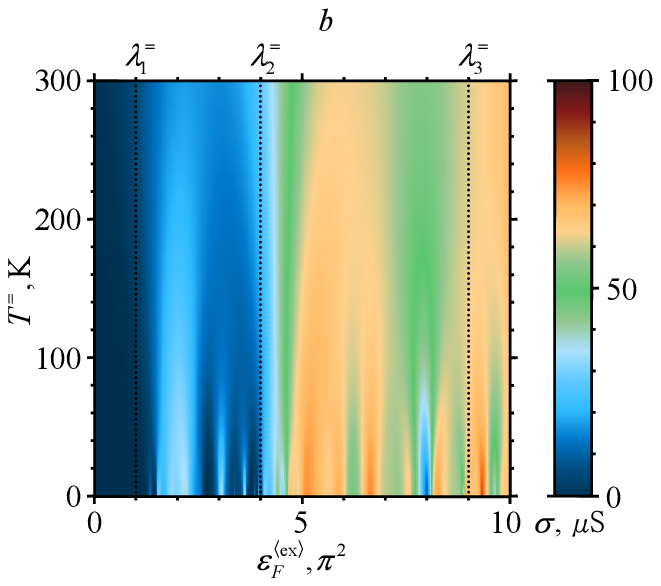}
	\caption{Elements of approximate conductivity matrix (\ref{Eq187}) of smooth branched nanostructure (Fig.~\ref{fig:7}) with dimensional parameters (\ref{Eq157}) and (\ref{Eq178}): \textit{a}~--- ${\tilde{\sigma}}^{[0,4,5]40}$, \textit{b}~--- ${\tilde{\sigma}}^{[0,4,5]50}$.}\label{fig:8}
\end{figure}

In graphs (Fig.~\ref{fig:8}) it is also seen that as the number of “open” channels increases, the conductivity grows. However, this growth is not monotonous. Note also that at ${T^= }\ne 0\ \text{K}$ the concept of “open” channels is conditional. This is due to a blur in the energies of electrons involved in conduction. In particular, therefore, at ${T^= }\ne 0\ \text{K}$ a non-zero conductivity is observed in the “zero-channel” mode ($\varepsilon _{\text{F}}^{\langle \text{ex} \rangle}\le \lambda _1^= $). This effect is clearly visible in two-dimensional graphs of conductivity (Fig.~\ref{fig:9}). They also show the resonant structure of the system (Fig.~\ref{fig:7}), which is smoothed with increasing temperature.

\begin{figure}[htb]\center
	\includegraphics{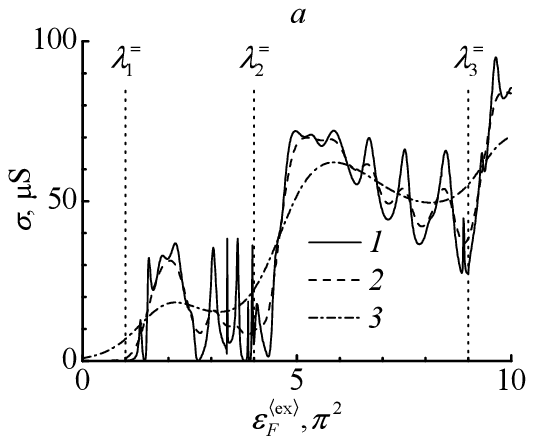}	\\
	\vspace{10pt}\includegraphics{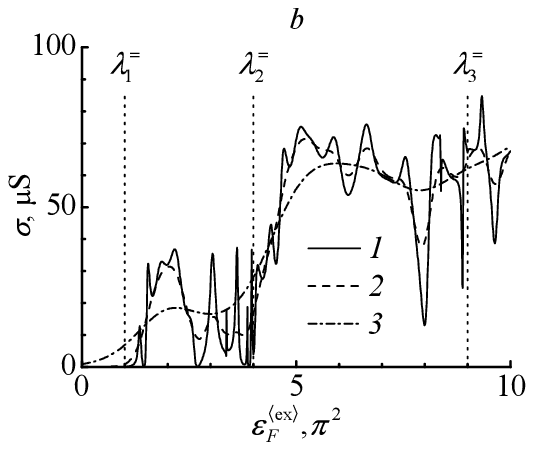}
	\caption{Elements of approximate conductivity matrix (\ref{Eq187}) of smooth branched nanostructure (Fig.~\ref{fig:7}) with dimensional parameters (\ref{Eq157}) and (\ref{Eq178}): \textit{a}~--- ${\tilde{\sigma}}^{[0,4,5]40}$, \textit{b}~--- ${\tilde{\sigma}}^{[0,4,5]50}$; \textit{1}~--- ${T^= } = 0\ \text{K}$, \textit{2}~--- ${T^= } = 77\ \text{K}$, \textit{3}~--- ${T^= } = 300\ \text{K}$.}\label{fig:9}
\end{figure}

Graphs of approximate conductivity (Fig.~\ref{fig:8} and Fig.~\ref{fig:9}) describe the electrical properties of the nanostructure (Fig.~\ref{fig:7}) only at low bias voltages. We plot its volt-ampere characteristics (VAC) for the four positions of the dimensionless Fermi level of structure $\varepsilon _{\text{F}}^{\langle \text{ex} \rangle}$ corresponding to conditionally different modes (Fig.~\ref{fig:10}): “one-channel” ($\varepsilon _{\text{F}}^{\langle \text{ex} \rangle} = (\lambda _1^= +\lambda _2^= )/2 = 2.5{{\pi}^2}$), second boundary ($\varepsilon _{\text{F}}^{\langle \text{ex} \rangle} = \lambda _2^= = 4{{\pi}^2}$), “two-channel” ($\varepsilon _{\text{F}}^{\langle \text{ex} \rangle} = (\lambda _2^= +\lambda _3^= )/2 = 6.5{{\pi}^2}$), third boundary ($\varepsilon _{\text{F}}^{\langle \text{ex} \rangle} = \lambda _3^= = 9{{\pi}^2}$). We do not consider the “zero-channel” and the first boundary modes ($\varepsilon _{\text{F}}^{\langle \text{ex} \rangle}\le \lambda _1^= $), since the currents in them are small ($<1\ \text{}\!\!\mu\!\!\text{}\ \text{A}$).

\begin{figure}[htb]\center
	\includegraphics{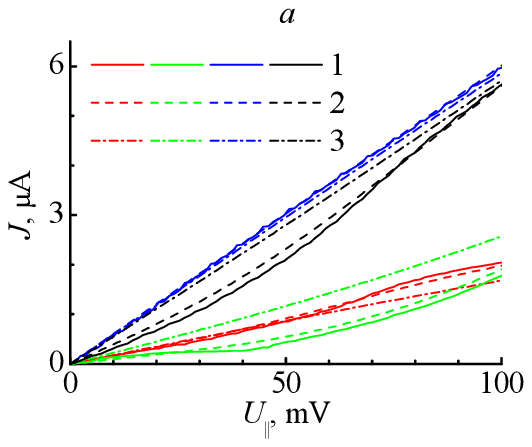}	\\
	\vspace{10pt}\includegraphics{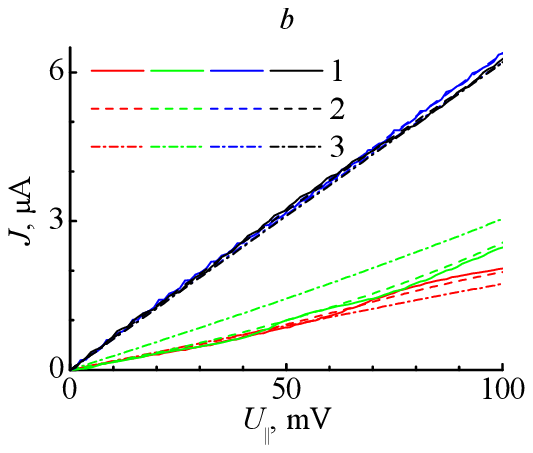}
	\caption{Volt-ampere characteristics of smooth branched nanostructure (Fig.~\ref{fig:7}) with dimensional parameters (\ref{Eq157}) and (\ref{Eq178}): \textit{a}~--- $J^4$, \textit{b}~--- $J^5$; \textit{1}~--- ${T^= } = 0\ \text{K}$, \textit{2}~--- ${T^= } = 77\ \text{K}$, \textit{3}~--- ${T^= } = 300\ \text{K}$; red lines~--- $\varepsilon _{\text{F}}^{\langle \text{ex} \rangle} = (\lambda _1^= +\lambda _2^= )/2$, green lines~--- $\varepsilon _{\text{F}}^{\langle \text{ex} \rangle} = \lambda _2^= $, blue lines~--- $\varepsilon _{\text{F}}^{\langle \text{ex} \rangle} = (\lambda _2^= +\lambda _3^= )/2$, black lines~--- $\varepsilon _{\text{F}}^{\langle \text{ex} \rangle} = \lambda _3^= $.}\label{fig:10}
\end{figure}

Before analysis of the VAC of nanostructure (Fig.~\ref{fig:10}) we estimate by formula (\ref{Eq136}) the error of electric currents calculation. In all cases in considered range of voltages $|{J^0}+{J^4}+{J^5}|<5\cdot {{10}^{-11}}\text{A}$, that can be assumes as an acceptable value.

On VAC of nanostructures (Fig.~\ref{fig:10}) we see the grouping of curves in the “one-channel” and second boundary modes, in the “two-channel” and third boundary modes. Calculations have shown that this is not a threshold phenomenon: with an increase of Fermi level, curves, bending, gradually shift. Such an offset is not monotonous: at the same time, the curve can both rise and fall. This means that with an increase of Fermi levels, electric current, having a global upward trend, can locally decrease. Such behavior of VAC is observed at all temperatures within the range ${T^= } = 0\div 300\ \text{K}$.

The dependency of VAC of the nanostructure (Fig.~\ref{fig:10}) from temperature has the following features. At ${T^= } = 0\ \text{K}$ curves have many small corners that reveal the complex resonant structure of the system (Fig.~\ref{fig:7}). At the same time, the curves are strongly deformed. As the temperature increases, the corners smooth out and the VAC approaches linear. This suggests that as the temperature increases, the applicability of the expression for currents at low bias voltages (\ref{Eq145}) and, therefore, the expression for approximate conductivity (\ref{Eq146}) increases. At the same time, it becomes apparent that the expression for conductivity based only on the probability of transmission (\ref{Eq148}) (low temperatures and small bias voltages) has a very small domain of applicability.

Finally, we note that in section \ref{sec:4_2_3}, the discussion of the electrical properties of the nanostructure at ${T^= }\approx 0\ \text{K}$ is given purely to explain their general trends. To obtain in this case adequate results for the real system, an accurate calculation of the potential in it is necessary. This is due to the fact that even for the simplest systems, only their high-temperature electrical properties are qualitatively insensitive to the details of their structure \cite{Bib004}.

Thus, from the study of approximate conductivity (Fig.~\ref{fig:8} and Fig.~\ref{fig:9}) and VAC (Fig.~\ref{fig:10}) of four-junctions nanostructure (Fig.~\ref{fig:7}) in section \ref{sec:4_2_3}, we have the following conclusion. To correctly describe the electrical properties of a nanostructure only on the basis of its S-matrix, two conditions must be met: extremely low temperatures (${T^= }\approx 0\ \text{K}$) and high-precision calculation of the potential in it.

\section{Conclusion}	\label{sec:5}

In this work, we considered the calculation of electron transport in branched semiconductor nanostructures. For this purpose, we developed and applied a model of a quantum network. The proposed calculation scheme is based on a special notation system that simplifies numerical implementation. In the calculation of the S-matrix of the junction, preference is given to the method of scattering boundary conditions. For them, a clear integro-differential form was proposed, and the method of calculating the S-matrix of a junction in an explicit form has also been developed. To calculate the S-matrix of the entire network in terms of its junctions' S-matrices, a network combining formula was proposed. It explicitly takes into account all connections and is a universal calculation algorithm suitable for networks of an arbitrary structure. The calculation of electrical currents through the network performed using the Landauer--B\"uttiker formalism, adapted to the designation system used.

We demonstrated the calculation scheme proposed in the work by modeling a nanostructure based on two-dimensional electron gas. For this purpose, we proposed a model of a network of smooth junctions with one, two and three adjacent branches. We calculated the electrical properties of such a network (using the example of GaAs), consisting of four junctions, depending on the temperature. We have found that the resonance structure of such a system manifests itself in its electrical properties only at extremely low temperatures ($\approx 0\ \text{K}$). Under such conditions, it is possible to reduce their calculation to only the S-matrix of the system with a high-precision calculation of the potential in it. With increasing temperature, electron statistics should be taken into account. At the same time, in electrical properties, a significant smoothing of the resonant structure of the system occurs, and its effect on transport properties is averaged.

The prospects for the theme of this work are follows:

\begin{itemize}
	\item application of scattering boundary conditions for self-consistent calculation of currents;

	\item generalization in the case of semiconductors with the relativistic dispersion law of charge carriers, taking into account spin-orbital interaction;

	\item modeling of semiconductor nanostructures with predefined transport properties.
\end{itemize}

\appendix
\normalsize

\section{Resonant and bound states}	\label{sec:A}

\begin{equation}	\label{Eq188}
	\{\ {{\square}^{[\mathbb{A}]}}\ \mapsto \ \square \ |\ \square \ = S,\psi,...\}\quad \operatorname{in}\ \text{\ref{sec:A}}
\end{equation}

In section \ref{sec:2_3_1}, we justified the clearness of the scattering boundary conditions, relying on a problem about resonant and bound states. In Appendix \ref{sec:A} we will show how it is formulated in terms of SBC.

Resonant and antiresonant, bound and antibond states are determined based on the poles and zeros of the S-matrix \cite{Bib032} (Table~\ref{tab:3}).

\begin{table}
\caption{\label{tab:3}Poles and zeros of S-matrix}
\begin{tabularx}{\linewidth}{c*{3}{>{\centering}X}c}
\toprule
	\multirow{3}*{symbol}	&\multicolumn{3}{c}{name}	\tabularnewline
\noalign{\smallskip}\cline{2-4}\noalign{\smallskip}
	&	$\forall \operatorname{Im}{{\varepsilon}^{\pm}}$	&	$\operatorname{Im}{{\varepsilon}^{\pm}}\ne 0$	&	$\operatorname{Im}{{\varepsilon}^{\pm}}=0$	\tabularnewline
\midrule
${{\varepsilon}^{+}}$	&	pole of S-matrix	&	energy of resonant state	&	energy of bound state	\tabularnewline
${{\varepsilon}^{-}}$	&	zero of S-matrix	&	energy of \\ antiresonant~state	&	energy of antibound state	\tabularnewline
\bottomrule
\end{tabularx}
\end{table}

\vspace{-10pt}
\begin{equation}	\label{Eq189}
	{{\varepsilon}^+}:\quad | S({{\varepsilon}^+}) | = \infty 
\end{equation}
\vspace{-20pt}
\begin{equation}	\label{Eq190}
	{{\varepsilon}^-}:\quad | S({{\varepsilon}^-}) | = 0
\end{equation}
From definitions (\ref{Eq036})--(\ref{Eq039}) we get
\begin{equation}	\label{Eq191}
	{{\{{{\psi}^{\triangleleft k}} = 0\}}^k}\quad \Leftrightarrow \quad {S^{-1}}{c^{\triangleright}} = 0
\end{equation}
\begin{equation}	\label{Eq192}
	{{\{{{\psi}^{\triangleright k}} = 0\}}^k}\quad \Leftrightarrow \quad S{c^{\triangleleft}} = 0
\end{equation}
Due to the criterion for the existence of non-zero solutions of homogeneous system of linear algebraic equations from statements (\ref{Eq191}) and (\ref{Eq192}) in GCF, it follows
\begin{equation}	\label{Eq193}
	| S({{\varepsilon}^+}) | = \infty \quad \Leftrightarrow \quad {{\Psi}^{\triangleleft}} = 0
\end{equation}
\begin{equation}	\label{Eq194}
	| S({{\varepsilon}^-}) | = 0 \hspace{12pt} \Leftrightarrow \quad {{\Psi}^{\triangleright}} = 0
\end{equation}
Then according to statements (\ref{Eq193}) and (\ref{Eq194}) taking into account expressions (\ref{Eq016}), (\ref{Eq049}) and (\ref{Eq050}) we have
\begin{equation}	\label{Eq195}
	\left\{ \begin{aligned}
		&[-\Delta +\upsilon ]{\Psi}^{\pm} = {\varepsilon}^{\pm}{\Psi}^{\pm} && \operatorname{in}\ \ \Omega \\
		&{\Psi}^{\pm} = 0 && \operatorname{on}\ \partial \Omega \backslash \Gamma \\
		&[K^{\pm} \pm i{\partial}_n]{\Psi}^{\pm} = 0 && \operatorname{on}\ \Gamma
	\end{aligned} \right.
\end{equation}
It means that the poles of the S-matrix are the eigenvalues of the problem with zero incoming SBC, the zeroes of the S-matrix are the eigenvalues of the problem with zero outgoing SBC.

The problem (\ref{Eq195}) about resonant and bound states ${\Psi}^+$ describes a “emitter”: a junction can emit electrons in the absence of incident ones on it. The problem (\ref{Eq195}) about antiresonant and antibound states ${\Psi}^-$ describes an “absorber”: a junction can absorb electrons in the absence of scattered ones by it. Here, the terms “incident” and “scattered” are used conventionally for waves in branch-channels with complex ${\kappa}^{\pm}$ (\ref{Eq028}).

The type of ${\kappa}^{\pm}$ for the waves in the branch-channels of problems (\ref{Eq195}) depends on the type of eigenvalue ${\varepsilon}^{\pm}$ corresponding to it. In turn, the type of ${\varepsilon}^{\pm}$ is determined by the amount of isolation of the junction. Let us explain it as follows. The energy structure of the channels in the branches $\{\lambda _m^k\}_m^k$ creates potential barriers of heights $\{{\min}_m \lambda _m^k\}^k$ on the boundaries with the junction. The barrier with the lowest height $\min _m^k\lambda _m^k$ provides the best connection with the environment. Depending on the type of junction and the value $\min _m^k\lambda _m^k$, among the eigenvalues of the problems (\ref{Eq195}) there may be real numbers ${{\varepsilon}^{\pm}}\le \min _m^k\lambda _m^k$~--- energies of bound and antibound states. Other eigenvalues with $\operatorname{Im}{{\varepsilon}^{\pm}}\ne 0$ are interpreted as energies of resonant and antiresonant states (Table~\ref{tab:3}).

The $\min _m^k\lambda _m^k$-dependent isolation of a junction is easily interpreted in terms of SBC. With the growth of channels' energies (for example, due to a decrease in the cross section of the branches), the isolation of the junction grows. In particular, when $\min _m^k\lambda _m^k\to \infty $ the problem (\ref{Eq195}) takes the form
\begin{equation}	\label{Eq196}
	\left\{ \begin{aligned}
		&[-\Delta +\upsilon ]{\Psi}^{\times} = {\varepsilon}^{\times}{\Psi}^{\times} && \operatorname{in}\ \ \Omega \\
		&{\Psi}^{\times} = 0 && \operatorname{on}\ \partial \Omega
	\end{aligned} \right.
\end{equation}
Problem (\ref{Eq196}) describes a completely isolated junction. This limit transition is equivalent to the case when, in comparison with the operator $K^{\pm}$, the operator ${\partial}_n$ becomes negligible. Therefore, in terms of SBC, the physical meaning of the operator $K$ is isolation of junction.

\section{Scattering boundary conditions in quantum wire}	\label{sec:B}

\begin{equation}	\label{Eq197}
	\{\ {{\square}^{\langle \text{I} \rangle}}\ \mapsto \ \square \ |\ \square \ = \Psi,\upsilon,...\}\quad \operatorname{in}\ \text{\ref{sec:B}}
\end{equation}

In section \ref{sec:2_3_2}, we indicated that in some cases, using the SBC, the S-matrix of the quantum network junction can be written in terms of the operator $G^{\Diamond}$ (\ref{Eq056}). This is possible at least in the problem of an electron scattering in a quantum wire with a compact potential (a consequence of a defect, an external electric field, etc.). In the quantum network model, this problem is equivalent to the scattering problem in the I-junction.

In Appendix \ref{sec:B}, we will show how to use the scattering boundary conditions (\ref{Eq042}) to write an expression for the S-matrix of the I-junction in terms of the operator $G^{\Diamond}$ (\ref{Eq058}). For this purpose, we will consider three-, two-, and one-dimensional mathematical models of quantum wire.

\subsection{Three-dimensional I-junction}	\label{sec:B_1}

The specificity of the quantum wire is that both branches (\ref{Eq019}) have cross sections $\beta = \{{\beta}^k\}^{k = 1,2}$ equal to the cross section of the wire in GCF ${{\rm B}}^= $:
\begin{equation}	\label{Eq198}
	{\{{{\bar{W}}^k}{{\beta}^k} = {{{\rm B}}^= }\}}^{k = 1,2}
\end{equation}
Taking into account the property (\ref{Eq198}), the electron scattering problem (\ref{Eq057}) in the support of the compact potential (I-junction) $\Omega = (-a,+a)\times {{{\rm B}}^= }$ takes the form:
\begin{equation}	\label{Eq199}
	\left\{ \begin{aligned}
		&[ -\Delta + \upsilon (x,y,z) ]\Psi (x,&&\!\!\!\!\!\!y,z) = \varepsilon \Psi (x,y,z),	\\
		&&& \ \ \{x,y,z\}\in \left( -a,+a \right)\times {{{\rm B}}^= } \\
		&\Psi (x,y,z) = 0, && \ \ \{x,y,z\}\in (-a,+a)\times \partial {\rm B}^=  \\
		&[ K + i{\partial}_1 ]W\Psi(0,y,z)&&\!\!\!\!\!\! = 2K{{\psi}^{\triangleleft}}(0,y,z),	\quad \{y,z\} \in \beta
	\end{aligned} \right.
\end{equation}
\begin{prop}	\label{prop:14}
 The operator $G^{\Diamond}$ for the scattering problem in the I-junction (\ref{Eq199}) is written as
\begin{equation}	\label{Eq200}
	G_{nm}^{\Diamond}(x) = \left[ \begin{matrix}
	g_{nm}^{\Diamond 1}(-x-a) & \ g_{nm}^{\Diamond 2}(-x-a) \\
		g_{nm}^{\Diamond 1}(+x+a) & \ g_{nm}^{\Diamond 2}(+x+a) \\
		\end{matrix} \right]
\end{equation}
\begin{equation}	\label{Eq201}
	\begin{aligned}
		&\sum\nolimits_n{\left( -{I_{mn}}\partial _1^2+{V_{mn}} \right)g_{nk}^{\Diamond}} = \sum\nolimits_n{{E_{mn}}g_{nk}^{\Diamond}},	\\
		&\forall x\in (-a,+a)\quad
		\left| \begin{matrix}
			g^{\Diamond 1} & g^{\Diamond 2} \\
			{\dot{g}}^{\Diamond 1} & {\dot{g}}^{\Diamond 2} \\
		\end{matrix} \right| (x)\ne 0
	\end{aligned}
\end{equation}
\begin{equation}	\label{Eq202}
	{V_{mn}} := \langle {h_m} |\upsilon | {h_n} \rangle,\quad {E_{mn}} := {I_{mn}}\left( \varepsilon -{{\lambda}_m} \right)
\end{equation}
where orthonormal system of functions ${\{{h_m}\}}_m$ and eigenvalues ${\{{{\lambda}_m}\}}_m$ are solution of the problem
\begin{equation}	\label{Eq203}
	\left\{ \begin{aligned}
		&[-\partial _y^2-\partial _z^2]{h_m} = {{\lambda}_m}{h_m} && \operatorname{in}\ \ {{{\rm B}}^= } \\
		&{h_m} = 0 && \operatorname{on}\ \partial {{{\rm B}}^= }
	\end{aligned} \right.
\end{equation}
\end{prop}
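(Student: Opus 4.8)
The plan is to follow the same route as the proof of Proposition~\ref{prop:13}, replacing the exactly diagonalizable transverse problem of the two-dimensional case by a coupled-channel system. First I would expand the wave function in the junction over the Dirichlet transverse basis of the wire cross section, writing $\Psi(x,y,z)=\sum_m g_m(x)h_m(y,z)$ with $g_m:=\langle h_m|\Psi\rangle$, where $\{h_m\}_m$ and $\{\lambda_m\}_m$ are the orthonormal eigenpairs of the two-dimensional Dirichlet problem~(\ref{Eq203}) on ${\rm B}^=$. Because each $h_m$ vanishes on $\partial{\rm B}^=$, the lateral boundary condition in~(\ref{Eq199}) is satisfied automatically, so only the interior equation and the incoming SBC remain to be imposed.

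Next I would substitute this expansion into the Schr\"odinger equation of~(\ref{Eq199}) and project onto $h_m$: the transverse Laplacian applied to $h_n$ returns $\lambda_n h_n$, orthonormality collapses the transverse integrals, and the potential term produces the coupling matrix $V_{mn}:=\langle h_m|\upsilon|h_n\rangle$. Collecting terms with $E_{mn}:=I_{mn}(\varepsilon-\lambda_m)$ gives exactly the coupled system~(\ref{Eq201})--(\ref{Eq202}) for the vector $g=\{g_m\}_m$. This is a linear second-order system, so when $N$ channels are retained its solution space is $2N$-dimensional; I would therefore fix any two matrix-valued fundamental solutions $g^{\Diamond 1},g^{\Diamond 2}$ of~(\ref{Eq201}) whose joint Wronskian in~(\ref{Eq201}) is nonzero at one point (hence, by Liouville's formula, on all of $(-a,+a)$) --- this is precisely the non-degeneracy requirement stated there. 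Every solution then reads $g(x)=g^{\Diamond 1}(x)c^{\Diamond 1}+g^{\Diamond 2}(x)c^{\Diamond 2}$, which is the GCF form of the representation~(\ref{Eq058}) with $c^{\Diamond}$ the stacked column $(c^{\Diamond 1},c^{\Diamond 2})$.

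Finally I would pass to the local coordinate frames on the two branch boundaries, just as in~(\ref{Eq171})--(\ref{Eq176}). Since each $g^{\Diamond l}_{nm}$ depends on $x$ alone it factors out of the transverse bracket, leaving $G^{\Diamond kl}_{nm}=[W^k g^{\Diamond l}_{nm}]\,\langle\bar W^k h_n^=|h_m\rangle$; the branch transverse modes coincide with the junction modes $h_m$ (both are Dirichlet eigenfunctions of ${\rm B}^=$), and with the universal placement of the LCFs (Fig.~\ref{fig:5}) chosen so that $\bar W^k h_n^= = h_n$, this bracket reduces to $I_{nm}$ --- in contrast with the two-dimensional case~(\ref{Eq176}) no sign factor appears. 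For the longitudinal variable the universal placement gives $W^1 g^{\Diamond}(x)=g^{\Diamond}(-x-a)$ and $W^2 g^{\Diamond}(x)=g^{\Diamond}(x+a)$, with $g^{\Diamond l}$ understood as extended to the branch regions (where $\upsilon=0$); assembling the blocks by branch (rows) and by fundamental-solution type (columns) yields exactly~(\ref{Eq200}).

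The step I expect to take the most care is the coordinate-frame bookkeeping of the last paragraph: checking that, unlike~(\ref{Eq176}), neither a sign nor an overlap factor survives --- i.e. that the junction's transverse basis may be identified with the branches' basis and that the chosen orientation of the LCFs leaves ${\rm B}^=$ pointwise invariant --- together with reading off the argument shifts $-x-a$ and $+x+a$ from the geometry $\Omega=(-a,+a)\times{\rm B}^=$. Deriving the coupled system~(\ref{Eq201}) and recognizing its Wronskian condition as the statement that $g^{\Diamond 1},g^{\Diamond 2}$ form a fundamental set is routine in comparison.
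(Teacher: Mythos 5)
Your proposal follows the paper's proof essentially step for step: expansion of $\Psi$ over the transverse Dirichlet eigenbasis of ${\rm B}^=$, projection onto $h_m$ to obtain the coupled system (\ref{Eq201})--(\ref{Eq202}), representation of the general solution through a fundamental system with nonvanishing Wronskian, and the final coordinate-frame bookkeeping giving the block form (\ref{Eq200}). The one point to correct is your justification of why no sign or overlap factor survives: the paper does \emph{not} use the ``universal'' rotation-plus-translation placement of Fig.~\ref{fig:5} here (that placement is what produces the $(-1)^{n+1}$ factors in the two-dimensional result (\ref{Eq211})); instead it takes the ``simplest case'' in which the two LCFs are related by reflection relative to the $YZ$ plane, so that ${\bar W}^1 h_n^1 = h_n = {\bar W}^2 h_n^2$ and the transverse bracket in (\ref{Eq208}) collapses to $I_{np}$. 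You do state the correct underlying requirement --- that the chosen LCF orientation leave ${\rm B}^=$ pointwise invariant --- but attributing it to the Fig.~\ref{fig:5} convention is inconsistent, since a rotation about the branch axis generally acts nontrivially on the cross section; with that attribution fixed, the argument is the paper's.
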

\begin{proof}[\textsc{Proof~\ref{prop:14}}]
 We are looking for a solution to the problem (\ref{Eq199}) in the form of decomposition according to the orthonormal system ${\{{h_m}\}}_m$ (\ref{Eq203}):
\begin{equation}	\label{Eq204}
	\Psi \left( x,y,z \right) = \sum\nolimits_m{{g_m}\left( x \right){h_m}\left( y,z \right)},\quad {g_m} := \langle {h_m} | \Psi \rangle 
\end{equation}
$\begin{aligned}
	&(\ref{Eq199}) \quad \Rightarrow \quad 0 = [ -\Delta -\varepsilon +\upsilon ]| \Psi \rangle	\\
	&= \sum\nolimits_{mn}{| {h_m} \rangle \langle {h_m} |\left[ -\partial _1^2-\partial _2^2-\partial _3^2-\varepsilon +\upsilon \right]| {h_n} \rangle \langle {h_n} | \Psi \rangle};	\\
	&(\ref{Eq204}), (\ref{Eq203}), (\ref{Eq202}) \quad \Rightarrow
\end{aligned}$
\begin{equation}	\label{Eq205}
	\sum\nolimits_n{\left( -{I_{mn}}\partial _1^2+{V_{mn}} \right){g_n}} = \sum\nolimits_n{{E_{mn}}{g_n}}
\end{equation}

The solution of system (\ref{Eq205}) can be written as:
\begin{equation}	\label{Eq206}
	\begin{aligned}
		{g_m}(x) = \sum\nolimits_n{g_{mn}^{\Diamond 1}(x)c_n^{\Diamond 1}} &+ \sum\nolimits_n{g_{mn}^{\Diamond 2}(x)c_n^{\Diamond 2}}	\\
		\Leftrightarrow \quad g =\,& {g^{\Diamond}}{c^{\Diamond}}
	\end{aligned}
\end{equation}
where $g^{\Diamond 1}$, $g^{\Diamond 2}$ are matrices of the \textit{fundamental system of solutions} (\textit{FSS}) of problem (\ref{Eq205}), defined as (\ref{Eq201}).

\hspace{-17pt}
$\begin{aligned}
	&(\ref{Eq204}), (\ref{Eq206}) \quad \Rightarrow \quad {W^k}| \Psi \rangle = \sum\nolimits_n{| h_n^k \rangle \langle h_n^k |{W^k}| \Psi \rangle} = \\
	&\sum\nolimits_{nmp}^l{| h_n^k \rangle \langle h_n^k |{W^k}g_{pm}^{\Diamond l}c_m^{\Diamond l}| {h_p} \rangle} = :\sum\nolimits_{nm}^l{G_{nm}^{\Diamond kl}c_m^{\Diamond l}h_n^k}\quad \Rightarrow
\end{aligned}$
\begin{equation}	\label{Eq207}
	{W^k}\Psi = \sum\nolimits_n{[{G^{\Diamond}}{c^{\Diamond}}]_n^kh_n^k},\quad G_{nm}^{\Diamond kl} = \sum\nolimits_p{\langle h_n^k |{W^k}g_{pm}^{\Diamond l}| {h_p} \rangle}
\end{equation}

The operator $G^{\Diamond}$ in expression (\ref{Eq207}) completely complies with the definition (\ref{Eq058}). It can also be written in matrix form by the branches:

\pagebreak
\begin{strip}
$\begin{aligned}
	G_{nm}^{\Diamond kl} = \sum\nolimits_p{\left[ {W^k}g_{pm}^{\Diamond l} \right]\langle h_n^k |{W^k}| {h_p} \rangle}
	= \sum\nolimits_p{\left[ {W^k}g_{pm}^{\Diamond l} \right]}\langle {{{\bar{W}}}^k}h_n^k | {h_p} \rangle ; \quad	
	\left\{ \begin{aligned}
		&{W^1}g_{pm}^{\Diamond}(x) = g_{pm}^{\Diamond}(-x-a) \\
		&{W^2}g_{pm}^{\Diamond}(x) = g_{pm}^{\Diamond}(+x+a)
	\end{aligned} \right.\quad \Rightarrow
\end{aligned}$
\begin{equation}	\label{Eq208}
	G_{nm}^{\Diamond}(x) = \sum\nolimits_p{\left[ \begin{matrix}
	g_{pm}^{\Diamond 1}(-x-a)\langle {{{\bar{W}}}^1}h_n^1 | {h_p} \rangle & \ g_{pm}^{\Diamond 2}(-x-a)\langle {{{\bar{W}}}^1}h_n^1 | {h_p} \rangle \\
		g_{pm}^{\Diamond 1}(+x+a)\langle {{{\bar{W}}}^2}h_n^2 | {h_p} \rangle & \ g_{pm}^{\Diamond 2}(+x+a)\langle {{{\bar{W}}}^2}h_n^2 | {h_p} \rangle \\
		\end{matrix} \right]}
\end{equation}
\vspace{-15pt}
\end{strip}
The scalar products in expression (\ref{Eq208}) are determined by the cross section of the wire and the choice of LCFs in a particular problem. In the simplest case, when LCFs relates to each other by reflection relative to the plane $YZ$, we have ${{\bar{W}}^1}h_n^1 = {h_n} = {{\bar{W}}^2}h_n^2$, and the expression (\ref{Eq208}) takes the form (\ref{Eq200}).	
\end{proof}

FSS matrices $\{g^{\Diamond k}\}^{k = 1,2}$ can be found numerically, for example, by solving Cauchy problems:
\begin{equation}	\label{Eq209}
	\left\{ \begin{aligned}
		&\sum\nolimits_p{\left[ -{I_{mp}}\partial _x^2+{V_{mp}}(x) \right] g_{pn}^{\Diamond k}(x)} = \sum\nolimits_p&&\!\!\!\!\!\!{{E_{mp}}g_{pn}^{\Diamond k}(x)},	\\
		&&& x\in \left( -a,+a \right) \\
		&g_{mn}^{\Diamond k}\left( -a \right) = I_{mn}^{1k},\quad {{\partial}_1}g_{mn}^{\Diamond k}\left( -a \right) = I_{mn}^{2k}&&
	\end{aligned} \right.
\end{equation}

Thus, the three-dimensional scattering problem on heterogeneity in quantum wire (\ref{Eq199}) is reduced to a system of ordinary differential equations. Having calculated the matrices of the fundamental system of solutions based on problems (\ref{Eq209}), one can write the operator $G^{\Diamond}$ (\ref{Eq207}). With it, it is possible to find the decomposition coefficients for the function $\Psi $ in expressions (\ref{Eq204}), (\ref{Eq206}) and the S-matrix of the I-junction using formulas (\ref{Eq060}) and (\ref{Eq056}), respectively.

\subsection{Two-dimensional I-junction}	\label{sec:B_2}

The two-dimensional mathematical model of quantum wire is relevant for structures formed on the basis of two-dimensional electron gas. As in the three-dimensional case, the specificity (\ref{Eq198}) for branches with cross sections in the form of intervals $\{{{\beta}^k} = (0,b)\}^{k = 1,2}$ is preserved here. Also we choose GCF such that ${{{\rm B}}^= } = (0,b)$. Then, taking into account the property (\ref{Eq198}), the electron scattering problem (\ref{Eq057}) in the support of the compact potential (I-junction) $\Omega = (-a,+a)\times (0,b)$ takes the form:
\begin{equation}	\label{Eq210}
	\left\{ \begin{aligned}
		&[ -\Delta + \upsilon (x,y) ]\Psi (x,y)  &&\!\!\!\!\!\!\!\!= \varepsilon \Psi (x,y),	\\
		&&&\!\!\!\!\!\!\!\!\{x,y\} \in (-a,+a) \times (0,b) \\
		&\Psi (x,0) = \Psi (x,b) = 0,  &&\qquad\qquad\!\!  x \in (-a,+a) \\
		&[ K+i{\partial}_1]W\Psi (0,y) = &&\!\!\!\!\!\! 2K{{\psi}^{\triangleleft}}(0,y),  \quad y \in (0,b)
	\end{aligned} \right.
\end{equation}
\begin{prop}	\label{prop:15}
 The operator $G^{\Diamond}$ for the scattering problem in the I-junction (\ref{Eq210}) is written as
\begin{equation}	\label{Eq211}
	G_{nm}^{\Diamond}(x) = \left[ \begin{matrix}
	{{\left( -1 \right)}^{n+1}}g_{nm}^{\Diamond 1}(-x-a) & \ {{(-1)}^{n+1}}g_{nm}^{\Diamond 2}(-x-a) \\
		g_{nm}^{\Diamond 1}(+x+a) & \ g_{nm}^{\Diamond 2}(+x+a) \\
		\end{matrix} \right]
\end{equation}
FSS matrices $\{g^{\Diamond k}\}^{k = 1,2}$ are defined as (\ref{Eq201}) and (\ref{Eq202}), where orthonormal system of functions $\{{h_m}\}_m$ and eigenvalues $\{{\lambda}_m\}_m$ are solution of the problem
\begin{equation}	\label{Eq212}
	\left\{ \begin{aligned}
		&-\partial _y^2{h_m}(y) = {\lambda}_m{h_m}(y), && y\in (0,b) \\
		&{h_m}(y) = 0, && y\in \{0,b\}
	\end{aligned} \right.
\end{equation}
\end{prop}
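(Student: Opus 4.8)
The plan is to repeat, in one lower dimension, the argument used for the three-dimensional I-junction in Proposition~\ref{prop:14}. First I would expand the sought solution of~(\ref{Eq210}) in the orthonormal transverse basis of the potential-free Sturm--Liouville problem~(\ref{Eq212}), $\Psi(x,y)=\sum_m g_m(x)h_m(y)$ with $g_m:=\langle h_m|\Psi\rangle$. Substituting this into $[-\Delta+\upsilon]\Psi=\varepsilon\Psi$, projecting onto $\{h_m\}$, and using $-\partial_y^2 h_m=\lambda_m h_m$ together with the Dirichlet condition $\Psi(x,0)=\Psi(x,b)=0$, produces the coupled linear system $\sum_n(-I_{mn}\partial_1^2+V_{mn})g_n=\sum_n E_{mn}g_n$ with $V_{mn}:=\langle h_m|\upsilon|h_n\rangle$ and $E_{mn}:=I_{mn}(\varepsilon-\lambda_m)$~--- literally the two-dimensional copy of~(\ref{Eq205}), (\ref{Eq201}), (\ref{Eq202}), the coupling matrix $V$ being $x$-dependent whenever $\upsilon$ is. Its general solution is $g=g^{\Diamond}c^{\Diamond}$, where $g^{\Diamond 1},g^{\Diamond 2}$ are any two matrix solutions whose Wronskian stays nonsingular on $(-a,+a)$, i.e.\ satisfying~(\ref{Eq201}); that clause merely records that these two columns span the whole solution space.

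Next I would push this decomposition through the change to local coordinate frames exactly as in~(\ref{Eq207})--(\ref{Eq208}). Inserting $\Psi=\sum_p g_p h_p$ into $W^k\Psi$, using that $W^k$ acts only on the $x$-dependence and the completeness of $\{h_n^k\}$, one gets $W^k\Psi=\sum_n[G^{\Diamond}c^{\Diamond}]_n^k h_n^k$ with $G_{nm}^{\Diamond kl}=\sum_p\langle h_n^k|W^k g_{pm}^{\Diamond l}|h_p\rangle=\sum_p[W^k g_{pm}^{\Diamond l}]\,\langle\bar W^k h_n^k|h_p\rangle$, which matches the definition~(\ref{Eq058}). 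The rest is geometry. With the LCF origins at the starts of the branches~(\ref{Eq020}) and the junction occupying $(-a,+a)$ in $x$, the longitudinal factors reparametrize as $W^1 g_{pm}^{\Diamond}(x)=g_{pm}^{\Diamond}(-x-a)$ and $W^2 g_{pm}^{\Diamond}(x)=g_{pm}^{\Diamond}(+x+a)$; and, since the two LCFs are related by the reflection $y\mapsto b-y$ of the transverse coordinate, the one-dimensional transverse modes $h_n\propto\sin(\pi n y/b)$ obey $\bar W^1 h_n^1=(-1)^{n+1}h_n$ and $\bar W^2 h_n^2=h_n$~--- the same identities as~(\ref{Eq175})--(\ref{Eq176}). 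Because these modes coincide in the junction and in both branches, $\langle\bar W^k h_n^k|h_p\rangle=(-1)^{n+1}\delta_{np}$ for $k=1$ and $=\delta_{np}$ for $k=2$, so the sum over $p$ collapses to $p=n$. Feeding all of this into the general matrix form~(\ref{Eq208}) yields~(\ref{Eq211}).

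I do not expect a real obstacle: the statement is simply the two-dimensional specialization of an already-proved result. The step needing the most care is the local-frame bookkeeping~--- checking that $W^k$ commutes with the transverse part and reduces to the affine reparametrization of $x$, and that the $y\mapsto b-y$ reflection relating the two frames is precisely what generates the $(-1)^{n+1}$ in the first row of~(\ref{Eq211}). One should also note that the nonsingular-Wronskian clause in~(\ref{Eq201}) is what guarantees $G^{\Diamond}$, and hence $[\,iKG^{\Diamond}-\dot G^{\Diamond}\,](0)$ in~(\ref{Eq056}), is well defined; and that, unlike the three-dimensional case, problem~(\ref{Eq212}) is elementary, with $h_m(y)=\sqrt{2/b}\,\sin(\pi m y/b)$ and $\lambda_m=(\pi m/b)^2$ known in closed form, so in practice only the FSS matrices $g^{\Diamond k}$ need be obtained numerically, e.g.\ from Cauchy problems analogous to~(\ref{Eq209}).
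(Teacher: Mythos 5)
Your proposal is correct and follows essentially the same route as the paper's own proof: transverse-mode decomposition reducing to the ODE system (\ref{Eq205})--(\ref{Eq209}), followed by the local-frame bookkeeping in which the reflection $y\mapsto b-y$ produces the $(-1)^{n+1}$ factor via $\bar{W}^1h_n^1=(-1)^{n+1}h_n^1$ and $\bar{W}^2h_n^2=h_n^1$, exactly as in (\ref{Eq214}).
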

\begin{proof}[\textsc{Proof~\ref{prop:15}}]
 We are looking for a solution to the problem (\ref{Eq210}) in the form of decomposition according to the orthonormal system ${\{{h_m}\}}_m$ (\ref{Eq212}):
\begin{equation}	\label{Eq213}
	\Psi \left( x,y \right) = \sum\nolimits_m{{g_m}\left( x \right){h_m}\left( y \right)},\quad {g_m} := \langle {h_m} | \Psi \rangle 
\end{equation}
Subsequent computations dimensionally accurate are similar to those following decomposition (\ref{Eq204}). Therefore, the expressions (\ref{Eq205})--(\ref{Eq209}) are also valid here.

For two-dimensional systems, it is convenient to use LCFs relates to each other by rotation and translation (Fig.~\ref{fig:5}). Let us find the operator $G^{\Diamond}$ (\ref{Eq208}) in this case.

\begin{strip}
\noindent
$\left. \begin{aligned}
	&\left\{ \begin{aligned}
		&{w^1}(x,y) = (-x-a,b-y)\quad &\Rightarrow& \quad {W^1}g_m^{\Diamond}(x){h_m}(y) = g_m^{\Diamond}(-x-a){h_m}(b-y) \\
		 &{w^2}(x,y) = (+x+a,y)\quad &\Rightarrow& \quad {W^2}g_m^{\Diamond}(x){h_m}(y) = g_m^{\Diamond}(+x+a){h_m}(y)
	\end{aligned} \right.\quad \\
	&{{{\bar{W}}}^2}h_m^2 = h_m^2 = h_m^1,\quad \quad {{{\bar{W}}}^1} = {W^1} \\
	&(\ref{Eq212}) \quad \Rightarrow \quad h_m^k\left( y \right) = \sqrt{\tfrac{2}{b}}\sin \left( \tfrac{\pi m}{b}y \right)\quad \Rightarrow  \quad {{{\bar{W}}}^1}h_m^1(y) = \sqrt{\tfrac{2}{b}}\sin \left( \tfrac{\pi m}{b}[b-y] \right)	 \quad \\
	&\quad = \sqrt{\tfrac{2}{b}}\left[ \sin \left( \tfrac{\pi m}{b}b \right)\cos \left( \tfrac{\pi m}{b}y \right)-\cos \left( \tfrac{\pi m}{b}b \right)\sin \left( \tfrac{\pi m}{b}y \right) \right] = {{\left( -1 \right)}^{m+1}}h_m^1\left( y \right) \quad
\end{aligned}
\right\}\Rightarrow $
\vspace{-20pt}
\end{strip}
\begin{equation}	\label{Eq214}
	{{\bar{W}}^1}h_n^1 = {{\left( -1 \right)}^{n+1}}h_n^1,\quad {{\bar{W}}^2}h_n^2 = h_n^1
\end{equation}
We choose the GCF so that ${h_n} = h_n^1$. Then (\ref{Eq208}), (\ref{Eq214})~$\quad \Rightarrow \quad$~(\ref{Eq211}).	
\end{proof}
\newpage

\mbox{}\\
Thus, the two-dimensional problem of scattering on inhomogeneity in the quantum wire (\ref{Eq210}) is reduced to a system of ordinary differential equations. Having calculated the matrices of the fundamental system of solutions based on problems (\ref{Eq209}), one can write the operator $G^{\Diamond}$ (\ref{Eq207}). With it, it is possible to find the decomposition coefficients for the function $\Psi $ in expressions (\ref{Eq213}), (\ref{Eq206}) and the S-matrix of the I-junction using formulas (\ref{Eq060}) and (\ref{Eq056}), respectively.

\subsection{One-dimensional I-junction}	\label{sec:B_3}

The simplest mathematical model of quantum wire is a one-dimensional model. It describes the motion of an electron in a single channel whose energy is assumed to be zero. Due to its specificity, this model has a small field of application and gives an idea of the transport properties of quantum wire at a qualitative level. For one-dimensional case
\begin{equation}	\label{Eq215}
	\{h_n^k\}_n^{k = 1,2} = \{1,1\},\quad {{\{{K^{kl}} = {I^{kl}}{{\kappa}^l},{{\kappa}^l} = \sqrt{\varepsilon}\}}^{k,l = 1,2}}
\end{equation}
Then, taking into account expressions (\ref{Eq037}), the electron scattering problem (\ref{Eq057}) in the support of the compact potential (I-junction) $\Omega = (-a,+a)$ takes the form:
\begin{equation}	\label{Eq216}
	\left\{ \begin{aligned}
		&[ -\partial _x^2+\upsilon (x)]\Psi (x) = \varepsilon \Psi (x), \quad x\in (-a,+a) \\
		&[ K+i{\partial}_1 ]W\Psi (0) = 2K{c^{\triangleleft}}
	\end{aligned} \right.
\end{equation}
\begin{prop}	\label{prop:16}
 The operator $G^{\Diamond}$ for the scattering problem in the I-junction (\ref{Eq216}) is written as
\begin{equation}	\label{Eq217}
	{G^{\Diamond}}(x) = \left[ \begin{matrix}
	{g^{\Diamond 1}}(-x-a) & \ {g^{\Diamond 2}}(-x-a) \\
		{g^{\Diamond 1}}(+x+a) & \ {g^{\Diamond 2}}(+x+a) \\
		\end{matrix} \right]
\end{equation}
where $g^{\Diamond 1}$ and $g^{\Diamond 2}$ are linearly independent solutions of the equation for $\Psi $ in the problem (\ref{Eq216}).

\end{prop}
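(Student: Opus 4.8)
The plan is to obtain Proposition~\ref{prop:16} as the one-dimensional specialization of the argument already used for Proposition~\ref{prop:14} and Proposition~\ref{prop:15}, in which all transverse structure degenerates to a point and the fundamental system of solutions becomes scalar-valued.

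First I would use the fact that the equation for $\Psi$ in problem (\ref{Eq216}), namely $[-\partial_x^2 + \upsilon(x)]\Psi(x) = \varepsilon\Psi(x)$ on $\Omega = (-a,+a)$, is a second-order linear ODE, so its solution space is two-dimensional and any two linearly independent solutions $g^{\Diamond 1},g^{\Diamond 2}$ span it: $\Psi(x) = g^{\Diamond 1}(x)\,c^{\Diamond 1} + g^{\Diamond 2}(x)\,c^{\Diamond 2}$. This is the 1D analogue of the fundamental-system decomposition (\ref{Eq206}), and linear independence is exactly the statement that the Wronskian $g^{\Diamond 1}\dot g^{\Diamond 2} - g^{\Diamond 2}\dot g^{\Diamond 1}$ is nowhere zero on $(-a,+a)$, i.e.\ condition (\ref{Eq201}) with the transverse index absent. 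Collecting $c^{\Diamond} := \{c^{\Diamond l}\}^{l}$ gives the decomposition $\Psi = g^{\Diamond}c^{\Diamond}$ required in (\ref{Eq058})--(\ref{Eq060}).

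Next I would pass to the two local coordinate frames. By (\ref{Eq215}) there is a single channel with $h^k = 1$ and $\kappa = \sqrt{\varepsilon}$, so the general definition (\ref{Eq207}) of the operator $G^{\Diamond}$ collapses to $G^{\Diamond kl}(x) = \langle h^k | W^k g^{\Diamond l} | h \rangle = [W^k g^{\Diamond l}](x)$, the bra--ket of constants contributing only a factor $1$ and no summation over transverse modes. With the LCF origins placed at the starts of the branches (\ref{Eq020}) and the branch geometry (\ref{Eq019})--(\ref{Eq021}), the coordinate substitutions in the one-dimensional I-junction are $w^1(x) = -x-a$ and $w^2(x) = x+a$, hence $W^1 g^{\Diamond l}(x) = g^{\Diamond l}(-x-a)$ and $W^2 g^{\Diamond l}(x) = g^{\Diamond l}(+x+a)$. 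Writing the row index $k = 1,2$ and the column index $l = 1,2$ in matrix form then yields (\ref{Eq217}).

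I do not expect a genuine obstacle here; the only point requiring care is the bookkeeping of the two coordinate frames --- which branch end corresponds to $x = -a$ versus $x = +a$, and the orientation of each LCF --- so that the signs of the arguments $-x-a$ and $+x+a$ are consistent with the conventions (\ref{Eq019})--(\ref{Eq021}). As a sanity check I would verify consistency with (\ref{Eq211}): in the one-dimensional, single-mode limit $n = 1$, so $(-1)^{n+1} = +1$ and the factors multiplying the first row disappear, and setting $b$ irrelevant reduces (\ref{Eq211}) precisely to (\ref{Eq217}). Once $G^{\Diamond}$ is in hand, the S-matrix of the one-dimensional I-junction follows from (\ref{Eq060}) and (\ref{Eq056}) exactly as in the higher-dimensional cases.
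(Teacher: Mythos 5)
Your proposal is correct and follows essentially the same route as the paper: expand $\Psi$ in the two-dimensional solution space of the ODE, identify $G^{\Diamond kl}=W^k g^{\Diamond l}$ (trivially consistent with (\ref{Eq058}) since (\ref{Eq215}) leaves a single constant mode), and evaluate the coordinate maps $w^1x=-x-a$, $w^2x=+x+a$ to obtain (\ref{Eq217}). The cross-check against the single-mode limit of (\ref{Eq211}) is a sensible addition not present in the paper, but the core argument is the same.
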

\begin{proof}[\textsc{Proof~\ref{prop:16}}]
 The solution of problem (\ref{Eq216}) is a linear combination of two linearly independent solutions of the equation for $\Psi $ in problem (\ref{Eq216}):
\begin{equation}	\label{Eq218}
	\Psi \left( x \right) = {g^{\Diamond 1}}\left( x \right){c^{\Diamond 1}}+{g^{\Diamond 2}}\left( x \right){c^{\Diamond 2}} = \sum\nolimits_{}^k{{g^{\Diamond k}}\left( x \right){c^{\Diamond k}}}
\end{equation}
$\begin{aligned}
(\ref{Eq218}) \quad \Rightarrow \quad {W^k}\Psi = {W^k}\sum\nolimits_{}^l{{g^{\Diamond l}}{c^{\Diamond l}}} = \sum\nolimits_{}^l{{W^k}{g^{\Diamond l}}{c^{\Diamond l}}} \quad \Rightarrow
\end{aligned}$
\begin{equation}	\label{Eq219}
	{W^k}\Psi = {{[{G^{\Diamond}}{c^{\Diamond}}]}^k},\quad {G^{\Diamond kl}} := {W^k}{g^{\Diamond l}}
\end{equation}

Taking into account the equalities (\ref{Eq215}), the operator $G^{\Diamond}$ in expression (\ref{Eq219}) completely meets the definition (\ref{Eq058}). It can also be written in matrix form:

\renewcommand{\qedsymbol}{}
\begin{strip}
$\left\{ \begin{aligned}
	&{G^{\Diamond kl}} = {W^k}{g^{\Diamond l}}\quad \Rightarrow \quad {G^{\Diamond}}\left( x \right) =
	\left[ \begin{matrix}
		{W^1}{g^{\Diamond 1}}\left( x \right)  & \ {W^1}{g^{\Diamond 2}}\left( x \right) \\
		{W^2}{g^{\Diamond 1}}\left( x \right)  & \ {W^2}{g^{\Diamond 2}}\left( x \right) \\
	\end{matrix} \right] \\
	&{{\left( {w^1} \right)}^{-1}}x = -x-a\quad \Rightarrow \quad {w^1}x = -x-a\quad \Rightarrow \quad {W^1}{g^{\Diamond}}\left( x \right) = {g^{\Diamond}}\left( {w^1}x \right) = {g^{\Diamond}}\left( -x-a \right) \\
	&{{\left( {w^2} \right)}^{-1}}x = +x-a\quad \Rightarrow \quad {w^2}x = +x+a\quad \Rightarrow \quad {W^2}{g^{\Diamond}}\left( x \right) = {g^{\Diamond}}\left( {w^2}x \right) = {g^{\Diamond}}\left( +x+a \right)
\end{aligned} \right.\quad \Rightarrow \quad $ (\ref{Eq217})	~\hfill$\blacksquare$
\end{strip}
\end{proof}

\vspace{-22pt}
Functions $g^{\Diamond k}$ can be found numerically, for example, by solving two Cauchy problems:
\begin{equation}	\label{Eq220}
	\left\{ \begin{aligned}
		&[ -\partial _x^2+\upsilon (x) ]{g^{\Diamond k}}(x) = \varepsilon {g^{\Diamond k}}(x),\quad x\in (-a,+a) \\
		&{g^{\Diamond k}}(-a) = {I^{1k}},\quad {{\partial}_1}{g^{\Diamond k}}(-a) = {I^{2k}}
	\end{aligned} \right.
\end{equation}

Thus, the one-dimensional scattering problem on heterogeneity in the quantum wire (\ref{Eq216}) is reduced to a linear combination of two functions. Having calculated two linearly independent solutions based on problems (\ref{Eq220}), one can write the operator $G^{\Diamond}$ (\ref{Eq217}). With it, it is possible to find the decomposition coefficients for the function $\Psi $ in expression (\ref{Eq218}) and the S-matrix of the I-junction using formulas (\ref{Eq060}) and (\ref{Eq056}), respectively.

\section{S-matrix of quantum network junction in terms of DN- and ND-map}	\label{sec:C}

\begin{equation}	\label{Eq221}
	\{\ {{\square}^{[\mathbb{A}]}}\ \mapsto \ \square \ |\ \square \ = D,N,...\}\quad \operatorname{in}\ \text{\ref{sec:C}}
\end{equation}

In section \ref{sec:2_3_3}, we have written the DN- and ND-map methods within the notation system used (section \ref{sec:2_1_2}). In Appendix \ref{sec:C}, we will prove statements 4 and 5.

\begin{proof}[\textsc{Proof~\ref{prop:4}}]
 (\ref{Eq062}), $\chi = \{{\chi}^k\}^k$, (\ref{Eq030}) $\quad \Rightarrow \quad$
\begin{equation}	\label{Eq222}
	\sum\nolimits_n{D_{mn}^{kl}\langle h_n^l | {{\chi}^l}\left( 0,... \right) \rangle} = \langle h_m^k | {{\partial}_1}{W^k}\Psi \left( 0,... \right) \rangle 
\end{equation}
(\ref{Eq062}), (\ref{Eq044}) $\quad \Rightarrow \quad$
\begin{equation}	\label{Eq223}
	\sum\nolimits_{}^l{{D^{kl}}{{\psi}^l}} = {{\partial}_1}{{\psi}^k}\quad \operatorname{on}\ \gamma 
\end{equation}
(\ref{Eq222}), (\ref{Eq223}), (\ref{Eq035})--(\ref{Eq037}) $\quad \Rightarrow \quad$

\noindent
$\begin{aligned}
	0 =& \sum\nolimits_{}^l{{D^{kl}}| {{\psi}^l}(0,...) \rangle}-| {{\partial}_1}{{\psi}^k}(0,...) \rangle	\\
	=& \sum\nolimits_{mn}^l{| h_m^l \rangle \langle h_m^l |{D^{kl}}| h_n^l \rangle \langle h_n^l | {{\psi}^l}(0,...) \rangle}	\\
	&-\sum\nolimits_m^{}{| h_m^l \rangle \langle h_m^l | {{\partial}_1}{{\psi}^k}(0,...) \rangle}	\\
	 =& \sum\nolimits_m{| h_m^l \rangle}\left[ \sum\nolimits_n^l{D_{mn}^{kl}\langle h_n^l | {{\psi}^l}(0,...) \rangle } \right.	\\
	 &\qquad\qquad\qquad\; \left. - \langle h_m^l | {{\partial}_1}{{\psi}^k}(0,...) \rangle \right] \quad \Rightarrow
\end{aligned}$

\noindent
$\begin{aligned}
	0 =& \sum\nolimits_n^l{D_{mn}^{kl}\langle h_n^l | {{\psi}^l}(0,...) \rangle}-\langle h_m^k | {{\partial}_1}{{\psi}^k}(0,...) \rangle	\\
	=& \sum\nolimits_n^l{D_{mn}^{kl}\langle h_n^l | \sum\nolimits_p{\left[ {c^{\triangleleft}}+S{c^{\triangleleft}} \right]_p^l}h_p^l \rangle}	\\
	&- \langle h_m^k | \sum\nolimits_n{\left[ -iK{c^{\triangleleft}}+iKS{c^{\triangleleft}} \right]_n^kh_n^k} \rangle	\\
	 =& \sum\nolimits_n^l{D_{mn}^{kl}\left[ {c^{\triangleleft}}+S{c^{\triangleleft}} \right]_n^l}-\left[ -iK{c^{\triangleleft}}+iKS{c^{\triangleleft}} \right]_m^k\quad \Rightarrow	\\
	 O =& D\left[ {c^{\triangleleft}}+S{c^{\triangleleft}} \right]+iK{c^{\triangleleft}}-iKS{c^{\triangleleft}}
\end{aligned}$

\noindent
$\begin{aligned}
	\quad =& \left[ D+iK \right]{c^{\triangleleft}}+\left[ D-iK \right]S{c^{\triangleleft}}\quad \Rightarrow \quad (\ref{Eq061})
\end{aligned}$ 	
\end{proof}

\begin{proof}[\textsc{Proof~\ref{prop:5}}]
  (\ref{Eq064}), $\dot{\chi} = \{ \dot{\chi}^k \}^k$, (\ref{Eq030}) $\quad \Rightarrow \quad$
\begin{equation}	\label{Eq224}
	\sum\nolimits_n{N_{mn}^{kl}\langle h_n^l | {{{\dot{\chi}}}^l}\left( 0,... \right) \rangle} = \langle h_m^k | {W^k}\Psi \left( 0,... \right) \rangle 
\end{equation}
(\ref{Eq064}), (\ref{Eq044}) $\quad \Rightarrow \quad$
\begin{equation}	\label{Eq225}
	\sum\nolimits_{}^l{{N^{kl}}{{\partial}_1}{{\psi}^l}} = {{\psi}^k}\quad \operatorname{on}\ \gamma 
\end{equation}
(\ref{Eq224}), (\ref{Eq225}), (\ref{Eq035})--(\ref{Eq037}) $\quad \Rightarrow \quad$

\noindent
$\begin{aligned}
	0 =& \sum\nolimits_{}^l{{N^{kl}}| {{\partial}_1}{{\psi}^l}(0,...) \rangle}-| {{\psi}^k}(0,...) \rangle	\\
	=& \sum\nolimits_{mn}^l{| h_m^l \rangle \langle h_m^l |{N^{kl}}| h_n^l \rangle \langle h_n^l | {{\partial}_1}{{\psi}^l}(0,...) \rangle}	\\
	&-\sum\nolimits_m^{}{| h_m^l \rangle \langle h_m^l | {{\psi}^k}(0,...) \rangle}	\\
	=& \sum\nolimits_m^{}{| h_m^l \rangle}\left[ \sum\nolimits_n^l{N_{mn}^{kl}\langle h_n^l | {{\partial}_1}{{\psi}^l}(0,...) \rangle} \right.	\\
	&\qquad\qquad\qquad\; \left. - | h_m^l \rangle \langle h_m^l | {{\psi}^k}(0,...) \rangle \right]	\quad \Rightarrow
\end{aligned}$

\noindent
$\begin{aligned}
	0 =& \sum\nolimits_n^l{N_{mn}^{kl}\langle h_n^l | {{\partial}_1}{{\psi}^l}(0,...) \rangle}-\langle h_m^k | {{\psi}^k}(0,...) \rangle	\\
	=& \sum\nolimits_n^l{N_{mn}^{kl}\langle h_n^l | \sum\nolimits_p{\left[ -iK{c^{\triangleleft}}+iKS{c^{\triangleleft}} \right]_p^lh_p^l} \rangle}	\\
	&-\langle h_m^k | \sum\nolimits_n{\left[ {c^{\triangleleft}}+S{c^{\triangleleft}} \right]_n^kh_n^k} \rangle	\\
	=& \sum\nolimits_n^l{N_{mn}^{kl}\left[ -iK{c^{\triangleleft}}+iKS{c^{\triangleleft}} \right]_n^l}-\left[ {c^{\triangleleft}}+S{c^{\triangleleft}} \right]_m^k\quad \Rightarrow	\\
	O =& N\left[ -iK{c^{\triangleleft}}+iKS{c^{\triangleleft}} \right]-{c^{\triangleleft}}-S{c^{\triangleleft}}
\end{aligned}$

\noindent
$\begin{aligned}
	\quad = -\left[ NiK+I \right]{c^{\triangleleft}}+\left[ NiK-I \right]S{c^{\triangleleft}} \quad \Rightarrow \quad (\ref{Eq063})
\end{aligned}$ 	
\end{proof}

\section{Network combining formula for series networks}	\label{sec:D}

In section \ref{sec:2_4_3}, we indicated that in the particular case, NCF (\ref{Eq092}) gives a result that coincides with what was previously obtained in the literature. In Appendix \ref{sec:D}, we will prove this statement.

\subsection{Network of arbitrary dimension}	\label{sec:D_1}

Let us consider sections of a quantum network consisting of two internal junctions connected by one branch. We assume the geometry of the junctions and branches arbitrary in the framework of the problem solved in section \ref{sec:2}. Without loss of generality, for simplicity, we consider the case where the LCFs at the ends of each internal branch are related to each other by reflection: ${U^{[\mathbb{J},\mathbb{K}]\mathbb{K}\mathbb{K}}} = {I^{\mathbb{K}\mathbb{K}}} = {U^{[\mathbb{J},\mathbb{K}]\mathbb{K}\mathbb{K}}}$.

For the closed-end section of network $\mathcal{N} = \{\{1,2\},\{2\}\}$. NCF (\ref{Eq092}) takes the form:
\begin{equation}	\label{Eq226}
	{S^{[1]}} = {S^{[1,2]}}\circledast {S^{[2]}}
\end{equation}
from formula (\ref{Eq226}), it follows that a tuple of external branch identifiers $\mathbb{E} = \{1\}$. Expanding expression (\ref{Eq226}) by formula (\ref{Eq071}) taking into account $\mathbb{J} = \{1\}$, $\mathbb{K} = \{2\}$, $\mathbb{L} = \varnothing $, we have
\begin{equation}	\label{Eq227}
	\begin{aligned}
		{S^{[1]}} =& {S^{[1,2]11}}+
		\left[ \begin{matrix}
		{S^{[1,2]12}} & {O^{12}} \\
		\end{matrix} \right]	\\
		&\times \left[ \begin{matrix}
			-{S^{[1,2]22}} &\!\! \exp \left( -i{K^{22}}{A^{22}} \right) \\
			\exp \left( -i{K^{22}}{A^{22}} \right) &\!\!  -{S^{[2]22}} \\
		\end{matrix} \right]^{-1}
		\left[ \begin{matrix}
			{S^{[1,2]21}} \\
			{O^{21}} \\
		\end{matrix} \right]
	\end{aligned}
\end{equation}

For the open-end section of network $\mathcal{N} = \{\{1,2\},\{2,3\}\}$. NCF (\ref{Eq092}) takes the form:
\begin{equation}	\label{Eq228}
	{S^{[1,3]}} = {S^{[1,2]}}\circledast {S^{[2,3]}}
\end{equation}
from the formula (\ref{Eq228}), it follows that a tuple of external branch identifiers $\mathbb{E} = \{1,3\}$. Expanding expression (\ref{Eq228}) by formula (\ref{Eq071}) taking into account $\mathbb{J} = \{1\}$, $\mathbb{K} = \{2\}$, $\mathbb{L} = \{3\}$, we have
\begin{equation}	\label{Eq229}
	\begin{aligned}
		{S^{[1,3]}} =&
		\left[ \begin{matrix}
			{S^{[1,2]11}} & {O^{13}} \\
			{O^{31}} & {S^{[2,3]33}} \\
		\end{matrix} \right]+
		\left[ \begin{matrix}
			{S^{[1,2]12}} & {O^{12}} \\
			{O^{32}} & {S^{[2,3]32}} \\
		\end{matrix} \right]	\\
		&\times  \left[ \begin{matrix}
			-{S^{[1,2]22}} & \exp \left( -i{K^{22}}{A^{22}} \right) \\
			\exp \left( -i{K^{22}}{A^{22}} \right) & -{S^{[2,3]22}} \\
		\end{matrix} \right]^{-1}	\\
		&\times \left[ \begin{matrix}
			{S^{[1,2]21}} & {O^{23}} \\
			{O^{21}} & {S^{[2,3]23}} \\
		\end{matrix} \right]
	\end{aligned}
\end{equation}

\subsection{One-dimensional network}	\label{sec:D_2}

One of the varieties of quantum networks is a one-dimensional network. It is a sequence of compact potentials and consists of closed-end and/or open-end sections. We write their S-matrices (\ref{Eq227}) and (\ref{Eq229}) using type identifiers (section \ref{sec:2_1_2}) based on junction numbering. For closed-end and open-end sections, we introduce the symbols:
\begin{equation}	\label{Eq230}
	{S^{\langle 3 \rangle}} := {S^{[1]}},\hspace{16pt} {S^{\langle 1 \rangle}} := {S^{[1,2]}},\quad {S^{\langle 2 \rangle}} := {S^{[2]}}
\end{equation}
\begin{equation}	\label{Eq231}
	{S^{\langle 3 \rangle}} := {S^{[1,3]}},\quad {S^{\langle 1 \rangle}} := {S^{[1,2]}},\quad {S^{\langle 2 \rangle}} := {S^{[2,3]}}
\end{equation}
respectively. In both cases, we will shorten notation: ${{\kappa}^2} = \sqrt{\varepsilon} = :\kappa $, ${a^2} = :a$. Since the values in definitions (\ref{Eq230}) and (\ref{Eq231}) are numbers, they commute among themselves. Therefore, formulas (\ref{Eq227}) and (\ref{Eq229}) taking into account definitions (\ref{Eq073}) are simplified:
\begin{equation}	\label{Eq232}
	{S^{\langle 3 \rangle}} = {S^{\langle 1 \rangle 11}}+\frac{{S^{\langle 1 \rangle 12}}{S^{\langle 2 \rangle 22}}{S^{\langle 1 \rangle 21}}{e^{i2\kappa a}}}{1-{S^{\langle 1 \rangle 22}}{S^{\langle 2 \rangle 22}}{e^{i2\kappa a}}}
\end{equation}
\begin{strip}
\begin{equation}	\label{Eq233}
	{S^{\langle 3 \rangle}} = \left[ \displaystyle\begin{matrix}
	{S^{\langle 1 \rangle 11}}+\displaystyle{
\frac{{S^{\langle 1 \rangle 12}}{S^{\langle 2 \rangle 22}}{S^{\langle 1 \rangle 21}}{e^{i2\kappa a}}}{1-{S^{\langle 1 \rangle 22}}{S^{\langle 2 \rangle 22}}{e^{i2\kappa a}}}} & \displaystyle{
\frac{{S^{\langle 1 \rangle 12}}{S^{\langle 2 \rangle 23}}{e^{i\kappa a}}}{1-{S^{\langle 1 \rangle 22}}{S^{\langle 2 \rangle 22}}{e^{i2\kappa a}}}} \\
		\displaystyle{
\frac{{S^{\langle 2 \rangle 32}}{S^{\langle 1 \rangle 21}}{e^{i\kappa a}}}{1-{S^{\langle 1 \rangle 22}}{S^{\langle 2 \rangle 22}}{e^{i2\kappa a}}}} & {S^{\langle 2 \rangle 33}}+\displaystyle{
\frac{{S^{\langle 2 \rangle 32}}{S^{\langle 1 \rangle 22}}{S^{\langle 2 \rangle 23}}{e^{i2\kappa a}}}{1-{S^{\langle 1 \rangle 22}}{S^{\langle 2 \rangle 22}}{e^{i2\kappa a}}}} \\
		\end{matrix} \right]
\end{equation}
\end{strip}
respectively. For a “branchless” network (\ref{Eq090}), taking into account the specifics of the notation, the expression (\ref{Eq233}) coincides with the expression obtained earlier \cite[p.~49]{Bib027}.

\section{Landauer--B\"uttiker formalism}	\label{sec:E}

\begin{equation}	\label{Eq234}
	\{\ {{\square}^{[\mathbb{E}]}}\ \mapsto \ \square \ |\ \square \ = J,P,...\}\quad \operatorname{in}\ \text{\ref{sec:E}}
\end{equation}

In section \ref{sec:3_2}, we write the Landauer--B\"uttiker formalism in framework of the notation system used (section \ref{sec:2_1_2}). In Appendix \ref{sec:E} we will give the mathematical justification for it.

\subsection{Expression for currents}	\label{sec:E_1}

\begin{proof}[\textsc{Proof~\ref{prop:9}}]
 (\ref{Eq120}), (\ref{Eq121}), (\ref{Eq122}), (\ref{Eq123}) $\quad \Rightarrow \quad$
\begin{equation}	\label{Eq235}
	{J^k} = \int_{-\infty}^{+\infty}{dE{j^k}\left( E \right)}
\end{equation}
\begin{equation}	\label{Eq236}
	{j^k} := \sum\nolimits_m{j_m^k},\quad j_m^k := j_m^{\triangleleft k}+j_m^{\triangleright k}
\end{equation}
(\ref{Eq125}) $\quad \Rightarrow \quad$
\begin{equation}	\label{Eq237}
	P_{mn}^{kl}\left( E \right) = [E_{\bot m}^k<E]P_{mn}^{kl}\left( E \right)[E>E_{\bot n}^l]
\end{equation}
(\ref{Eq122}), (\ref{Eq123}), (\ref{Eq235}), (\ref{Eq236}), (\ref{Eq237}) $\quad \Rightarrow \quad$

\renewcommand{\qedsymbol}{}
\begin{strip}
$\left. \begin{aligned}
	J^k =& \sum\nolimits_m{\int_{-\infty}^{+\infty}{dE\left\{ -\frac{e}{\pi \hbar}{f^k}\left( E \right)[E_{\bot m}^k<E]+[E_{\bot m}^k<E]\sum\nolimits_n^l{P_{mn}^{kl}\left( E \right)\frac{e}{\pi \hbar}{f^l}\left( E \right)[E_{\bot n}^l<E]} \right\}}}	\\
	=& \frac{e}{\pi \hbar}\sum\nolimits_m{\int_{-\infty}^{+\infty}{dE[E_{\bot m}^k<E]\left\{ -{f^k}\left( E \right) +\sum\nolimits_n^l{[E_{\bot n}^l<E]P_{mn}^{kl}\left( E \right){f^l}\left( E \right)}\right\}}} \\
	&\{...\} = -{f^k}\left( E \right)\sum\nolimits_n^{}{[E_{\bot n}^k<E]I_{mn}^{kk}}	
	+ \sum\nolimits_n^{}{[E_{\bot n}^k<E]P_{mn}^{kk}\left( E \right){f^k}\left( E \right)}
	+ \sum\nolimits_n^{l\ne k}{[E_{\bot n}^l<E]P_{mn}^{kl}\left( E \right){f^l}\left( E \right)} \quad \\
	&\quad \ \ \ = \sum\nolimits_n^{l\ne k}{[E_{\bot n}^l<E]P_{mn}^{kl}\left( E \right){f^l}\left( E \right)} - {f^k}\left( E \right)\sum\nolimits_n^{}{[E_{\bot n}^k<E]\left[ I_{mn}^{kk}-P_{mn}^{kk}\left( E \right) \right]}
\end{aligned} \right\}\Rightarrow $
\begin{equation}	\label{Eq238}
	\begin{aligned}
		{J^k} =& \frac{e}{\pi \hbar}\sum\nolimits_m{\int_{-\infty}^{+\infty}{dE[E_{\bot m}^k<E]\left\{ \sum\nolimits_n^{l\ne k}{[E_{\bot n}^l<E]P_{mn}^{kl}\left( E \right){f^l}\left( E \right)}
		-{f^k}\left( E \right)\sum\nolimits_n^{}{[E_{\bot n}^k<E]\left[ I_{mn}^{kk}-P_{mn}^{kk}\left( E \right) \right]} \right\}}} \\
	\end{aligned}
\end{equation}
(\ref{Eq238}), (\ref{Eq118}) $\quad \Rightarrow \quad$

$[_m^k\in \mathbb{O}]\sum\nolimits_n^l{[_n^l\in \mathbb{O}]C_{mn}^{kl}\bar{C}_{nm}^{lk}} = [_m^k\in \mathbb{O}]I_{mm}^{kk}$; (\ref{Eq125}) $\quad \Rightarrow \quad$

$\begin{aligned}
	&[_m^k,_q^p\in \mathbb{O}]\sum\nolimits_n^l{[_n^l\in \mathbb{O}]C_{mn}^{kl}\bar{C}_{nq}^{lp}} = [_m^k,_q^p\in \mathbb{O}]I_{mq}^{kp};\quad \sphericalangle \quad _m^k = _q^p \quad \Rightarrow \quad [_m^k\in \mathbb{O}]\sum\nolimits_n^l{[_n^l\in \mathbb{O}]C_{mn}^{kl}\bar{C}_{nm}^{lk}} = [_m^k\in \mathbb{O}]I_{mm}^{kk}; \quad (\ref{Eq125}) \quad \Rightarrow	\\
	&[_m^k\in \mathbb{O}]I_{mm}^{kk} = [_m^k\in \mathbb{O}]\sum\nolimits_n^l{[_n^l\in \mathbb{O}]P_{mn}^{kl}} = [_m^k\in \mathbb{O}]\sum\nolimits_n^l{\left( [_n^l = _n^k]+[_n^l\ne _n^k] \right)[_n^l\in \mathbb{O}]P_{mn}^{kl}}	\\
	&\hspace{152pt} = [_m^k\in \mathbb{O}]\sum\nolimits_n^{}{[_n^k\in \mathbb{O}]P_{mn}^{kk}}+[_m^k\in \mathbb{O}]\sum\nolimits_n^{l\ne k}{[_n^l\in \mathbb{O}]P_{mn}^{kl}}\quad \Rightarrow \\
	 &[_m^k\in \mathbb{O}]\sum\nolimits_n^{l\ne k}{[_n^l\in \mathbb{O}]P_{mn}^{kl}} = [_m^k\in \mathbb{O}]\left\{ I_{mm}^{kk}-\sum\nolimits_n^{}{[_n^k\in \mathbb{O}]P_{mn}^{kk}} \right\}
	 = [_m^k\in \mathbb{O}]\left\{ \sum\nolimits_n^{}{[_n^k\in \mathbb{O}]I_{mn}^{kk}}-\sum\nolimits_n^{}{[_n^k\in \mathbb{O}]P_{mn}^{kk}} \right\}\quad \Rightarrow \\
	 &[_m^k\in \mathbb{O}]\sum\nolimits_n^{l\ne k}{[_n^l\in \mathbb{O}]P_{mn}^{kl}} = [_m^k\in \mathbb{O}]\sum\nolimits_n^{}{[_n^k\in \mathbb{O}]\left\{ I_{mn}^{kk}-P_{mn}^{kk} \right\}}\quad \Leftrightarrow
\end{aligned}$
\begin{equation}	\label{Eq239}
	[E_{\bot m}^k<E]\sum\nolimits_n^{l\ne k}{[E_{\bot n}^l<E]P_{mn}^{kl}\left( E \right)} = [E_{\bot m}^k<E]\sum\nolimits_n^{}{[E_{\bot n}^k<E]\left\{ I_{mn}^{kk}-P_{mn}^{kk}\left( E \right) \right\}}
\end{equation}

(\ref{Eq238}), (\ref{Eq239}) $\quad \Rightarrow \quad$

$\displaystyle \left\{... \right\} = \sum\nolimits_n^{l\ne k}{[E_{\bot n}^l<E]P_{mn}^{kl}\left( E \right){f^l}\left( E \right)}-{f^k}\left( E \right)\sum\nolimits_n^{l\ne k}{[E_{\bot n}^l<E]P_{mn}^{kl}\left( E \right)}\quad \Rightarrow $

$\displaystyle \left\{... \right\} = \sum\nolimits_n^{l\ne k}{[E_{\bot n}^l<E]P_{mn}^{kl}\left( E \right)\left\{ {f^l}\left( E \right)-{f^k}\left( E \right) \right\}}; \quad (\ref{Eq238}) \quad \Rightarrow $

$\displaystyle {J^k} = \frac{e}{\pi \hbar}\sum\nolimits_{mn}^l{\int_{-\infty}^{+\infty}{dE[E_{\bot m}^k<E][E_{\bot n}^l<E]P_{mn}^{kl}\left( E \right)\left\{ {f^l}\left( E \right)-{f^k}\left( E \right) \right\}}}; \quad (\ref{Eq237}) \quad \Rightarrow \quad (\ref{Eq135})$	~\hfill$\blacksquare$
\end{strip}	
\end{proof}

\vspace{-22pt}
Formula (\ref{Eq135}) in case of two-terminal structure coincides with the known result \cite[(49)]{Bib018}.

Higher in sums and sets with a index $l\ne k$, convolution is only on the first character: $l$ is dummy index, $k$ is free index. For example
\begin{equation}	\label{Eq240}
	\sum\nolimits_{}^{l\ne k}{\left\{... \right\}} = \sum\nolimits_{}^l{[l\ne k]\left\{... \right\}}\ne \sum\nolimits_{}^{kl}{[l\ne k]\left\{... \right\}}
\end{equation}

\newpage
\subsection{Low bias voltages}	\label{sec:E_2}

\begin{proof}[\textsc{Proof~\ref{prop:11}}]
 (\ref{Eq143}), (\ref{Eq133}), (\ref{Eq013}) $\quad \Rightarrow \quad$
\begin{equation}	\label{Eq241}
	{\{\varepsilon _{\text{F}}^k\to {{\varepsilon}_{\text{F}}}\}}^{k\in \mathbb{E}}
\end{equation}
\begin{equation}	\label{Eq242}
	f\left( \varepsilon,\varepsilon _{\text{F}}^k \right) := {F_{-1}}\left( [\varepsilon _{\text{F}}^k-\varepsilon ]/{{\mu}^k} \right)
\end{equation}

\newpage
\begin{strip}
$\left. \begin{aligned}
	&(\ref{Eq241}) &\Rightarrow& \quad f\left( \varepsilon,\varepsilon _{\text{F}}^k \right)\to f\left( \varepsilon,{{\varepsilon}_{\text{F}}} \right)+\left( \varepsilon _{\text{F}}^k-{{\varepsilon}_{\text{F}}} \right){{\partial}_2}f\left( \varepsilon,{{\varepsilon}_{\text{F}}} \right)	\\
	&(\ref{Eq242}), (\ref{Eq144}) &\Rightarrow& \quad {{\partial}_2}f\left( \varepsilon,{{\varepsilon}_{\text{F}}} \right) = {{\dot{F}}_{-1}}\left( [{{\varepsilon}_{\text{F}}}-\varepsilon ]/{{\mu}^= } \right)/{{\mu}^= } \quad \Rightarrow \\
	&(\ref{Eq138}) &\Rightarrow& \quad {{\dot{F}}_{-1}}\left( \eta \right) = {{\partial}_{\eta}}{{(1+{e^{-\eta}})}^{-1}} = {e^{-\eta}}{{(1+{e^{-\eta}})}^{-2}} = {{({e^{+\eta /2}}+{e^{-\eta /2}})}^{-2}} = \tfrac{1}{4}{{\cosh}^{-2}}(\eta /2) \quad
\end{aligned}\right\}\Rightarrow$
\begin{equation}	\label{Eq243}
	f\left( \varepsilon,\varepsilon _{\text{F}}^l \right)-f\left( \varepsilon,\varepsilon _{\text{F}}^k \right)\to \left( \varepsilon _{\text{F}}^l-\varepsilon _{\text{F}}^k \right)\frac{1}{4{{\mu}^= }}{{\cosh}^{-2}}\left( [{{\varepsilon}_{\text{F}}}-\varepsilon ]/[2{{\mu}^= }] \right)
\end{equation}
(\ref{Eq137}), (\ref{Eq242}), (\ref{Eq243}) $\quad \Rightarrow \quad$
\begin{equation}	\label{Eq244}
	{{{\rm I}}^k}\to \sum\nolimits_{mn}^l{\left( \varepsilon _{\text{F}}^l-\varepsilon _{\text{F}}^k \right)\frac{1}{4{{\mu}^= }}\int_{-\infty}^{+\infty}{d\varepsilon [\lambda _m^k<\varepsilon ]{{| C_{mn}^{kl}\left( \varepsilon \right) |}^2}[\varepsilon >\lambda _n^l]{{\cosh}^{-2}}\left( [{{\varepsilon}_{\text{F}}}-\varepsilon ]/[2{{\mu}^= }] \right)}}
\end{equation}
\end{strip}
\noindent
(\ref{Eq139}), (\ref{Eq244}), (\ref{Eq013}), (\ref{Eq147}), (\ref{Eq146}) $\quad \Rightarrow \quad$ (\ref{Eq145})	
\end{proof}

\subsection{Low temperatures}	\label{sec:E_3}

\newpage
\begin{proof}[\textsc{Proof~\ref{prop:10}}]
 (\ref{Eq124}), (\ref{Eq141}) $\quad \Rightarrow \quad$
\begin{equation}	\label{Eq245}
	{f^k}\left( E \right)\to [E<E_{\text{F}}^k]
\end{equation}
(\ref{Eq135}), (\ref{Eq245}) $\quad \Rightarrow \quad$

\renewcommand{\qedsymbol}{}
\begin{strip}
$\left. \begin{aligned}
	&{J^k}\to \frac{e}{\pi \hbar}\sum\nolimits_{mn}^l{\int_{-\infty}^{+\infty}{dEP_{mn}^{kl}\left( E \right)\left\{ [E<E_{\text{F}}^l]-[E<E_{\text{F}}^k] \right\}}} \\
	&\quad [E<E_{\text{F}}^l]-[E<E_{\text{F}}^k] = \left\{ \begin{aligned}
		0  \quad &\Leftarrow \quad   E<E_{\text{F}}^l, \quad E<E_{\text{F}}^k  {}  {} \\
		+1  \quad &\Leftarrow \quad   E<E_{\text{F}}^l, \quad E\ge E_{\text{F}}^k \quad  \Rightarrow \quad  E_{\text{F}}^k<E_{\text{F}}^l \\
		-1  \quad &\Leftarrow \quad   E\ge E_{\text{F}}^l, \quad  E<E_{\text{F}}^k \quad  \Rightarrow \quad E_{\text{F}}^k>E_{\text{F}}^l \\
		0  \quad &\Leftarrow \quad   E\ge E_{\text{F}}^l, \quad E\ge E_{\text{F}}^k  {}  {}
	\end{aligned} \right.\quad \Rightarrow \quad \\
	&\quad [E<E_{\text{F}}^l]-[E<E_{\text{F}}^k] = \operatorname{sgn} \left( E_{\text{F}}^l-E_{\text{F}}^k \right)[\min (E_{\text{F}}^k,E_{\text{F}}^l)\le E<\max (E_{\text{F}}^k,E_{\text{F}}^l]
\end{aligned} \right\}\Rightarrow \quad (\ref{Eq142}) $	~\hfill$\blacksquare$
\end{strip}
\end{proof}

\begin{proof}[\textsc{Proof~\ref{prop:12}}]
 (\ref{Eq141}), (\ref{Eq140}), (\ref{Eq144}) $\quad \Rightarrow \quad$ ${{\mu}^= }\to 0$ $\quad \Rightarrow \quad$
\begin{equation}	\label{Eq246}
	\frac{1}{4{{\mu}^= }}{{\cosh}^{-2}}\left( [{{\varepsilon}_{\text{F}}}-\varepsilon ]/[2{{\mu}^= }] \right)\to \delta \left( {{\varepsilon}_{\text{F}}}-\varepsilon \right)
\end{equation}
(\ref{Eq146}), (\ref{Eq246}) $\quad \Rightarrow \quad$ (\ref{Eq148})	
\end{proof}

\pdfbookmark[1]{Acknowledgments}{Acknowledgments}

\section*{Acknowledgments}

The author is grateful to A. M. Yafyasov for scientific advising that contributed to the writing of this paper. The author acknowledges Saint-Petersburg State University for a research grant 73031758.

\pdfbookmark[1]{References}{References}
\bibliographystyle{spmpsci+}
\bibliography{Article}

\begin{thebibliography}{10}
\providecommand{\url}[1]{{#1}}
\providecommand{\urlprefix}{URL }
\expandafter\ifx\csname urlstyle\endcsname\relax
  \providecommand{\doi}[1]{DOI~\discretionary{}{}{}#1}\else
  \providecommand{\doi}{DOI~\discretionary{}{}{}\begingroup
  \urlstyle{rm}\Url}\fi

\bibitem{Bib001}
Xu, H.Q.: Electrical properties of three-terminal ballistic junctions.
\newblock Appl.\ Phys.\ Lett. \textbf{78}(14), 2064--2066 (2001)

\bibitem{Bib002}
Shorubalko, I., Xu, H.Q., Maximov, I., Omling, P., Samuelson, L., Seifert, W.:
  Nonlinear operation of {G}a{I}n{A}s/{I}n{P}-based three-terminal ballistic
  junctions.
\newblock Appl.\ Phys.\ Lett. \textbf{79}(9), 1384--1386 (2001)

\bibitem{Bib003}
Xu, H.Q., Shorubalko, I., Maximov, I., Seifert, W., Omling, P., Samuelson, L.:
  A novel device principle for nanoelectronics.
\newblock Mater.\ Sci.\ Eng.\ C \textbf{19}, 417--420 (2002)

\bibitem{Bib004}
Csontos, D., Xu, H.Q.: Quantum effects in the transport properties of
  nanoelectronic three-terminal {Y}-junction devices.
\newblock Phys.\ Rev.\ B \textbf{67}(23), 235322--1--235322--10 (2003)

\bibitem{Bib005}
Wallin, D., Shorubalko, I., Xu, H.Q., Cappy, A.: Nonlinear electrical
  properties of three-terminal junctions.
\newblock Appl.\ Phys.\ Lett. \textbf{89}(9), 092124--1--092124--3 (2006)

\bibitem{Bib006}
Irie, H., Diduck, Q., Margala, M., Sobolewski, R., Feldman, M.J.: Nonlinear
  characteristics of {T}-branch junctions: Transition from ballistic to
  diffusive regime.
\newblock Appl.\ Phys.\ Lett. \textbf{93}(5), 053502--1--053502--3 (2008)

\bibitem{Bib007}
Meng, F., Sun, J., Graczyk, M., Zhang, K., Prunnila, M., Ahopelto, J., Shi, P.,
  Chu, J., Maximov, I., Xu, H.Q.: Nonlinear electrical properties of {S}i
  three-terminal junction devices.
\newblock Appl.\ Phys.\ Lett. \textbf{97}(24), 242106--1--242106--3 (2010)

\bibitem{Bib008}
Palm, T., Thyl\'en, L., Nilsson, O., Svensson, C.: Quantum interference devices
  and field-effect transistors: A switch energy comparison.
\newblock J.\ Appl.\ Phys. \textbf{74}(1), 687--694 (1993)

\bibitem{Bib009}
Forsberg, E., Wesstr\"om, J.O.J.: Self-consistent simulations of mesoscopic
  devices operating under a finite bias.
\newblock Solid-State\ Electron. \textbf{48}(7), 1147--1154 (2004)

\bibitem{Bib010}
Jones, G.M., Yang, C.H.: Quantum steering of electron wave function in an
  {I}n{A}s {Y}-branch switch.
\newblock Appl.\ Phys.\ Lett. \textbf{86}(7), 0731173--1--073117--3 (2005)

\bibitem{Bib011}
Hartmann, D., Worschech, L., Hofling, S., Forchel, A., Reithmaier, J.P.:
  Self-gating in an electron {Y}-branch switch at room temperature.
\newblock Appl.\ Phys.\ Lett. \textbf{89}(12), 122109--1--122109--3 (2006)

\bibitem{Bib012}
Forsberg, E.: Reversible logic based on electron waveguide {Y}-branch switches.
\newblock Nanotechnology \textbf{15}(4), S298--S302 (2004)

\bibitem{Bib013}
Xu, H.Q., Shorubalko, I., Wallin, D., Maximov, I., Omling, P., Samuelson, L.,
  Seifert, W.: Novel nanoelectronic triodes and logic devices with {TBJ}s.
\newblock IEEE Electron Device Lett. \textbf{25}(4), 164--166 (2004)

\bibitem{Bib014}
Wallin, D., Xu, H.Q.: Electrical properties and logic function of multibranch
  junction structures.
\newblock Appl.\ Phys.\ Lett. \textbf{86}(25), 253510--1--253510--3 (2005)

\bibitem{Bib015}
Mikhailova, A.B., Pavlov, B.S., Prokhorov, L.V.: Intermediate hamiltonian via
  {G}lazman's splitting and analytic perturbation for meromorphic
  matrix-functions.
\newblock Math.\ Nachr. \textbf{280}(12), 1376--1416 (2007)

\bibitem{Bib016}
Pavlov, B.: A solvable model for scattering on a junction and modified
  analytical perturbation procedure.
\newblock Operator Theory.\ Advanced and Applications \textbf{197}, 281--336
  (2009)

\bibitem{Bib017}
Datta, S.: Electronic transport in mesoscopic systems.
\newblock Cambridge University Press, Cambridge (1999)

\bibitem{Bib018}
Lesovik, G.B., Sadovskyy, I.A.: Scattering matrix approach to the description
  of quantum electron transport.
\newblock Physics-Uspekhi \textbf{54}(10), 1007--1059 (2011)

\bibitem{Bib019}
Berthod, C., Gagel, F., Maschke, K.: Dc transport in perturbed multichannel
  quantum wires.
\newblock Phys.\ Rev.\ B \textbf{50}(24), 18299--18311 (1994)

\bibitem{Bib020}
Mizuta, H.: Three-dimensional scattering matrix simulation of resonant
  tunnelling via quasi-bound states in vertical quantum dots.
\newblock Microelectron.\ J. \textbf{30}(10), 1007--1017 (1999)

\bibitem{Bib021}
Heinz, F.O., Schenk, A.: Self-consistent modeling of longitudinal quantum
  effects in nanoscale double-gate metal oxide semiconductor field effect
  transistors.
\newblock J.\ Appl.\ Phys. \textbf{100}(8), 084314--1--084314--8 (2006)

\bibitem{Bib022}
Adamyan, A., Pavlov, B., Yafyasov, A.: Modified {K}rein formula and analytic
  perturbation procedure for scattering on arbitrary junction.
\newblock Operator Theory: Advances and Applications \textbf{190}, 3--26 (2009)

\bibitem{Bib023}
Bagraev, N.T., Mikhailova, A.B., Pavlov, B.S., Prokhorov, L.V., Yafyasov, A.M.:
  Parameter regime of the resonance quantum switch.
\newblock Phys.\ Rev. \textbf{71}, 165308--1--165308--16 (2005)

\bibitem{Bib024}
Wulf, U., Kucera, J., Racec, P.N., Sigmund, E.: Transport through quantum
  systems in the {R}-matrix formalism.
\newblock Phys.\ Rev.\ B \textbf{58}(24), 16209--16220 (1998)

\bibitem{Bib025}
Racec, P.N., Racec, E.R., Neidhardt, H.: R-matrix Formalism for Electron
  Scattering in Two Dimensions with Applications to Nanostructures with Quantum
  Dots, pp. 149--174.
\newblock Trends in Nanophysics, Engineering Materials. Springer-Verlag, Berlin
  Heidelberg (2010)

\bibitem{Bib026}
Smr{\v c}ka, L.: R-matrix and the coherent transport in mesoscopic systems.
\newblock Superlattices and Microstructures \textbf{8}(2), 221--224 (1990)

\bibitem{Bib027}
Mello, P.A., Kumar, N.: Quantum Transport in Mesoscopic Systems: Complexity and
  Statistical Fluctuations.
\newblock Oxford University Press, Oxford (2004)

\bibitem{Bib028}
Fraenkel, A.A., Bar-Hillel, Y., Levy, A.: Foundations of set theory,
  \emph{Elsevier Studies in Logic}, vol.~67, 2nd, revised edn. (1973)

\bibitem{Bib029}
Graham, R.L., Knuth, D.E., Patashnik, O.: Concrete mathematics: a foundation
  for computer science, 2nd edn.
\newblock Addison-Wesley, New York (1994)

\bibitem{Bib030}
Levinshtein, M., Rumyantsev, S., Shur, M.: Handbook series on Semiconductor
  Parameters, vol.~2.
\newblock World Scientific, Singapore--New Jersey--London--Hong Kong (1999)

\bibitem{Bib031}
Sett, S.B., Bose, C.: Field emission from finite barrier quantum structures.
\newblock Phys.\ B \textbf{450}, 162--166 (2014)

\bibitem{Bib032}
Hatano, N.: Distribution of resonant eigenvalues of quantum potential
  scattering.
\newblock e-print arXiv:0909.2463v1 [quant-ph] (2009)

\end{thebibliography}

\end{document}